\documentclass[a4paper,12pt]{article}

\usepackage{color}
\usepackage{amssymb, amsmath}

\usepackage{tikz}

\setlength{\topmargin}{-1.5cm}
\setlength{\textheight}{25cm} 
\setlength{\textwidth}{16cm}    
\setlength{\oddsidemargin}{0cm} 
\setlength{\evensidemargin}{0cm} 

\newcommand\EFFACE[1]{}

\newcommand\ESOK[1]{#1}

\newcommand\OR[1]{\overrightarrow{#1}}

\newtheorem{theorem}{Theorem}
\newtheorem{proposition}[theorem]{Proposition}

\newenvironment{proof}{
\par
\noindent {\bf Proof.}\rm}{\mbox{}\hfill$\square$\par\vskip 3mm}

\newcommand\pcn{\chi_{\rho}}

\newcommand\SOMMET[1]{\draw[fill=black] (#1) circle (2pt)}
\newcommand\ETIQUETTE[2]{\node[below] at (#1) {#2}}
\newcommand\ARETEH[2]{\draw[very thick] (#1) -- ++(#2,0)}
\newcommand\ARETEV[1]{\draw[very thick] (#1) -- ++(0,1)}
\newcommand\ARCN[1]{\draw[very thick,->] (#1) -- ++(0,0.6)}
\newcommand\ARCS[1]{\draw[very thick,->] (#1) -- ++(0,-0.6)}
\newcommand\ARCO[1]{\draw[very thick,->] (#1) -- ++(0.6,0)}
\newcommand\ARCE[1]{\draw[very thick,->] (#1) -- ++(-0.6,0)}

\def\LL{\ \longleftarrow\ }
\def\RR{\ \longrightarrow\ }

\makeatletter
\let\@fnsymbol\@arabic
\makeatother

\begin{document}


\title{\ESOK{Packing coloring of some undirected and oriented coronae graphs}}

\author{Daouya LA\"{I}CHE~\thanks{Faculty of Mathematics, Laboratory L'IFORCE, University of Sciences and Technology
Houari Boumediene (USTHB), B.P.~32 El-Alia, Bab-Ezzouar, 16111 Algiers, Algeria.}
\and Isma BOUCHEMAKH~\footnotemark[1]
\and \'Eric SOPENA~\thanks{Univ. Bordeaux, LaBRI, UMR5800, F-33400 Talence, France.}~$^,$\thanks{CNRS, LaBRI, UMR5800, F-33400 Talence, France.}~$^,$\footnote{Corresponding author. Eric.Sopena@labri.fr}
}


%

\maketitle

\abstract{
The packing chromatic number $\pcn(G)$ of a graph $G$ is the smallest integer $k$ such that
its set of vertices $V(G)$ can be partitioned into $k$ disjoint subsets $V_1$, \ldots, $V_k$, in such a way that every two distinct vertices
in $V_i$ are at distance greater than $i$ in $G$ for every $i$, $1\le i\le k$.
\ESOK{For a given integer $p\ge 1$, the generalized corona $G\odot pK_1$ of a graph $G$ is the graph obtained from $G$ by adding $p$ degree-one neighbors to every vertex  of $G$}.
\ESOK{In this paper, we} determine the packing chromatic number of \ESOK{generalized coronae} of paths and cycles.

\ESOK{Moreover, by considering digraphs 
and the (weak) directed distance
between vertices, we get a natural extension of the notion of packing coloring to digraphs.
We then determine the packing chromatic number of orientations of generalized coronae of paths and cycles.}
}

\medskip

\noindent
{\bf Keywords:} Packing coloring; \ESOK{Packing chromatic number; Corona graph; Path; Cycle}.

\noindent
{\bf MSC 2010:} 05C15, 05C70, 05C05.

\section{Introduction}

All the graphs we considered are \ESOK{simple and loopless}.
\ESOK{For an undirected graph $G$}, we denote by $V(G)$ its set of vertices and by $E(G)$ its set of edges.
The {\em distance} $d_G(u,v)$, or simply $d(u,v)$, between
vertices $u$ and $v$ in $G$ is the length \ESOK{(number of edges)} of a shortest path joining $u$ and $v$.
The {\em diameter} 
of $G$ is the maximum distance between two vertices of $G$.
We denote by $P_n$ the path of order $n$ and by $C_n$, $n\ge 3$, the cycle of order $n$.

A {\em packing $k$-coloring} of $G$ is a mapping $\pi:V(G)\rightarrow\{1,\ldots,k\}$
such that, for every two distinct vertices $u$ and $v$,  \ESOK{$\pi(u)=\pi(v)=i$ implies $d(u,v)>i$}.
The {\em packing chromatic number} $\pcn(G)$ of $G$ is then the smallest $k$ such that
$G$ admits a packing $k$-coloring.
In other words, $\pcn(G)$ is the smallest integer $k$ such that
$V(G)$ can be partitioned into $k$ disjoint subsets $V_1$, \ldots, $V_k$, in such a way that every two vertices
in $V_i$ are at distance greater than $i$ in $G$ for every $i$, $1\le i\le k$.
\ESOK{A packing coloring of $G$ is {\em optimal} if it uses exactly $\pcn(G)$ colors.}

Packing coloring has been introduced by Goddard, Hedetniemi, Hedetniemi, Harris and Rall~\cite{GHHHR03,GHHHR08} under the name {\em broadcast coloring}
and has been studied by several authors in recent years.
Several papers deal with the packing chromatic number of certain classes of graphs such
as trees~\cite{ANT14,BKR07,GHHHR08,RBFK08,S04},
lattices~\cite{BKR07,EFHL10,FKL09,FR10,KV14,SH10},
Cartesian products~\cite{BKR07,FKL09,RBFK08},
distance graphs~\cite{EHL12,EHT14,T14} or hypercubes~\cite{GHHHR08,TVP15,WRR14b}.
Complexity issues of the packing coloring problem were adressed
in~\cite{A11,ANT12,ANT14,FG10,G15,GHHHR08}.

The following proposition, which states that having packing chromatic number at most $k$
is a hereditary property, will be useful in the sequel:

\begin{proposition}[Goddard, Hedetniemi, Hedetniemi, Harris and Rall~\cite{GHHHR08}]
If $H$ is a subgraph of $G$, then $\pcn(H)\le\pcn(G)$.
\label{prop:subgraph}
\end{proposition}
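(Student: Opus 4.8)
The plan is to take an optimal packing coloring of $G$ and simply restrict it to the vertex set of $H$, then verify that the restriction is still a valid packing coloring of $H$. Concretely, let $k=\pcn(G)$ and let $\pi:V(G)\rightarrow\{1,\ldots,k\}$ be a packing $k$-coloring of $G$. Define $\pi'$ to be the restriction of $\pi$ to $V(H)\subseteq V(G)$; I claim $\pi'$ is a packing $k$-coloring of $H$, which immediately gives $\pcn(H)\le k=\pcn(G)$.

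The one substantive point to record is that distances cannot decrease when passing to a subgraph: for any two vertices $u,v\in V(H)$, every path joining $u$ and $v$ in $H$ is also a path joining them in $G$, so a shortest $u$--$v$ path in $G$ is no longer than a shortest one in $H$, giving $d_H(u,v)\ge d_G(u,v)$. (If $u$ and $v$ lie in different components of $H$ this reads $d_H(u,v)=\infty\ge d_G(u,v)$, which is fine.) I would state this as a one-line observation before invoking it.

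With this in hand the verification is routine: if $u,v$ are distinct vertices of $H$ with $\pi'(u)=\pi'(v)=i$, then also $\pi(u)=\pi(v)=i$, so $d_G(u,v)>i$ because $\pi$ is a packing coloring of $G$, and therefore $d_H(u,v)\ge d_G(u,v)>i$. Hence $\pi'$ satisfies the packing-coloring condition in $H$ for every color class, and it uses at most $k$ colors.

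There is essentially no obstacle here — the statement is a direct consequence of the monotonicity of the distance function under taking subgraphs — so the only thing worth being slightly careful about is not implicitly assuming $H$ (or $G$) is connected; the argument above handles the disconnected case automatically since $d_H(u,v)=\infty$ still exceeds $i$.
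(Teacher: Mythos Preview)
Your argument is correct and is the standard proof: restrict an optimal packing coloring of $G$ to $V(H)$ and use the distance monotonicity $d_H(u,v)\ge d_G(u,v)$. Note that the paper does not actually prove this proposition; it is quoted from~\cite{GHHHR08} as a known result, so there is no in-paper proof to compare against.
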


Fiala and Golovach~\cite{FG10} proved that determining the packing chromatic number is an NP-hard problem for trees.
Determining the packing chromatic number of special subclasses of trees is thus an interesting problem.
The exact value of the packing chromatic number of trees with diameter at most 4 was given in~\cite{GHHHR08}.
In the same paper, it was proved that $\pcn(T_n)\le(n+7)/4$ for every tree $T_n$ or order $n\neq 4,8$, and this bound is tight,
while $\pcn(T_n)\le 3$ if $n=4$ and $\pcn(T_n)\le 4$ if $n=8$, these two bounds being also tight.

The packing chromatic numbers of 
paths and cycles have been
determined by Goddard {\em et al.}:

\begin{theorem}[Goddard, Hedetniemi, Hedetniemi, Harris and Rall~\cite{GHHHR08}]\mbox{}
\begin{itemize}
\item $\pcn(P_n)=2$ if $n\in\{2,3\}$, 
\item $\pcn(P_n)=3$ if \ESOK{$n\ge 4$},
\item $\pcn(C_n)=3$ if $n=3$ or $n\equiv 0\pmod 4$,
\item $\pcn(C_n)=4$ if $n\ge 5$ and $n\equiv 1,2,3\pmod 4$. 
\end{itemize}
\label{th:goddard}
\end{theorem}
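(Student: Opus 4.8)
The plan is to treat paths and cycles separately and, in each case, to get the upper bound by exhibiting an explicit coloring and the lower bound either through Proposition~\ref{prop:subgraph} or by a direct argument on color classes. Throughout I write $V_i=\pi^{-1}(i)$ and use that $V_1$ is an independent set, that $V_2$ has pairwise distance $\ge 3$, and that $V_3$ has pairwise distance $\ge 4$.

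For paths, the cases $n\le 3$ are immediate ($1,2,1$ colors $P_3$, an edge forces two colors). For $n\ge 4$ I would color $v_1,\dots,v_n$ periodically with the block $1,2,1,3$: color-$1$ vertices are then pairwise at distance $\ge 2$, color-$2$ vertices at distance $\ge 4>2$, and color-$3$ vertices at distance $\ge 4>3$, so this is a packing $3$-coloring. For the matching lower bound it is enough to see that $\pcn(P_4)\ge 3$: in a would-be packing $2$-coloring of $P_4$ the class $V_2$ must have all pairs at distance $>2$, hence $|V_2|\le 1$, leaving three vertices of $P_4$ for $V_1$, which cannot be independent. Since $P_4\subseteq P_n$ for $n\ge 4$, Proposition~\ref{prop:subgraph} gives $\pcn(P_n)\ge 3$.

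For cycles, the lower bound $\pcn(C_n)\ge 3$ comes from $\pcn(C_3)=\pcn(K_3)=3$, from a direct check for $C_4$ (here $|V_2|\le 1$ because $\operatorname{diam}(C_4)=2$, so $V_1$ would contain three vertices and fail to be independent), and from $P_4\subseteq C_n$ with Proposition~\ref{prop:subgraph} for $n\ge 5$. For the upper bounds: when $n\equiv 0\pmod 4$ the periodic pattern $(1,2,1,3)$ closes up around the cycle and yields a packing $3$-coloring; when $n\ge 5$ and $n\not\equiv 0\pmod 4$, write $n=4q+r$ with $r\in\{1,2,3\}$ and use $(1,2,1,3)^q$ followed by a short seam block (respectively $4$; or $1,4$; or $1,2,4$). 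The unique color-$4$ vertex is automatically conflict-free, so the only thing left to verify is that no two equally colored vertices become too close through the short way around the cycle — a finite check for each residue, paying attention to the wrap-around distances near the seam.

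The heart of the proof, and the step I expect to be the main obstacle, is the lower bound $\pcn(C_n)\ge 4$ for $n\ge 5$ with $n\not\equiv 0\pmod 4$; equivalently, that a packing $3$-coloring of $C_n$ forces $4\mid n$. The naive counting bounds $|V_i|\le\lfloor n/(i+1)\rfloor$ only dispose of small values of $n$, so a structural argument is needed. I would assume a packing $3$-coloring and first establish the key local fact that \emph{no three consecutive vertices all avoid color~$1$}: among three consecutive vertices all pairwise distances are at most $2$, so at most one can lie in $V_2$ and at most one in $V_3$. After ruling out $|V_1|\le 1$ directly, it follows that the color-$1$ vertices split the cycle into ``gaps'' of exactly one or two non-$1$ vertices, and that a gap of size two reads $2,3$ or $3,2$. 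Classifying the four possible gap-types and checking, for each ordered pair of consecutive gap-types, whether it produces two color-$2$ vertices at distance $\le 2$ or two color-$3$ vertices at distance $\le 3$, one finds that the two size-two types are forbidden and that the two size-one types must strictly alternate around the cycle; this forces an even number of gaps, all of size one, hence $n$ a multiple of $4$ — the desired contradiction. The delicate points will be making the transition analysis genuinely exhaustive and correctly oriented (each forbidden transition must be shown to be \emph{necessarily} violating), and handling the degenerate small-$|V_1|$ cases cleanly.
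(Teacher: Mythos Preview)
The paper does not give its own proof of this theorem: it is quoted from Goddard, Hedetniemi, Hedetniemi, Harris and Rall~\cite{GHHHR08} as a known result and used as a tool later on. There is therefore nothing to compare against, so let me simply assess your argument.

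Your proof is essentially correct and self-contained. One small slip: in $P_4$ the two endpoints are at distance~$3>2$, so your claim ``$|V_2|\le 1$'' is false as stated. The fix is immediate: if $|V_2|=2$ then $V_2=\{v_1,v_4\}$, forcing $V_1=\{v_2,v_3\}$, an adjacent pair; otherwise $|V_2|\le 1$ and $|V_1|\ge 3$, again impossible. Your treatment of the cycle upper bounds and of the $\pcn(C_n)\ge 3$ part is fine.

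The gap/transition analysis for the lower bound $\pcn(C_n)\ge 4$ when $n\ge 5$ and $n\not\equiv 0\pmod 4$ is the interesting part, and it works exactly as you sketch. With gap-types $A=(2)$, $B=(3)$, $C=(2,3)$, $D=(3,2)$ separated by single $1$'s, one checks that $D$ has no admissible successor and $C$ has no admissible predecessor on the cycle, so only $A$ and $B$ survive; moreover $A\to A$ and $B\to B$ each place two equal colors at distance~$2$, so $A$ and $B$ must strictly alternate. This forces an even number $k$ of gaps, all of size one, hence $n=2k$ with $k$ even and $4\mid n$. The degenerate cases $|V_1|\le 1$ are disposed of by your ``no three consecutive vertices avoid color~$1$'' observation, which already forces $|V_1|\ge 2$ once $n\ge 4$.
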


The {\em corona} $G\odot K_1$ of a graph $G$ is the graph obtained from $G$ by adding a degree-one neighbor to every vertex  of $G$.
We call such a degree-one neighbor a {\em pendant vertex} or a {\em pendant neighbor}.
\ESOK{More generally, for a given integer $p\ge 1$, the {\em generalized corona} $G\odot pK_1$ of a graph $G$ is the graph obtained from $G$ by adding $p$ pendant neighbors
 to every vertex  of $G$.
}

A {\em caterpillar of length} $\ell\ge 1$ is a tree whose set of internal vertices
(vertices with degree at least 2) induces a path of length $\ell-1$, 
called the {\em central path}.
Sloper proved the following result:

\begin{theorem}[Sloper~\cite{S04}]
Let $CT_\ell$ be a caterpillar of length $\ell$. Then $\pcn(CT_\ell)\le 6$ if $\ell\le 34$,
and $\pcn(CT_\ell)\le 7$ otherwise. Moreover, these two bounds are tight.
\label{th:sloper}
\end{theorem}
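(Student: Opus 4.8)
The plan is to reduce the statement to a question about packing colourings of \emph{paths} with a restricted palette, to settle that question by an explicit construction, and then to pin down the threshold $34$ by a finite analysis.

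First I would record some elementary facts about a caterpillar $CT_\ell$ with central path $v_1v_2\cdots v_\ell$: pendant neighbours of the same $v_i$ are pairwise at distance $2$, while a pendant of $v_i$ and a pendant of $v_j$ are at distance $|i-j|+2\ge 2$. Hence the set of all pendant vertices is independent and pairwise at distance $\ge 2$, so colour $1$ may be given to all of them simultaneously, provided no central vertex carrying a pendant receives colour $1$. Moreover, adding pendants to a vertex cannot lower the packing chromatic number, and if $v_i$ carries at least $7$ pendants then $v_i$ cannot get colour $1$ (its pendants, being pairwise at distance $2$, would need $\ge 7$ pairwise distinct colours out of a $6$-element palette). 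Combining these with Proposition~\ref{prop:subgraph}, every caterpillar of length $\ell$ embeds in one all of whose central vertices carry many pendants, and for the latter restricting any packing colouring to the central path $P_\ell$ gives a ``shifted'' packing colouring of $P_\ell$, using colours from $\{2,3,\dots\}$ only, in which the colour-$j$ class is pairwise at distance $>j$ in $P_\ell$ (distances only grow when the pendants are deleted). Writing $\pcn^{*}(P_\ell)$ for the least palette size of such a colouring, one gets $\pcn(CT_\ell)=1+\pcn^{*}(P_\ell)$ for the heavily-pendant caterpillar and $\pcn(CT_\ell)\le 1+\pcn^{*}(P_\ell)$ in general, so the whole theorem reduces to: $\pcn^{*}(P_\ell)\le 5$ for $\ell\le 34$, $\pcn^{*}(P_\ell)\le 6$ for all $\ell$, and both are attained.

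For the upper bounds I would exhibit the colourings explicitly: a periodic colouring of the integer line using $\{2,3,4,5,6,7\}$ in which the colour-$j$ class has all consecutive gaps $>j$, so that restricting a window yields $\pcn^{*}(P_\ell)\le 6$, hence $\pcn(CT_\ell)\le 7$, for every $\ell$; and a single colouring of $P_{34}$ using only $\{2,3,4,5,6\}$, which by monotonicity ($P_\ell\subseteq P_{34}$) gives $\pcn^{*}(P_\ell)\le 5$, hence $\pcn(CT_\ell)\le 6$, for $\ell\le 34$. For tightness I would prove, symmetrically, that $P_{35}$ has no packing colouring with palette $\{2,3,4,5,6\}$ and that some $P_{\ell_0}$ with $\ell_0\le 34$ has none with palette $\{2,3,4,5\}$; plugging these paths into the corresponding heavily-pendant caterpillars produces caterpillars with packing chromatic number exactly $7$ and exactly $6$, respectively.

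The main obstacle is the sharp threshold, namely the impossibility of colouring $P_{35}$ with $\{2,3,4,5,6\}$. A counting argument is not enough: on a window of $36$ vertices the maximal sizes of the five classes sum to more than $36$ (already the densities $\frac{1}{3},\frac{1}{4},\frac{1}{5},\frac{1}{6},\frac{1}{7}$ sum to more than $1$). The argument instead has to exploit the rigidity of near-extremal packings on a short path --- a colour-$2$ class of near-maximum size is forced to be almost an arithmetic progression of step $3$, which pins the positions left for colour $3$, then those for colour $4$, and so on --- and then to check that after $2,3,4,5$ are placed within these constraints, colour $6$ is always left with two vertices it cannot cover simultaneously. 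This is a finite but delicate verification, and it is exactly what makes the bound jump from $6$ to $7$ once $\ell$ exceeds $34$.
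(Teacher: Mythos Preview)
The paper does not prove this statement; it is quoted from Sloper~\cite{S04} and used as input. The closest the paper comes is in the proof of Theorem~\ref{th:PnpK1} (generalized coronae $P_n\odot pK_1$ with $p\ge 4$, which are precisely the heavily-pendant caterpillars you construct): there it reproduces Sloper's length-$34$ pattern
\[23425\ 62342\ 53264\ 23524\ 62352\ 43265\ 2342\]
on the alphabet $\{2,\dots,6\}$, his periodic length-$12$ pattern $[23425\ 62342\ 57]$ on $\{2,\dots,7\}$, and simply attributes to Sloper the assertion that no $\{2,\dots,6\}$-pattern of length greater than $34$ exists.

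Your reduction is correct and is exactly the mechanism behind Sloper's result and behind the paper's Proposition~\ref{prop:Pn} and Theorem~\ref{th:PnpK1}: load every spine vertex with enough pendants to forbid colour~$1$ there, so that what remains is a $\{2,\dots,k\}$-packing colouring of the central path, and give colour~$1$ to all pendants. The two explicit patterns above are the ones you would exhibit for the upper bounds; for tightness of the bound~$6$, the relevant fact (also recorded in the paper) is that the longest $\{2,3,4,5\}$-pattern is $23425324$ of length~$8$, so a heavily-pendant caterpillar of length~$9$ already requires six colours. The only step you leave unfinished is the $P_{35}$ impossibility, and your assessment there is accurate: the density bound is not sharp enough, and Sloper's original argument is an exhaustive (computer-assisted) verification rather than the structural rigidity argument you sketch.
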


\ESOK{Since every generalized corona of a path is a caterpillar, we get that 
for every integer $p\ge 1$,  
$\pcn(P_n\odot pK_1)\le 6$ if $n\le 34$ and $\pcn(P_n\odot pK_1)\le 7$ otherwise.}


\medskip

\ESOK{By considering digraphs instead of undirected graphs, and using the {\em (weak) directed distance} between vertices
--- defined as the number of arcs in a shortest directed path linking these vertices, in either direction ---
we get a natural extension of packing colorings to digraphs. In this paper, we will consider {\em orientations}
of some undirected graphs, obtained by giving to each edge of such a graph one of its two possible orientations.
The so-obtained {\em oriented graphs} are thus digraphs having no pair of opposite arcs.
}

\medskip

\ESOK{In this paper, we determine the packing chromatic number of (simple) coronae of paths and cycles (Section~\ref{sec:coronae}) 
and of generalized coronae (for $k\ge 2$) of paths and cycles (Section~\ref{sec:generalized-coronae}).
In Section~\ref{sec:oriented-coronae}, we consider the oriented version of packing colorings and
determine the packing chromatic number of oriented paths, oriented cycles and oriented generalized coronae
of paths and cycles.
}%
Some of the presented results \ESOK{for undirected graphs} were obtained by the first author in~\cite{L10}.
\EFFACE{and characterize caterpillars with packing chromatic number at most 4, 5 and~6 (Section~\ref{sec:caterpillars}). }

\section{Coronae of \ESOK{undirected} paths and cycles}
\label{sec:coronae}

%

We study in this section coronae of paths and cycles.
We first determine the packing chromatic number of coronae of paths.
Note that any corona $P_n\odot K_1$ is also a caterpillar of length $n$.

\begin{theorem}
The packing chromatic number of the corona graph $P_n\odot K_1$ is given by:
$$\pcn(P_{n}\odot K_1)= \left\{
			    \begin{array}{ll}
    			     $2$ & \hbox{if $n=1$,} \\
    			     $3$ & \hbox{if $n\in\{2,3\}$,} \\
    			     $4$ & \hbox{if $4\le n\le 9$,}\\
    			     $5$ & \hbox{if $n\ge 10$.}
   			 	\end{array}
 			 			\right.
$$ 
\label{th:CrPn}
\end{theorem}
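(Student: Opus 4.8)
The plan is to handle the four cases by a combination of explicit colorings (for the upper bounds) and structural/counting arguments (for the lower bounds). For the small cases $n=1$ and $n\in\{2,3\}$, one verifies directly: $P_1\odot K_1$ is an edge $P_2$, so its packing chromatic number is $2$ by Theorem~\ref{th:goddard}; for $n\in\{2,3\}$ the graph has diameter $\ge 3$ and contains $P_4$ as a subgraph (since $P_n\odot K_1$ contains $P_{n+2}$ as a subgraph through a pendant–spine–spine–pendant path), so $\pcn\ge 3$ by Proposition~\ref{prop:subgraph} and Theorem~\ref{th:goddard}, while an explicit $3$-coloring is easy to exhibit. The real content is in the ranges $4\le n\le 9$ (value $4$) and $n\ge 10$ (value $5$).

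For the upper bounds in the two main cases, I would exhibit explicit periodic colorings of the spine vertices $v_1,\dots,v_n$ together with a rule for coloring the pendant vertices. The key observation is that the $p=1$ pendant neighbor of $v_i$ is at distance $1$ from $v_i$ and at distance $\ge 2$ from everything else, so it can safely receive color $1$ unless $v_i$ itself is colored $1$; and two pendant vertices attached to adjacent spine vertices are at distance $3$, so they may share color $2$ but not color $3$ unless their spine vertices are far apart. Concretely, for $4\le n\le 9$ I expect a pattern on the spine using colors from $\{1,2,3,4\}$ with color $1$ appearing on every other spine vertex, pushing color $1$ onto the remaining pendants, and using colors $3,4$ sparingly (each at most a couple of times, which is why $n\le 9$ is the threshold — color $3$ can be used on vertices pairwise at distance $\ge 4$ and color $4$ on vertices pairwise at distance $\ge 5$, and the caterpillar structure limits how many such vertices fit). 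For $n\ge 10$ I would give a genuinely periodic $5$-coloring (period dividing something like $8$ or $12$), check the distance constraints for colors $3,4,5$ on one period and across period boundaries, and handle the two ends by truncation.

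For the lower bounds, the crux is showing $\pcn(P_n\odot K_1)\ge 4$ for $n\ge 4$ and $\pcn(P_n\odot K_1)\ge 5$ for $n\ge 10$. The first follows from Proposition~\ref{prop:subgraph}: $P_n\odot K_1$ with $n\ge 4$ contains $P_6$ (a pendant–$v_1$–$v_2$–$\cdots$–$v_4$–pendant walk gives $P_{n+2}\supseteq P_6$), but more to the point $\pcn(P_4)=3$ alone only gives $3$, so I actually need that $P_n\odot K_1$ contains a tree of diameter $4$ with enough leaves to force $4$, or I argue directly that three color classes cannot cover the many pendant vertices plus spine. The cleanest route for $\ge 4$: color $1$ is an independent set, and in $P_n\odot K_1$ the pendants together with alternate spine vertices already form the unique maximum such set; colors $2$ and $3$ can then cover only $O(1)$ of the remaining vertices per unit length, a contradiction for $n\ge 4$ once the constants are pinned down. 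For $\ge 5$, the main obstacle — and the technical heart of the theorem — is a careful counting argument: in any packing $4$-coloring, estimate from above the number of spine-plus-pendant vertices in a window of consecutive spine indices that can receive colors $2,3,4$ (color $2$: density $\le \frac14$ roughly, color $3$: $\le \frac14$, color $4$: $\le \frac15$), add the color-$1$ density ($\le \frac12$ on the combined vertex set after accounting for pendants), and show the total is strictly less than the number of vertices per window once the window is long enough, which happens precisely around $n=10$. I expect this window/discharging computation, together with the fiddly boundary analysis at the two ends of the path, to be the part requiring the most care.
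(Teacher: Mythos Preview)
Your treatment of the small cases and the upper bounds is essentially the paper's approach (explicit patterns, plus a periodic 5-coloring for $n\ge 10$), so that part is fine. The problems are in the two lower bounds.

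For $\pcn(P_4\odot K_1)\ge 4$, your density sketch does not work: color~2 can in fact sit on \emph{all} pendant vertices simultaneously (two distinct pendants are at distance $\ge 3$), so color~2 alone has density $1/2$, and together with color~1 (also density $1/2$) there is no contradiction from counting. The paper instead argues structurally: the packing 3-coloring of $P_3\odot K_1$ is unique up to symmetry, and one checks in a line that it cannot be extended to the fourth column. This is short and you could easily supply it.

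The real gap is $\pcn(P_{10}\odot K_1)\ge 5$. Your proposed fractions $\tfrac12+\tfrac14+\tfrac14+\tfrac15$ already sum to $\tfrac65>1$, so no contradiction arises; and the individual bounds are too generous anyway (for instance, color~2 can occupy all pendants, density $\tfrac12$, and color~3 every other pendant, density $\tfrac14$). More fundamentally, a density/discharging argument of this type, even if tightened, will not pin down the exact threshold $n=10$: such arguments typically yield $n\ge N$ for some $N$ that overshoots, because the extremal configurations for each color class cannot be realised simultaneously and capturing that interaction is exactly what makes the problem delicate. The paper does not attempt a counting proof here; it states that a hand proof ``could be done by a long and tedious case analysis'' and instead verifies by computer that $P_9\odot K_1$ is the largest 4-colorable corona of a path (with exactly two packing 4-colorings). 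If you want a non-computer proof, you would need to replace the density idea by an exhaustive structural case analysis of where colors~3 and~4 can sit on the spine and pendants of $P_{10}\odot K_1$; the window-counting route will not close.
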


\begin{figure}
\begin{center}
\begin{tikzpicture}[domain=0:17,x=0.8cm,y=0.8cm]
\begin{scope}
\draw[fill=black] (0,0) circle (2pt); \draw[fill=black] (1,0) circle
(2pt); \draw[fill=black] (3,0) circle (2pt); \draw[fill=black] (4,0) circle
(2pt); \draw[fill=black] (5,0) circle (2pt);
\draw[fill=black] (7,0) circle (2pt);\draw[fill=black] (8,0) circle (2pt);\draw[fill=black] (9,0) circle (2pt); \draw[fill=black] (10,0) circle (2pt);
\draw[fill=black] (12,0) circle (2pt);\draw[fill=black] (13,0) circle (2pt); \draw[fill=black] (14,0) circle (2pt);\draw[fill=black] (15,0) circle (2pt);
\draw[fill=black] (16,0) circle (2pt);

\draw[fill=black] (0,1) circle (2pt); \draw[fill=black] (1,1) circle
(2pt); \draw[fill=black] (3,1) circle (2pt); \draw[fill=black] (4,1) circle
(2pt); \draw[fill=black] (5,1) circle (2pt); \draw[fill=black] (7,1) circle
(2pt); \draw[fill=black] (8,1) circle (2pt);\draw[fill=black] (9,01) circle (2pt); \draw[fill=black] (10,1) circle
 (2pt);
\draw[fill=black] (12,1) circle (2pt);\draw[fill=black] (13,1) circle (2pt);\draw[fill=black] (14,1) circle (2pt);
\draw[fill=black] (15,1) circle (2pt);
\draw[fill=black] (16,1) circle (2pt);

\draw[fill=black] (2,-2.1) circle (2pt);\draw[fill=black] (3,-2.1) circle (2pt);\draw[fill=black] (4,-2.1) circle (2pt);\draw[fill=black] (5,-2.1) circle (2pt);\draw[fill=black] (6,-2.1) circle (2pt);\draw[fill=black] (7,-2.1) circle (2pt);
\draw[fill=black] (9,-2.1) circle (2pt);\draw[fill=black] (10,-2.1) circle (2pt);\draw[fill=black] (11,-2.1) circle (2pt);\draw[fill=black] (12,-2.1) circle (2pt);\draw[fill=black] (13,-2.1) circle (2pt);\draw[fill=black] (14,-2.1) circle (2pt);\draw[fill=black] (15,-2.1) circle (2pt);

\draw[fill=black] (2,-3.1) circle (2pt);\draw[fill=black] (3,-3.1) circle (2pt);\draw[fill=black] (4,-3.1) circle (2pt);\draw[fill=black] (5,-3.1) circle (2pt);\draw[fill=black] (6,-3.1) circle (2pt);\draw[fill=black] (7,-3.1) circle (2pt);
\draw[fill=black] (9,-3.1) circle (2pt);\draw[fill=black] (10,-3.1) circle (2pt);\draw[fill=black] (11,-3.1) circle (2pt);\draw[fill=black] (12,-3.1) circle (2pt);\draw[fill=black] (13,-3.1) circle (2pt);\draw[fill=black] (14,-3.1) circle (2pt);\draw[fill=black] (15,-3.1) circle (2pt);

\node[below] at (0,1.7) {$2$};\node[below] at (0,-0.1) {$1$};
\node[below] at (1,1.7) {$1$};\node[below] at (1,-0.1) {$3$};

\node[below] at (3,1.7) {$1$};\node[below] at (3,-0.1) {$2$};
\node[below] at (4,1.7) {$3$};\node[below] at (4,-0.1) {$1$};
\node[below] at (5,1.7) {$1$};\node[below] at (5,-0.1) {$2$};

\node[below] at (7,1.7) {$2$};\node[below] at (7,-0.1) {$1$};
\node[below] at (8,1.7) {$1$};\node[below] at (8,-0.1) {$3$};
\node[below] at (9,1.7) {$4$};\node[below] at (9,-0.1) {$1$};
\node[below] at (10,1.7) {$1$};\node[below] at (10,-0.1) {$2$};
%
%

\node[below] at (0.5,-0.7) {$P_{2}\odot K_1$};
\node[below] at (4,-0.7) {$P_{3}\odot K_1$};
\node[below] at (08.5,-0.7) {$P_{4}\odot K_1$};
\node[below] at (14,-0.7) {$P_{5}\odot K_1$};

\node[below] at (4.5,-3.8) {$P_{6}\odot K_1$};
\node[below] at (12,-3.8) {$P_{7}\odot K_1$};

\node[below] at (12,1.7) {$2$};\node[below] at (12,-0.1) {$1$};
\node[below] at (13,1.7) {$1$};\node[below] at (13,-0.1) {$3$};
\node[below] at (14,1.7) {$4$};\node[below] at (14,-0.1) {$1$};
\node[below] at (15,1.7) {$1$};\node[below] at (15,-0.1) {$3$};
\node[below] at (16,1.7) {$2$};\node[below] at (16,-0.1) {$1$};

\node[below] at (2,-1.4) {$2$};\node[below] at (2,-3.2) {$1$};
\node[below] at (3,-1.4) {$1$};\node[below] at (3,-3.2) {$3$};
\node[below] at (4,-1.4) {$4$};\node[below] at (4,-3.2) {$1$};
\node[below] at (5,-1.4) {$1$};\node[below] at (5,-3.2) {$3$};
\node[below] at (6,-1.4) {$2$};\node[below] at (6,-3.2) {$1$};
\node[below] at (7,-1.4) {$1$};\node[below] at (7,-3.2) {$3$};

\node[below] at (9,-1.4) {$1$};\node[below] at (9,-3.2) {$3$};
\node[below] at (10,-1.4) {$2$};\node[below] at (10,-3.2) {$1$};
\node[below] at (11,-1.4) {$1$};\node[below] at (11,-3.2) {$3$};
\node[below] at (12,-1.4) {$4$};\node[below] at (12,-3.2) {$1$};
\node[below] at (13,-1.4) {$1$};\node[below] at (13,-3.2) {$3$};
\node[below] at (14,-1.4) {$2$};\node[below] at (14,-3.2) {$1$};
\node[below] at (15,-1.4) {$1$};\node[below] at (15,-3.2) {$3$};
%
%
%
\draw[very thick] (0,1) -- (1,1); \draw[very thick] (3,1) -- (5,1);\draw[very thick] (7,1) -- (10,1); \draw[very thick] (12,1) -- (16,1);

\draw[very thick] (2,-2.1) -- (7,-2.1); \draw[very thick] (9,-2.1) -- (15,-2.1);
\draw[very thick] (0,0) -- (0,1); \draw[very thick] (1,0) -- (1,1); \draw[very thick] (3,0) -- (3,1);
\draw[very thick] (4,0) -- (4,1); \draw[very thick] (5,0) -- (5,1);
\draw[very thick] (7,0) -- (7,1); \draw[very thick] (8,0) -- (8,1); \draw[very thick] (9,0) -- (9,1);
 \draw[very thick] (10,0) -- (10,1);
 \draw[very thick] (12,0) -- (12,1);\draw[very thick] (13,0) -- (13,1);\draw[very thick] (14,0) -- (14,1);
 \draw[very thick] (15,0) -- (15,1);\draw[very thick] (16,0) -- (16,1);

\draw[very thick] (2,-2.1) -- (2,-3.1); \draw[very thick] (3,-2.1) -- (3,-3.1);
\draw[very thick] (4,-2.1) -- (4,-3.1); \draw[very thick] (5,-2.1) -- (5,-3.1);\draw[very thick] (6,-2.1) -- (6,-3.1);
\draw[very thick] (7,-2.1) -- (7,-3.1);
\draw[very thick] (9,-2.1) -- (9,-3.1);
 \draw[very thick] (10,-2.1) -- (10,-3.1);\draw[very thick] (11,-2.1) -- (11,-3.1);
 \draw[very thick] (12,-2.1) -- (12,-3.1);\draw[very thick] (13,-2.1) -- (13,-3.1);
 \draw[very thick] (14,-2.1) -- (14,-3.1);\draw[very thick] (15,-2.1) -- (15,-3.1);
%


\end{scope}

\begin{scope}[shift={(0,-6.2)}]
\SOMMET{0,0}; \SOMMET{0,1};
\SOMMET{1,0}; \SOMMET{1,1};
\SOMMET{2,0}; \SOMMET{2,1};
\SOMMET{3,0}; \SOMMET{3,1};
\SOMMET{4,0}; \SOMMET{4,1};
\SOMMET{5,0}; \SOMMET{5,1};
\SOMMET{6,0}; \SOMMET{6,1};
\SOMMET{7,0}; \SOMMET{7,1};

\ETIQUETTE{0,1.7}{1}; \ETIQUETTE{0,-0.1}{2};
\ETIQUETTE{1,1.7}{3}; \ETIQUETTE{1,-0.1}{1};
\ETIQUETTE{2,1.7}{1}; \ETIQUETTE{2,-0.1}{2};
\ETIQUETTE{3,1.7}{4}; \ETIQUETTE{3,-0.1}{1};
\ETIQUETTE{4,1.7}{1}; \ETIQUETTE{4,-0.1}{2};
\ETIQUETTE{5,1.7}{3}; \ETIQUETTE{5,-0.1}{1};
\ETIQUETTE{6,1.7}{2}; \ETIQUETTE{6,-0.1}{1};
\ETIQUETTE{7,1.7}{1}; \ETIQUETTE{7,-0.1}{4};

\ARETEH{0,1}{7};
\ARETEV{0,0};
\ARETEV{1,0};
\ARETEV{2,0};
\ARETEV{3,0};
\ARETEV{4,0};
\ARETEV{5,0};
\ARETEV{6,0};
\ARETEV{7,0};

\ETIQUETTE{3.5,-0.7}{$P_{8}\odot K_1$};

\end{scope}

\begin{scope}[shift={(9,-6.2)}]
\SOMMET{0,0}; \SOMMET{0,1};
\SOMMET{1,0}; \SOMMET{1,1};
\SOMMET{2,0}; \SOMMET{2,1};
\SOMMET{3,0}; \SOMMET{3,1};
\SOMMET{4,0}; \SOMMET{4,1};
\SOMMET{5,0}; \SOMMET{5,1};
\SOMMET{6,0}; \SOMMET{6,1};
\SOMMET{7,0}; \SOMMET{7,1};
\SOMMET{8,0}; \SOMMET{8,1};

\ETIQUETTE{0,1.7}{1}; \ETIQUETTE{0,-0.1}{4};
\ETIQUETTE{1,1.7}{2}; \ETIQUETTE{1,-0.1}{1};
\ETIQUETTE{2,1.7}{3}; \ETIQUETTE{2,-0.1}{1};
\ETIQUETTE{3,1.7}{1}; \ETIQUETTE{3,-0.1}{2};
\ETIQUETTE{4,1.7}{4}; \ETIQUETTE{4,-0.1}{1};
\ETIQUETTE{5,1.7}{1}; \ETIQUETTE{5,-0.1}{2};
\ETIQUETTE{6,1.7}{3}; \ETIQUETTE{6,-0.1}{1};
\ETIQUETTE{7,1.7}{2}; \ETIQUETTE{7,-0.1}{1};
\ETIQUETTE{8,1.7}{1}; \ETIQUETTE{8,-0.1}{4};

\ARETEH{0,1}{8};
\ARETEV{0,0};
\ARETEV{1,0};
\ARETEV{2,0};
\ARETEV{3,0};
\ARETEV{4,0};
\ARETEV{5,0};
\ARETEV{6,0};
\ARETEV{7,0};
\ARETEV{8,0};

\ETIQUETTE{4,-0.7}{$P_{9}\odot K_1$};

\end{scope}

\end{tikzpicture}
\caption{Optimal packing colorings of $P_n\odot K_1$, $2\le n\le 9$}
\label{fig:CrPnInf9}
\end{center}
\end{figure}
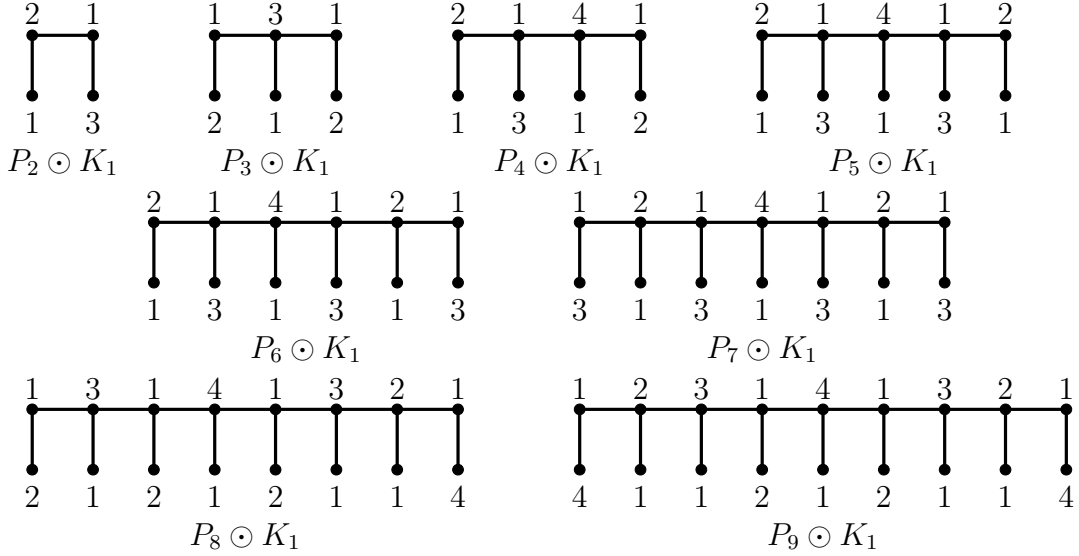

\begin{proof}
We obviously have $\pcn(P_1\odot K_1)=\pcn(P_2)=2$.
Optimal packing colorings of $P_n\odot K_1$ are given in Figure~\ref{fig:CrPnInf9} for every $n$, $2\le n\le 9$.
Since $P_2\odot K_1=P_4$, we have $\pcn(P_2\odot K_1)=3$ by Theorem~\ref{th:goddard}.
It is easy to observe that the packing 3-coloring of $P_3\odot K_1$ depicted in Figure~\ref{fig:CrPnInf9}
is unique. Hence, if $P_4\odot K_1$ would be packing 3-colorable, this packing 3-coloring of $P_3\odot K_1$
would appear on the left or right hand side of $P_4\odot K_1$. But in that case, the fourth vertex of
the central path of $P_4\odot K_1$ could not be colored. Hence  $\pcn(P_4\odot K_1)=4$.
Finally, since $P_2\odot K_1$ is a subgraph of $P_3\odot K_1$ and $P_4\odot K_1$ is a
subgraph of $P_n\odot K_1$ for every $n$, $5\le n\le 9$, all the packing colorings
given in Figure~\ref{fig:CrPnInf9} are optimal by Proposition~\ref{prop:subgraph}.

Let us now consider $P_n\odot K_1$ with $n\ge 10$.
Let $x_1x_2\ldots x_n$ denote the central path of $P_n\odot K_1$ and $y_i$ denote the pendant neighbor of $x_i$
for every $i$, $1\le i\le n$.
Let $\pi$ be the 4-periodic 5-coloring of $P_n\odot K_1$ defined as follows (see Figure~\ref{fig:CrPn}):
$$\pi(x_{i})=\left\{
   \begin{array}{ll}
   1 & \mbox{if}\ \ i\equiv 1\pmod 2,\\
   2 & \mbox{if}\ \ i\equiv 2\pmod 4,\\
   3 & \mbox{if}\ \ i\equiv 0\pmod 4,\\
   \end{array}
 \right.$$
$$\pi(y_{i})=\left\{
   \begin{array}{ll}
   1 & \mbox{if}\ \ i\equiv 0\pmod 2,\\
   4 & \mbox{if}\ \ i\equiv 1\pmod 4,\\
   5 & \mbox{if}\ \ i\equiv 3\pmod 4,\\
   \end{array}
 \right.$$
It is not difficult to check that $\pi$ is indeed a packing 5-coloring of $P_n\odot K_1$
 and, therefore,
$\pcn(P_n\odot K_1)\le 5$  for every $n\ge 10$.

To finish the proof, it is enough to prove that $\pcn(P_{10}\odot K_1)\ge 5$, thanks to Proposition~\ref{prop:subgraph}.
This could be done by a long and tedious case analysis.
By computer search, we get that the largest packing 4-colorable corona of path is $P_9\odot K_1$, which admits
two distinct packing 4-colorings:  one is given in Figure~\ref{fig:CrPnInf9}, the other one is obtained
by coloring the middle pendant vertex by 2 instead of 1.
\end{proof}

\begin{figure}
\begin{center}
\begin{tikzpicture}[domain=0:16,x=0.8cm,y=0.8cm]
\draw[fill=black] (0,0) circle (2pt); \draw[fill=black] (1,0) circle
(2pt); \draw[fill=black] (2,0) circle (2pt); \draw[fill=black] (3,0) circle (2pt); \draw[fill=black] (4,0) circle
(2pt); \draw[fill=black] (5,0) circle (2pt); \draw[fill=black] (6,0) circle (2pt); \draw[fill=black] (7,0) circle
(2pt);


\draw[fill=black] (0,1) circle (2pt); \draw[fill=black] (1,1) circle
(2pt); \draw[fill=black] (2,1) circle (2pt); \draw[fill=black] (3,1) circle (2pt); \draw[fill=black] (4,1) circle
(2pt); \draw[fill=black] (5,1) circle (2pt); \draw[fill=black] (6,1) circle (2pt); \draw[fill=black] (7,1) circle
(2pt); 


\node[below] at (0,1.7) {$1$};\node[below] at (0,-0.1) {$4$};
\node[below] at (1,1.7) {$2$};\node[below] at (1,-0.1) {$1$};
\node[below] at (2,1.7) {$1$};\node[below] at (2,-0.1) {$5$};
\node[below] at (3,1.7) {$3$};\node[below] at (3,-0.1) {$1$};
\node[below] at (4,1.7) {$1$};\node[below] at (4,-0.1) {$4$};
\node[below] at (5,1.7) {$2$};\node[below] at (5,-0.1) {$1$};
\node[below] at (6,1.7) {$1$};\node[below] at (6,-0.1) {$5$};
\node[below] at (7,1.7) {$3$};\node[below] at (7,-0.1) {$1$};
\node[below] at (8,1.15) {$...$};
\node[below] at (8,0.15) {$...$};

\draw[very thick] (0,1) -- (1,1); \draw[very thick] (1,1) -- (2,1);\draw[very thick] (2,1) -- (3,1); \draw[very thick] (3,1) -- (4,1);
\draw[very thick] (4,1) -- (5,1); \draw[very thick] (5,1) -- (6,1); \draw[very thick] (6,1) -- (7,1); \draw[very thick] (7,1) -- (7.5,1); 

\draw[very thick] (0,0) -- (0,1); \draw[very thick] (1,0) -- (1,1);\draw[very thick] (2,0) -- (2,1); \draw[very thick] (3,0) -- (3,1);
\draw[very thick] (4,0) -- (4,1); \draw[very thick] (5,0) -- (5,1); \draw[very thick] (6,0) -- (6,1); \draw[very thick] (7,0) -- (7,1); 

\draw[very thick,dashed] (3.5,-0.5) -- (3.5,1.5); \draw[very thick,dashed] (7.5,-0.5) -- (7.5,1.5); 

\end{tikzpicture}
\caption{Periodic packing coloring of $P_n\odot K_1$, $n\ge 8$}
\label{fig:CrPn}
\end{center}
\end{figure}
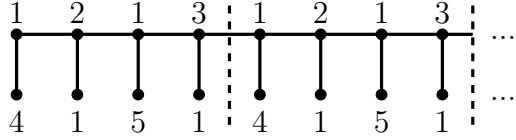

In~\cite{WRR14a}, William, Roy and Rajasingh proved that $\pcn(C_n\odot K_1)\le 5$ for every even $n\ge 6$.
We complete their result as follows:

\begin{theorem}
The packing chromatic number of the corona graph $C_n\odot K_1$ is given by:
$$\pcn(C_{n}\odot K_1)= \left\{
			    \begin{array}{ll}
    			     $4$ & \hbox{if $n\in\{3,4\}$,} \\
    			     $5$ & \hbox{if $n\ge 5$.}
   			 	\end{array}
 			 			\right.
$$ 
\label{th:CrCn}
\end{theorem}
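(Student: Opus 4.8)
The plan is to split on $n$, proving $\pcn(C_n\odot K_1)=4$ for $n\in\{3,4\}$ and $\pcn(C_n\odot K_1)=5$ for $n\ge 5$, and in each case to handle the upper and lower bounds separately.

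For $n\in\{3,4\}$ the upper bound $\pcn(C_n\odot K_1)\le 4$ follows from an explicit coloring (to be displayed in a figure): give colour~$2$ to all pendant vertices (legitimate, since any two pendants are at distance at least~$3$) and colour the cycle of $C_n$ with $1$ on two non-adjacent vertices, together with $3$ and, for $n=4$, also~$4$. For the lower bound, the case $n=4$ is immediate since $C_4\odot K_1$ contains $P_4\odot K_1$ and $\pcn(P_4\odot K_1)=4$ by Theorem~\ref{th:CrPn}; for $n=3$ I would argue directly: in a hypothetical packing $3$-coloring $V_1\cup V_2\cup V_3$ of $C_3\odot K_1$ one has $|V_3|\le 1$ (the diameter is~$3$), $V_1$ is an independent set of size at most~$3$, and any two vertices at distance at least~$3$ are both pendants, so $V_2$ contains a central vertex only if $|V_2|=1$; a short count then shows the three central vertices cannot all be coloured, whence $\pcn(C_3\odot K_1)=4$.

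Now fix $n\ge 5$. For $\pcn(C_n\odot K_1)\le 5$: when $n$ is even and $n\ge 6$ I would quote the theorem of William, Roy and Rajasingh; when $n$ is odd I would exhibit explicit packing $5$-colorings, using for the bulk of the cycle the $4$-periodic pattern of Figure~\ref{fig:CrPn} (cycle vertices coloured $1,2,1,3$ and their pendants $4,1,5,1$ along each block), which is a valid packing $5$-coloring on any run of four consecutive cycle vertices, and closing up the cycle with a short correction segment whose length and colours depend on $n\bmod 4$; the smallest odd cases ($n=5$, $n=7$, and possibly a couple more) are done by ad hoc colorings shown in a figure. The delicate point is the correction segment: the periodic pattern is quite rigid near the seam (the pendants forced to colours~$4$ and~$5$ leave little room), so the patch must be chosen carefully and every distance constraint crossing the seam rechecked for each residue of $n$ modulo~$4$.

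For $\pcn(C_n\odot K_1)\ge 5$, i.e.\ that $C_n\odot K_1$ admits no packing $4$-coloring, I would distinguish two ranges. For $n\ge 10$, the graph $C_n\odot K_1$ contains $P_{10}\odot K_1$ as a subgraph (ten consecutive cycle vertices together with their pendants), and $\pcn(P_{10}\odot K_1)=5$ by Theorem~\ref{th:CrPn}, so Proposition~\ref{prop:subgraph} gives the bound at once. The cases $5\le n\le 9$ do not follow from the path result (where $\pcn(P_n\odot K_1)=4$) and I expect them to be the main obstacle: from a hypothetical packing $4$-coloring one uses that the diameter of $C_n\odot K_1$ is about $n/2$ to bound $|V_3|$ and $|V_4|$, and that colour~$c$ can be used on at most $\lfloor n/(c+1)\rfloor$ cycle vertices, so that the distribution of colours on the cycle is (nearly) forced; combining this with the fact that the pendant neighbour of every colour-$1$ cycle vertex then needs a colour in $\{2,3,4,5\}$ with no room left, one reaches a contradiction. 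A finite case analysis settles this (it is short for $n=5$ and $n=7$, where the cycle-vertex count is already tight), and it can also be confirmed by a computer search as for $P_{10}\odot K_1$ above.
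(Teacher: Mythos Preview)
Your overall plan is sound, and the upper-bound constructions match the paper's approach (explicit colorings for small $n$; a $4$-periodic pattern with a short patch depending on $n\bmod 4$ for $n\ge 8$). One small slip: your description of the $4$-coloring of $C_3\odot K_1$ (``$1$ on two non-adjacent vertices'') does not make sense on a triangle; the paper simply colours the three cycle vertices $2,3,4$ and the three pendants $1,1,1$.

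The substantive difference lies in the lower bound for $n\ge 5$. You split into $n\ge 10$ (where $P_{10}\odot K_1$ sits inside $C_n\odot K_1$ and Theorem~\ref{th:CrPn} finishes) and $5\le n\le 9$, which you leave to a counting-based case analysis. That analysis can indeed be carried out, but for $n=8,9$ it is not short. The paper avoids the split entirely with one observation that handles every $n\ge 5$ uniformly: if some $C_n\odot K_1$ admitted a packing $4$-coloring, then \emph{unfolding} that coloring --- using it as an $n$-periodic pattern along a path --- would yield a packing $4$-coloring of $P_m\odot K_1$ for every $m$, contradicting Theorem~\ref{th:CrPn} (which gives $\pcn(P_{10}\odot K_1)=5$). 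The unfolded coloring is valid because the natural wrap-around map $P_m\odot K_1\to C_n\odot K_1$ is a graph homomorphism, so distances in the path corona are at least the corresponding distances in the cycle corona, while two distinct vertices mapping to the same vertex are at distance at least $n\ge 5>4$. This unfolding trick is the one idea your proposal is missing; with it, the five separate cases $n=5,\dots,9$ disappear.
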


\begin{figure}
\begin{center}
\begin{tikzpicture}[domain=0:16,x=0.8cm,y=0.8cm]
\draw[fill=black] (0,0) circle (2pt); \draw[fill=black] (1,0) circle
(2pt); \draw[fill=black] (2,0) circle (2pt);
\draw[fill=black] (04,0) circle (2pt);\draw[fill=black] (05,0) circle (2pt);\draw[fill=black] (6,0) circle (2pt); \draw[fill=black] (7,0) circle (2pt);

\draw[fill=black] (0,1) circle (2pt); \draw[fill=black] (1,1) circle
(2pt); \draw[fill=black] (2,1) circle (2pt);
\draw[fill=black] (4,1) circle (2pt);\draw[fill=black] (5,1) circle (2pt);\draw[fill=black] (6,1) circle (2pt); \draw[fill=black] (7,1) circle (2pt);

\node[below] at (0,1.75) {$2$};\node[below] at (0,-0.1) {$1$};
\node[below] at (1,1.7) {$3$};\node[below] at (1,-0.1) {$1$};
\node[below] at (2,1.75) {$4$};\node[below] at (2,-0.1) {$1$};

\node[below] at (4,1.7) {$4$};\node[below] at (4,-0.1) {$1$};
\node[below] at (5,1.7) {$1$};\node[below] at (5,-0.1) {$2$};
\node[below] at (6,1.7) {$3$};\node[below] at (6,-0.1) {$1$};
\node[below] at (7,1.7) {$1$};\node[below] at (7,-0.1) {$2$};

\node[below] at (1,-0.81) {$C_{3}\odot K_1$};\node[below] at (5.5,-0.81) {$C_{4}\odot K_1$};

\draw[very thick] (0,1) -- (2,1);\draw[very thick] (4,1) -- (7,1); 
\draw[very thick] (0,0) -- (0,1); \draw[very thick] (1,0) -- (1,1);\draw[very thick] (2,0) -- (2,1);
\draw[very thick] (4,0) -- (4,1);\draw[very thick] (5,0) -- (5,1);\draw[very thick] (6,0) -- (6,1); \draw[very thick] (7,0) -- (7,1);

\draw[very thick] (0,1) .. controls (0.5,1.8) and (1.5,1.8) .. (2,1);
\draw[very thick] (4,1) .. controls (5,2) and (6,2) .. (7,1);
\end{tikzpicture}
\caption{Optimal packing colorings of $C_3\odot K_1$ and $C_4\odot K_1$}
\label{fig:CrC3C4}
\end{center}
\end{figure}
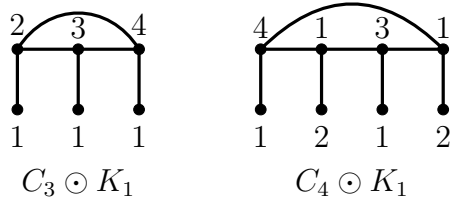

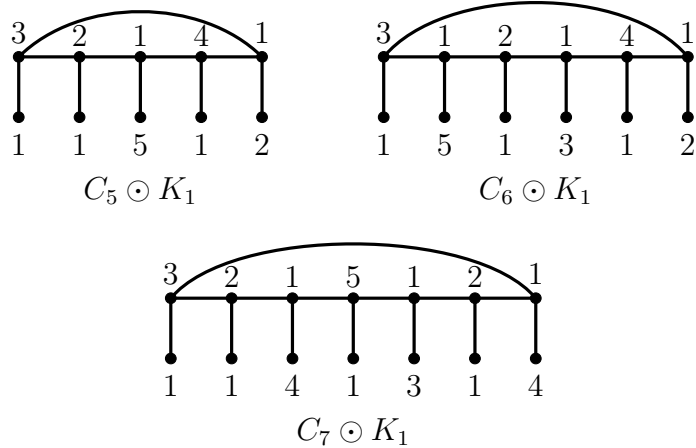
\begin{figure}
\begin{center}
\begin{tikzpicture}[domain=0:11,x=0.8cm,y=0.8cm]

\SOMMET{0,0}; \SOMMET{0,1};
\SOMMET{1,0}; \SOMMET{1,1};
\SOMMET{2,0}; \SOMMET{2,1};
\SOMMET{3,0}; \SOMMET{3,1};
\SOMMET{4,0}; \SOMMET{4,1};

\SOMMET{6,0}; \SOMMET{6,1};
\SOMMET{7,0}; \SOMMET{7,1};
\SOMMET{8,0}; \SOMMET{8,1};
\SOMMET{9,0}; \SOMMET{9,1};
\SOMMET{10,0}; \SOMMET{10,1};
\SOMMET{11,0}; \SOMMET{11,1};

\SOMMET{2.5,-4}; \SOMMET{2.5,-3};
\SOMMET{3.5,-4}; \SOMMET{3.5,-3};
\SOMMET{4.5,-4}; \SOMMET{4.5,-3};
\SOMMET{5.5,-4}; \SOMMET{5.5,-3};
\SOMMET{6.5,-4}; \SOMMET{6.5,-3};
\SOMMET{7.5,-4}; \SOMMET{7.5,-3};
\SOMMET{8.5,-4}; \SOMMET{8.5,-3};


\ETIQUETTE{0,1.75}{3}; \ETIQUETTE{0,-0.1}{1};
\ETIQUETTE{1,1.7}{2}; \ETIQUETTE{1,-0.1}{1};
\ETIQUETTE{2,1.7}{1}; \ETIQUETTE{2,-0.1}{5};
\ETIQUETTE{3,1.7}{4}; \ETIQUETTE{3,-0.1}{1};
\ETIQUETTE{4,1.75}{1}; \ETIQUETTE{4,-0.1}{2};

\ETIQUETTE{6,1.75}{3}; \ETIQUETTE{6,-0.1}{1};
\ETIQUETTE{7,1.7}{1}; \ETIQUETTE{7,-0.1}{5};
\ETIQUETTE{8,1.7}{2}; \ETIQUETTE{8,-0.1}{1};
\ETIQUETTE{9,1.7}{1}; \ETIQUETTE{9,-0.1}{3};
\ETIQUETTE{10,1.7}{4}; \ETIQUETTE{10,-0.1}{1};
\ETIQUETTE{11,1.75}{1}; \ETIQUETTE{11,-0.1}{2};

\ETIQUETTE{2.5,-2.25}{3}; \ETIQUETTE{2.5,-4.1}{1};
\ETIQUETTE{3.5,-2.3}{2}; \ETIQUETTE{3.5,-4.1}{1};
\ETIQUETTE{4.5,-2.3}{1}; \ETIQUETTE{4.5,-4.1}{4};
\ETIQUETTE{5.5,-2.3}{5}; \ETIQUETTE{5.5,-4.1}{1};
\ETIQUETTE{6.5,-2.3}{1}; \ETIQUETTE{6.5,-4.1}{3};
\ETIQUETTE{7.5,-2.3}{2}; \ETIQUETTE{7.5,-4.1}{1};
\ETIQUETTE{8.5,-2.25}{1}; \ETIQUETTE{8.5,-4.1}{4};

\ETIQUETTE{2,-0.81}{$C_{5}\odot K_1$};
\ETIQUETTE{8.5,-0.81}{$C_{6}\odot K_1$};
\ETIQUETTE{5.5,-4.81}{$C_{7}\odot K_1$};


\ARETEH{0,1}{4};
\ARETEV{0,0};
\ARETEV{1,0};
\ARETEV{2,0};
\ARETEV{3,0};
\ARETEV{4,0};

\ARETEH{6,1}{5};
\ARETEV{6,0};
\ARETEV{7,0};
\ARETEV{8,0};
\ARETEV{9,0};
\ARETEV{10,0};
\ARETEV{11,0};

\ARETEH{2.5,-3}{6};
\ARETEV{2.5,-4};
\ARETEV{3.5,-4};
\ARETEV{4.5,-4};
\ARETEV{5.5,-4};
\ARETEV{6.5,-4};
\ARETEV{7.5,-4};
\ARETEV{8.5,-4};


\draw[very thick] (0,1) .. controls (1,2) and (3,2) .. (4,1);

\draw[very thick] (6,1) .. controls (7,2.2) and (10,2.2) .. (11,1);

\draw[very thick] (2.5,-3) .. controls (3.5,-1.8) and (7.5,-1.8) .. (8.5,-3);

\end{tikzpicture}
\caption{Optimal packing colorings of $C_5\odot K_1$, $C_6\odot K_1$ and $C_7\odot K_1$}
\label{fig:CrC5C6C7}
\end{center}
\end{figure}

\begin{proof}
Optimal packing 4-colorings of $C_3\odot K_1$ and $C_4\odot K_1$ are given in Figure~\ref{fig:CrC3C4}.
We claim indeed that these two coronae graphs cannot be packing 3-colored.
If there would exist such colorings then color 1 would necessarily be used for the cycle
and its two neighbors on the cycle would get colors 2 and 3. But then, it would not be possible
to color the pendant neighbor of the vertex with color 1.

Let us now consider $C_n\odot K_1$ with $n\ge 5$.
Figure~\ref{fig:CrC5C6C7} describes 5-colorings of $C_5\odot K_1$, $C_6\odot K_1$ and $C_7\odot K_1$.
Figure~\ref{fig:CrCn} describes ``almost 4-periodic'' packing 5-colorings of $C_n\odot K_1$, $n\ge 8$,
according to the value of $n$ mod 4 (the leftmost pattern of length 4 can be repeated any number of times).
It is not difficult to check that all these colorings are indeed  packing 5-colorings 
and, therefore, $\pcn(C_n\odot K_1)\le 5$  for every $n\ge 5$.

It remains to prove that $\pcn(C_n\odot K_1)\ge 5$  for every $n\ge 5$.
Assume to the contrary that there exists a packing 4-coloring of 
$C_5\odot K_1$. By ``unfolding'' this coloring and considering it as a pattern of
a 5-periodic coloring for coronae of paths we obtain a packing 4-coloring of every corona graph $P_n\odot K_1$, $n\ge 5$,
in contradiction with Theorem~\ref{th:CrPn}. The same argument proves that there is no packing 4-coloring
of $C_n\odot K_1$ for every $n\ge 6$. This completes the proof.
\end{proof}

\begin{figure}
\begin{center}
\begin{tikzpicture}[domain=0:12,x=0.8cm,y=0.8cm]

\SOMMET{0,0}; \SOMMET{0,1};
\SOMMET{1,0}; \SOMMET{1,1};
\SOMMET{2,0}; \SOMMET{2,1};
\SOMMET{3,0}; \SOMMET{3,1};

\SOMMET{6,0}; \SOMMET{6,1};
\SOMMET{7,0}; \SOMMET{7,1};
\SOMMET{8,0}; \SOMMET{8,1};
\SOMMET{9,0}; \SOMMET{9,1};
\SOMMET{10,0}; \SOMMET{10,1};
\SOMMET{11,0}; \SOMMET{11,1};
\SOMMET{12,0}; \SOMMET{12,1};

\SOMMET{0.5,5}; \SOMMET{0.5,6};
\SOMMET{1.5,5}; \SOMMET{1.5,6};
\SOMMET{2.5,5}; \SOMMET{2.5,6};
\SOMMET{3.5,5}; \SOMMET{3.5,6};

\SOMMET{6.5,5}; \SOMMET{6.5,6};
\SOMMET{7.5,5}; \SOMMET{7.5,6};
\SOMMET{8.5,5}; \SOMMET{8.5,6};
\SOMMET{9.5,5}; \SOMMET{9.5,6};
\SOMMET{10.5,5}; \SOMMET{10.5,6};
\SOMMET{11.5,5}; \SOMMET{11.5,6};

\SOMMET{1,10}; \SOMMET{1,11};
\SOMMET{2,10}; \SOMMET{2,11};
\SOMMET{3,10}; \SOMMET{3,11};
\SOMMET{4,10}; \SOMMET{4,11};

\SOMMET{7,10}; \SOMMET{7,11};
\SOMMET{8,10}; \SOMMET{8,11};
\SOMMET{9,10}; \SOMMET{9,11};
\SOMMET{10,10}; \SOMMET{10,11};
\SOMMET{11,10}; \SOMMET{11,11};

\SOMMET{1.5,15}; \SOMMET{1.5,16};
\SOMMET{2.5,15}; \SOMMET{2.5,16};
\SOMMET{3.5,15}; \SOMMET{3.5,16};
\SOMMET{4.5,15}; \SOMMET{4.5,16};

\SOMMET{7.5,15}; \SOMMET{7.5,16};
\SOMMET{8.5,15}; \SOMMET{8.5,16};
\SOMMET{9.5,15}; \SOMMET{9.5,16};
\SOMMET{10.5,15}; \SOMMET{10.5,16};


\ETIQUETTE{0,1.75}{3}; \ETIQUETTE{0,-0.1}{1};
\ETIQUETTE{1,1.7}{1}; \ETIQUETTE{1,-0.1}{5};
\ETIQUETTE{2,1.7}{2}; \ETIQUETTE{2,-0.1}{1};
\ETIQUETTE{3,1.7}{1}; \ETIQUETTE{3,-0.1}{4};

\ETIQUETTE{6,1.7}{3}; \ETIQUETTE{6,-0.1}{1};
\ETIQUETTE{7,1.7}{2}; \ETIQUETTE{7,-0.1}{1};
\ETIQUETTE{8,1.7}{1}; \ETIQUETTE{8,-0.1}{4};
\ETIQUETTE{9,1.7}{5}; \ETIQUETTE{9,-0.1}{1};
\ETIQUETTE{10,1.7}{1}; \ETIQUETTE{10,-0.1}{3};
\ETIQUETTE{11,1.7}{2}; \ETIQUETTE{11,-0.1}{1};
\ETIQUETTE{12,1.75}{1}; \ETIQUETTE{12,-0.1}{4};

\ETIQUETTE{0.5,6.75}{3}; \ETIQUETTE{0.5,4.9}{1};
\ETIQUETTE{1.5,6.7}{1}; \ETIQUETTE{1.5,4.9}{5};
\ETIQUETTE{2.5,6.7}{2}; \ETIQUETTE{2.5,4.9}{1};
\ETIQUETTE{3.5,6.7}{1}; \ETIQUETTE{3.5,4.9}{4};

\ETIQUETTE{6.5,6.7}{3}; \ETIQUETTE{6.5,4.9}{1};
\ETIQUETTE{7.5,6.7}{1}; \ETIQUETTE{7.5,4.9}{5};
\ETIQUETTE{8.5,6.7}{2}; \ETIQUETTE{8.5,4.9}{1};
\ETIQUETTE{9.5,6.7}{1}; \ETIQUETTE{9.5,4.9}{3};
\ETIQUETTE{10.5,6.7}{4}; \ETIQUETTE{10.5,4.9}{1};
\ETIQUETTE{11.5,6.75}{1}; \ETIQUETTE{11.5,4.9}{2};

\ETIQUETTE{1,11.75}{3}; \ETIQUETTE{1,9.9}{1};
\ETIQUETTE{2,11.7}{1}; \ETIQUETTE{2,9.9}{5};
\ETIQUETTE{3,11.7}{2}; \ETIQUETTE{3,9.9}{1};
\ETIQUETTE{4,11.7}{1}; \ETIQUETTE{4,9.9}{4};

\ETIQUETTE{7,11.7}{3}; \ETIQUETTE{7,9.9}{1};
\ETIQUETTE{8,11.7}{2}; \ETIQUETTE{8,9.9}{1};
\ETIQUETTE{9,11.7}{1}; \ETIQUETTE{9,9.9}{5};
\ETIQUETTE{10,11.7}{4}; \ETIQUETTE{10,9.9}{1};
\ETIQUETTE{11,11.75}{1}; \ETIQUETTE{11,9.9}{2};

\ETIQUETTE{1.5,16.75}{3}; \ETIQUETTE{1.5,14.9}{1};
\ETIQUETTE{2.5,16.7}{1}; \ETIQUETTE{2.5,14.9}{5};
\ETIQUETTE{3.5,16.7}{2}; \ETIQUETTE{3.5,14.9}{1};
\ETIQUETTE{4.5,16.7}{1}; \ETIQUETTE{4.5,14.9}{4};

\ETIQUETTE{7.5,16.7}{3}; \ETIQUETTE{7.5,14.9}{1};
\ETIQUETTE{8.5,16.7}{1}; \ETIQUETTE{8.5,14.9}{5};
\ETIQUETTE{9.5,16.7}{2}; \ETIQUETTE{9.5,14.9}{1};
\ETIQUETTE{10.5,16.75}{1}; \ETIQUETTE{10.5,14.9}{4};


\ARETEH{0,1}{3.75}; \ARETEH{5.25,1}{6.75};
\ARETEV{0,0};
\ARETEV{1,0};
\ARETEV{2,0};
\ARETEV{3,0};

\ARETEV{6,0};
\ARETEV{7,0};
\ARETEV{8,0};
\ARETEV{9,0};
\ARETEV{10,0};
\ARETEV{11,0};
\ARETEV{12,0};

\ARETEH{0.5,6}{3.75}; \ARETEH{5.75,6}{5.75};
\ARETEV{0.5,5};
\ARETEV{1.5,5};
\ARETEV{2.5,5};
\ARETEV{3.5,5};

\ARETEV{6.5,5};
\ARETEV{7.5,5};
\ARETEV{8.5,5};
\ARETEV{9.5,5};
\ARETEV{10.5,5};
\ARETEV{11.5,5};

\ARETEH{1,11}{3.75}; \ARETEH{6.25,11}{4.75};
\ARETEV{1,10};
\ARETEV{2,10};
\ARETEV{3,10};
\ARETEV{4,10};

\ARETEV{7,10};
\ARETEV{8,10};
\ARETEV{9,10};
\ARETEV{10,10};
\ARETEV{11,10};

\ARETEH{1.5,16}{3.75}; \ARETEH{6.75,16}{3.75};
\ARETEV{1.5,15};
\ARETEV{2.5,15};
\ARETEV{3.5,15};
\ARETEV{4.5,15};

\ARETEV{7.5,15};
\ARETEV{8.5,15};
\ARETEV{9.5,15};
\ARETEV{10.5,15};


\draw[very thick] (0,1) .. controls (1,2.7) and (11,2.7) .. (12,1);
\draw[very thick] (0.5,6) .. controls (1.5,7.7) and (10.5,7.7) .. (11.5,6);
\draw[very thick] (1,11) .. controls (2,12.7) and (10,12.7) .. (11,11);
\draw[very thick] (1.5,16) .. controls (2.5,17.7) and (9.5,17.7) .. (10.5,16);


\draw[very thick,dashed] (3.5,-0.5) -- (3.5,1.5); \draw[very thick,dashed] (5.5,-0.5) -- (5.5,1.5); 
\ETIQUETTE{4.5,0.15}{...}; \ETIQUETTE{4.5,1.15}{...};

\draw[very thick,dashed] (4,4.5) -- (4,6.5); \draw[very thick,dashed] (6,4.5) -- (6,6.5); 
\ETIQUETTE{5,5.15}{...}; \ETIQUETTE{5,6.15}{...};

\draw[very thick,dashed] (4.5,9.5) -- (4.5,11.5); \draw[very thick,dashed] (6.5,9.5) -- (6.5,11.5); 
\ETIQUETTE{5.5,10.15}{...}; \ETIQUETTE{5.5,11.15}{...};

\draw[very thick,dashed] (5,14.5) -- (5,16.5); \draw[very thick,dashed] (7,14.5) -- (7,16.5); 
\ETIQUETTE{6,15.15}{...}; \ETIQUETTE{6,16.15}{...};

\ETIQUETTE{6,-0.81}{$C_{n}\odot K_1$, $n\ge 8$, $n\equiv 3\pmod{4}$};
\ETIQUETTE{6,4.19}{$C_{n}\odot K_1$, $n\ge 8$, $n\equiv 2\pmod{4}$};
\ETIQUETTE{6,9.19}{$C_{n}\odot K_1$, $n\ge 8$, $n\equiv 1\pmod{4}$};
\ETIQUETTE{6,14.19}{$C_{n}\odot K_1$, $n\ge 8$, $n\equiv 0\pmod{4}$};

\end{tikzpicture}
\caption{Optimal packing colorings of $C_n\odot K_1$, $n\ge 8$}
\label{fig:CrCn}
\end{center}
\end{figure}

\section{\ESOK{Generalized coronae of undirected paths and cycles}}
\label{sec:generalized-coronae}

As observed in the introduction, we know, by Theorem~\ref{th:sloper}, that
for every integer $p\ge 1$,  
$\pcn(P_n\odot pK_1)\le 6$ if $n\le 34$ and $\pcn(P_n\odot pK_1)\le 7$ otherwise.

When considering generalized coronae of paths or cycles, the following proposition is useful:

\begin{proposition}
\ESOK{Let $P_n=x_1\ldots x_n$, \ESOK{$n\ge 2$}, be a path and $P_n\odot pK_1$, $p\ge 1$, be a generalized corona of $P_n$.
Any packing coloring $\pi$ of $P_n\odot pK_1$ with $\pi(x_i)=1$ for some vertex $x_i$ must use \ESOK{at least $p+3$}
colors if $2\le i\le n-1$, or \ESOK{at least $p+2$} colors if $i\in\{1,n\}$.}

Similarly, if $C_n\odot pK_1$, $p\ge 3$, is a generalized corona of $C_n=y_1\ldots y_n$, then 
any packing coloring $\pi'$ of $C_n\odot pK_1$ with $\pi'(y_i)=1$ for some vertex $y_i$ must use \ESOK{at least $p+3$}
colors.
\label{prop:Pn}
\end{proposition}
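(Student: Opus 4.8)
The plan is a short counting argument driven by the fact that color $1$ forbids repetitions at distance $1$, while every color $c\geq 2$ forbids repetitions at distance at most $2$. First I would fix a packing coloring $\pi$ of $P_n\odot pK_1$ with $\pi(x_i)=1$, and let $y_i^1,\dots,y_i^p$ denote the $p$ pendant neighbors of $x_i$. Since each $y_i^j$ is at distance $1$ from $x_i$ we have $\pi(y_i^j)\neq 1$, and since any two of the $y_i^j$ are at distance $2$ (through $x_i$) they cannot share a color $\geq 2$; hence $\pi(y_i^1),\dots,\pi(y_i^p)$ are $p$ pairwise distinct colors, all at least $2$, which together with color $1$ already accounts for $p+1$ colors.

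Next I would bring in the neighbors of $x_i$ on the central path. If $i\in\{1,n\}$, say $i=1$, then $x_2$ is at distance $1$ from $x_1$, so $\pi(x_2)\neq 1$, and at distance $2$ from each $y_1^j$, so $\pi(x_2)$ cannot equal any $\pi(y_1^j)$ (a common value $c\geq 2$ would require $2>c$); thus $\pi(x_2)$ is distinct from the $p+1$ colors already found, giving at least $p+2$ colors. If instead $2\le i\le n-1$, both $x_{i-1}$ and $x_{i+1}$ exist, and the same reasoning forces each of $\pi(x_{i-1}),\pi(x_{i+1})$ to lie outside $\{1\}\cup\{\pi(y_i^j):1\le j\le p\}$; moreover $\pi(x_{i-1})\neq\pi(x_{i+1})$, because $x_{i-1}$ and $x_{i+1}$ lie at distance $2$ and neither is colored $1$. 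Hence $\{1\}\cup\{\pi(y_i^j):1\le j\le p\}\cup\{\pi(x_{i-1}),\pi(x_{i+1})\}$ has size $p+3$, which proves the claimed bounds. For $C_n\odot pK_1$ with $\pi'(y_i)=1$ the argument is the same: the $p$ pendant neighbors of $y_i$ force $p$ distinct colors $\geq 2$, each of the cycle-neighbors $y_{i-1},y_{i+1}$ (indices modulo $n$) gets a color avoiding those $p+1$ colors, and $\pi'(y_{i-1})\neq\pi'(y_{i+1})$ since these two vertices lie at distance $1$ or $2$ and neither is colored $1$; this yields $p+3$ colors.

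The argument is elementary, so there is no serious obstacle; the only care needed is to check that the named vertices are genuinely distinct (clear, since pendant vertices have degree $1$ whereas the path or cycle neighbors of $x_i$ or $y_i$ have degree at least $2$ in the corona) and that the stated pairwise distances are exactly $1$ or $2$ (true because pendant vertices are leaves, hence never interior to a shortest path).
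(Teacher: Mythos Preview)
Your argument is correct and is essentially the same as the paper's: the paper simply observes that if $\pi(x_i)=1$ then no two neighbors of $x_i$ can receive the same color, and counts the degree of $x_i$. You spell out the same observation in more detail by separating the pendant neighbors from the path (or cycle) neighbors, but the underlying idea is identical.
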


\begin{proof}
To see that, simply note that if $\pi(x_i)=1$ then no two neighbors of $x_i$ can receive the same color.
Since the degree of $x_i$ is $p+2$ if $2\le i\le n-1$,
or  $p+1$  if $i\in\{1,n\}$, the claim follows.
The proof if similar for $C_n\odot pK_1$.
\end{proof}

%

In order to describe packing colorings of generalized coronae of paths and cycles, we
will use the following notation in the rest of this paper. Observe first that whenever a vertex
of the path, or the cycle, in any such graph is colored with a color distinct from 1, all
the pendant vertices attached to this vertex can be colored 1. Hence, it is necessary
to give the colors of the pendant vertices only when the color of their neighbor is 1.
In that case, these colors will be given within parenthesis, following the color 1.
Such a sequence of colors, called a {\em pattern}, can thus unambigously describe a packing coloring of a (generalized)
corona of a given path. For instance, the colorings of $P_4\odot K_1$
and $P_5\odot K_1$ given in the previous section (see Figure~\ref{fig:CrPnInf9})
will be denoted by 21(3)41(2) and 21(3)41(3)2, respectively.
For packing colorings of (generalized) coronae of cycles, we will put the whole sequence of
colors in brackets in order to emphasize the fact that the pattern is circular.
For instance, the colorings of $C_5\odot K_1$
and $C_6\odot K_1$ given in the previous section (see Figure~\ref{fig:CrC5C6C7})
will be denoted by [321(5)41(2)] and [31(5)21(3)41(2)], respectively.

Let $u$ and $v$ be two words on the alphabet
of colors, such that $[u]$ is a circular pattern. We will say that the pattern $v$ is {\em compatible with} $[u]$ if
$[uv]$ is a circular pattern.

\medskip

The value of the packing chromatic number of generalized coronae of paths $P_n\odot pK_1$ with $p\ge 4$ is
given by the following theorem:

\begin{theorem}
Let $P_n\odot pK_1$, $p\ge 4$, be a generalized corona of the path $P_n$. Then we have:
$$\ESOK{\pcn(P_n\odot pK_1)= \left\{
			    \begin{array}{ll}
			    	 $2$ & \hbox{if $n=1$,} \\
			    	 $3$ & \hbox{if $n=2$,} \\
			    	 $4$ & \hbox{if $n\in\{3,4\}$,} \\
			    	 $5$ & \hbox{if $5\le n\le 8$,} \\
			    	 $6$ & \hbox{if $9\le n\le 34$,} \\
    			     $7$ & \hbox{otherwise.}
   			 	\end{array}
 			 			\right.
}$$ 
\label{th:PnpK1}
\end{theorem}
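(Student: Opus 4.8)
The plan is to establish matching upper and lower bounds for each range of $n$, exploiting the hereditary property of Proposition~\ref{prop:subgraph} together with the degree-based obstruction of Proposition~\ref{prop:Pn}. First I would dispose of the small cases $n\le 4$ directly: for $n=1$ the graph is a star $K_{1,p}$, which clearly needs exactly $2$ colors; for $n=2$ one checks by hand that $3$ colors suffice and that $2$ do not (two adjacent centers of high degree); for $n\in\{3,4\}$ one exhibits explicit patterns using $4$ colors (e.g.\ assigning color $1$ only to the endpoints $x_1,x_n$, which by Proposition~\ref{prop:Pn} costs only $p+2$ colors on each pendant bundle, but a cleverer coloring avoids this — see below), and shows $3$ colors are impossible, again because any center colored $1$ forces $p+2\ge 6$ colors by Proposition~\ref{prop:Pn}, while no center colored $1$ leaves the path itself $2$-colorable only if $n\le 3$, and even then the pendant vertices attached to a center colored $2$ or $3$ cannot all be accommodated. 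The key structural observation, which I would isolate first, is this: if no vertex $x_i$ of the central path receives color $1$, then the pendant vertices attached to any $x_i$ — all at distance $2$ from each other through $x_i$, and many at distance $\le 4$ overall — must themselves use many colors, essentially forcing a large palette; hence in any economical coloring \emph{most} central vertices are colored $1$, and then Proposition~\ref{prop:Pn} shows every \emph{internal} such vertex already forces $p+3\ge 7$ colors. This is the crux of the matter.

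For the lower bounds in the interesting ranges ($5\le n\le 8$: at least $5$; $9\le n\le 34$: at least $6$; $n\ge 35$: at least $7$) I would argue as follows. In an optimal coloring, color the central path; the subsequence of central vertices \emph{not} colored $1$ induces an independent-ish pattern, and since $1$ is the only color that can appear twice within distance $2$, the vertices colored $1$ form (roughly) every other vertex of the path. If two consecutive central vertices are both $\ne 1$, then between them and their pendant bundles we are forced to spend extra colors; a short case analysis on the first few central vertices shows that a genuinely internal vertex colored $1$ is unavoidable once $n$ is large enough, and then Proposition~\ref{prop:Pn} gives $\pcn\ge p+3\ge 7$. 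To pin down the threshold $n=34/35$ and the threshold $n=8/9$ between $5$ and $6$, I would not re-derive everything from scratch: since $P_n\odot pK_1$ is a caterpillar of length $n$, Theorem~\ref{th:sloper} already gives $\pcn\le 6$ for $n\le 34$ and $\pcn\le 7$ always, and the tightness statement in Theorem~\ref{th:sloper} (the bound $6$ is tight, realized at length $34$, and $7$ is tight) supplies exactly the lower bounds $\pcn\ge 6$ for some $n\le 34$ and $\pcn\ge 7$ for $n$ large — one must check that the extremal caterpillars there can be taken of the form $P_n\odot pK_1$ with $p\ge 4$, or adapt the argument. The remaining lower bounds ($\pcn\ge 5$ for $n\ge 5$, and $\pcn\ge 6$ for $n\ge 9$) I would get by producing, for the boundary values $n=5$ and $n=9$, an explicit finite check that $4$ (resp.\ $5$) colors fail, then propagating upward by Proposition~\ref{prop:subgraph}.

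For the upper bounds I would exhibit explicit periodic patterns, as in the proofs of Theorems~\ref{th:CrPn} and~\ref{th:CrCn}. The guiding idea: color the central path by a short periodic sequence in colors $\{1,2,3\}$ (and occasionally $4$) so that color $1$ appears on the internal vertices as rarely as the constraints allow — ideally never on an internal vertex when $n$ is small. Whenever a central vertex gets a color $c\ne 1$, all $p$ of its pendant neighbors may be colored $1$ (they are pairwise at distance $2$, which is fine). Whenever a central vertex must be colored $1$, its $p$ pendant neighbors need $p$ \emph{distinct} colors all different from $1$ and from the colors of the two neighbouring central vertices; with a periodic pattern of period $4$ on the path one can recycle a fixed block of high colors $\{4,5,\dots,p+2\}$ or $\{4,\dots,p+3\}$ down the path, using the path-distance (each such block is at distance $\ge$ its index from the next copy) to keep it legal — this is exactly the mechanism in Figure~\ref{fig:CrPn}, scaled from $p=1$ to general $p$. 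Concretely, for $5\le n\le 8$ I expect a pattern on $\{1,\dots,5\}$ placing $1$ only on the two ends and on at most non-adjacent internal vertices whose bundles fit into $\{3,4,5\}\setminus\{\text{nbr colors}\}$ — here one needs $p\le$ the number of available high colors, which forces the palette size up to $5$ but no more; for $9\le n\le 34$ one more high color is needed and $6$ suffices; and the caterpillar bound handles $n\ge 35$ with $7$. I would verify each displayed pattern is a legal packing coloring by the usual check that for each color $i$ the monochromatic vertices are pairwise at distance $>i$ — routine but needing care at the seams of the periodic block. The main obstacle I anticipate is the two threshold computations ($n=8$ vs.\ $9$, and $n=34$ vs.\ $35$): proving the lower bounds there cleanly, rather than by the "long and tedious case analysis'' or computer search the authors resort to for the $p=1$ case, is where the real work lies; I would try to reduce it to a finite, bounded check by showing that an optimal coloring of $P_n\odot pK_1$ is determined (up to the behaviour near the two ends) by a bounded-length pattern, so that only finitely many candidate colorings of a bounded-length initial segment need be examined, independently of $p$.
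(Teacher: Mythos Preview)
Your ``key structural observation'' is backwards, and this derails both halves of the argument. You write that if no central vertex receives color~$1$, then the pendant vertices ``must themselves use many colors''; but the opposite is true. If $\pi(x_i)\ne 1$, then \emph{all} pendants of $x_i$ may be colored~$1$: they are pairwise at distance~$2$, and color~$1$ only forbids distance~$1$. Pendants of distinct $x_i$'s are at distance~$\ge 3$, so globally every pendant can receive color~$1$ simultaneously. Hence avoiding color~$1$ on the path costs nothing on the pendant side; it is precisely the winning strategy, not the losing one. Your proposed upper-bound construction, which puts color~$1$ on some central vertices and then spends a block of ``high colors'' $\{4,\dots,p+2\}$ or $\{4,\dots,p+3\}$ on each such pendant bundle, would use a palette growing with~$p$, whereas the theorem asserts a value independent of~$p$.

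The paper's proof exploits exactly this: since $p\ge 4$, Proposition~\ref{prop:Pn} shows that color~$1$ on an internal central vertex already forces at least $p+3\ge 7$ colors, and on an end vertex at least $p+2\ge 6$. So in any coloring with at most~$6$ (resp.~$5$, $4$, \dots) colors, the central path is colored entirely from $\{2,\dots,k\}$, and all pendants receive~$1$. The problem therefore reduces to: what is the longest path admitting a packing coloring with colors $\{2,\dots,k\}$? The answers are $2,4,8,34,\infty$ for $k=3,4,5,6,7$, the last two being Sloper's results cited as Theorem~\ref{th:sloper}. These give both the upper bounds (the explicit patterns $23,\ 234,\ 2342,\ 23425324$, Sloper's length-$34$ word, and Sloper's period-$12$ word on $\{2,\dots,7\}$) and the lower bounds simultaneously, with no case analysis or computer search---the $p\ge 4$ case is the \emph{easy} one, not a harder version of $p=1$. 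Your anticipated ``main obstacle'' at the thresholds $n=8/9$ and $n=34/35$ dissolves once you see that the question is purely about $\{2,\dots,k\}$-patterns on the bare path.
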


\begin{proof}
If $n\le 8$, optimal packing colorings of $P_n\odot pK_1$ are given by the patterns
2, 23, 234, 2342, 23425, 234253, 2342532 and 23425324, respectively.

Note that 23425324 is the longest pattern on five colors which do not use color~1 and, moreover,
none of the patterns 123425324 or 234253241 can be used for coloring $P_9\odot 4K_1$ (the pendant
neighbors of vertices with color~1 cannot be colored). Therefore,
$\pcn(P_9\odot pK_1)\ge 6$. 
In~\cite{S04}, Sloper exhibited the following pattern of length~34, which uses colors~2 to~6, and proved
that no such pattern of greater length exists:
$$23425\ 62342\ 53264\ 23524\ 62352\ 43265\ 2342.$$
As before, this pattern cannot be extended by adding color~1 to the left or to the right, so that $\pcn(P_{35}\odot pK_1)\ge 7$. 
Sloper also gave the circular pattern 
$$[23425\ 62342\ 57],$$
 of length~12, that uses colors~2 to~7, which can be used when $n\ge 35$.
By Proposition~\ref{prop:Pn}, all these colorings are optimal.
\end{proof}

\ESOK{The value of the packing chromatic number of generalized coronae of paths $P_n\odot pK_1$, when $p\in\{2,3\}$, 
is given by the next two results. We will see that the maximum value of the packing 
chromatic number of such graphs is 6, slightly better than the bound given in Theorem~\ref{th:PnpK1}.
This is due to the fact that the number of pendant vertices is now bounded by 3, which allows us
to use color 1 for coloring the vertices of the path $P_n$.}

\begin{theorem}
Let $P_n\odot 2K_1$ be a generalized corona of the path $P_n$. Then we have:
$$\ESOK{\pcn(P_n\odot 2K_1)= \left\{
			    \begin{array}{ll}
			    	 $2$ & \hbox{if $n=1$,} \\
			    	 $3$ & \hbox{if $n=2$,} \\
			    	 $4$ & \hbox{if $n\in\{3,4\}$,} \\
			    	 $5$ & \hbox{if $5\le n\le 11$,} \\
    			     $6$ & \hbox{otherwise.}
   			 	\end{array}
 			 			\right.
}$$ 
\label{th:Pn2K1}
\end{theorem}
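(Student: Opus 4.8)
The plan is to follow the same template as the proof of Theorem~\ref{th:PnpK1}, adapting it to the fact that now exactly $2$ pendant vertices hang from each path vertex, so color $1$ may be used on the path (each degree-$1$ or degree-$2$ neighbor of a vertex colored $1$ can still be distinguished, using colors from $\{2,3,4\}$). First I would dispose of the small cases $n\le 4$: for $n=1$ the graph is $K_{1,2}=P_3$, giving $\pcn=2$; for $n=2$ it contains $K_{1,3}$ hence needs $\ge 3$, and the pattern $1(2\,3)1(2\,4)$ (say) gives a packing $3$-coloring; for $n\in\{3,4\}$, Proposition~\ref{prop:Pn} forces $\ge 4$ as soon as any path vertex gets color $1$ (then $p+3=5$, too many), while \emph{avoiding} color $1$ on all path vertices forces them to be properly colored from $\{2,3,4\}$ at mutual distance constraints that already use $4$ colors; conversely exhibit an explicit $4$-coloring such as $2\,3\,4$ and $2\,3\,4\,2$. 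These are finite checks.

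Next, for the range $5\le n\le 11$ I would exhibit explicit packing $5$-coloring patterns (one per value of $n$, or a small family), of the type used in Theorem~\ref{th:PnpK1} but now allowed to place $1$'s on the path and attach pendant colors in parentheses; for the lower bound $\pcn\ge 5$ when $5\le n\le 11$, note that $P_5\odot 2K_1 \supseteq$ ... actually the cleanest route is Proposition~\ref{prop:Pn} combined with the fact that any packing $4$-coloring with \emph{no} color-$1$ vertex on the path reduces to a proper-type coloring of $P_n$ by $\{2,3,4\}$ with the extra distance conditions, which is impossible already for moderate $n$; and if \emph{some} path vertex gets $1$ then, by Proposition~\ref{prop:Pn}, at least $p+2=4$ colors are used if it is an endpoint and $p+3=5$ if it is internal — so a $4$-coloring can only use color $1$ at the two endpoints, which is easily seen (a short finite analysis, or by inheriting the nonexistence from a slightly larger corona) to be impossible for $n\ge 5$. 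I would phrase this as: by Proposition~\ref{prop:subgraph} it suffices to rule out a packing $4$-coloring of $P_5\odot 2K_1$, done by the above case split, and then to rule one out of $P_{12}\odot 2K_1$... wait, that is the $6$ threshold; I would instead argue the $\ge 5$ bound directly for $n=5$ and conclude for all $5\le n\le 11$ by monotonicity, and separately give $5$-colorings up to $n=11$.

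For $n\ge 12$ I need two things. The lower bound $\pcn(P_n\odot 2K_1)\ge 6$ for $n\ge 12$: by Proposition~\ref{prop:subgraph} it is enough to prove $\pcn(P_{12}\odot 2K_1)\ge 6$. Here the key structural fact is that in a packing $5$-coloring, the path vertices carrying color $1$ must be ``spread out'' (distance $\ge 2$ apart), the colors $4$ and $5$ are each used at most once within any window of $8$ consecutive path vertices roughly, and the pendant pairs at a color-$1$ vertex must be $\{2,3\}$, $\{2,4\}$, $\{3,4\}$, $\{2,5\}$, etc. — this rigidity is exactly what was exploited by Sloper in Theorem~\ref{th:sloper} for caterpillars, and one can quote or re-derive the length bound: the longest pattern on colors $\{1,2,3,4,5\}$ usable for $P_n\odot 2K_1$ has length $11$. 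I expect \textbf{this length bound / nonexistence at $n=12$ to be the main obstacle}: it is a genuine (finite but nontrivial) case analysis, most cleanly delegated to a computer search exactly as the authors did for $P_{10}\odot K_1$ in Theorem~\ref{th:CrPn}, or extracted from Sloper's analysis. For the upper bound $\pcn(P_n\odot 2K_1)\le 6$ for all $n$, I would give a short periodic pattern on colors $\{1,\ldots,6\}$ — e.g.\ a block of length $4$ or $6$ of the form $1(2\,3)\,4\,1(2\,5)\,6\cdots$ with the color-$1$ positions two apart and $4,5,6$ rotating — verify the four distance conditions on one period (colors $4,5,6$ appear once per period and periods have length $\ge 5$, color $1$'s are at distance $\ge 2$, colors $2,3$ on pendants are at distance $\ge 3$ since consecutive color-$1$ path vertices are $\ge 2$ apart making their pendants $\ge ...$), handle the two boundary blocks at the ends of the path by a finite adjustment, and invoke Theorem~\ref{th:sloper} as a sanity check (every such corona is a caterpillar of length $n$, so $\le 7$ always and $\le 6$ for $n\le 34$ is automatic — in fact we must do slightly better than Sloper for $n>34$, which is where the explicit periodic $6$-pattern is essential and is the real content of the upper bound).
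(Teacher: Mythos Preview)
Your overall strategy is the paper's: explicit patterns for the upper bounds in each range, Proposition~\ref{prop:Pn} plus short finite checks for the small-$n$ lower bounds, and a computer verification that $P_{12}\odot 2K_1$ admits no packing $5$-coloring (this is exactly what the paper does). Two concrete points do not go through as written.

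First, your $n=2$ case is broken on both sides. The pattern $1(23)\,1(24)$ assigns colour $1$ to two adjacent path vertices, so it is not a packing coloring at all (and it uses four colours, not three); and ``contains $K_{1,3}$'' does not force $\pcn\ge 3$, since $\pcn(K_{1,3})=2$ (centre $2$, leaves $1$). The fix is easy --- the paper uses the pattern $23$ (all four pendants get $1$), and the lower bound comes from the subgraph $P_4$ --- but as stated your argument is wrong.

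Second, and more substantially, the upper bound $\pcn(P_n\odot 2K_1)\le 6$ for all $n$ requires an explicit periodic pattern, and the shape you propose --- colour~$1$ on every second path vertex with $4,5,6$ rotating on the rest --- cannot be made to work. If the pendants at the colour-$1$ vertices are $\{2,3\}$, then the intermediate path vertices (at mutual distance~$2$) are forced into $\{4,5,6\}$; but a density count shows no infinite word on $\{4,5,6\}$ exists with the required spacings (in any $12$ consecutive such positions, colour~$4$ and colour~$5$ each occur at most four times and colour~$6$ at most three times, totalling at most $11$). Allowing larger colours on some pendants, as in your $1(23)\,4\,1(25)\,6\ldots$, does not rescue this. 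The paper's working pattern is the period-$12$ circular word $[1(36)\,24325\,62342\,5]$, which places only \emph{one} colour-$1$ path vertex per period; producing such a pattern --- and thereby genuinely beating Sloper's bound of $7$ for $n>34$ --- is the real content of the upper bound, and your sketch does not reach it.
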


\begin{proof}
To see that $\pcn(P_n\odot 2K_1)\le 6$ for every $n$, it is enough to use the following
circular pattern of length~12:
$$[1(36)2432\ 56234\ 25].$$
Since $P_m\odot pK_1$ is a subgraph of $P_n\odot pK_1$ for all $m\le n$, 
every packing $\ell$-coloring of $P_n\odot pK_1$ induces a packing $\ell$-coloring
of $P_m\odot pK_1$. Therefore, it suffices to construct optimal packing colorings
of $P_1\odot 2K_1$, $P_2\odot 2K_1$, $P_4\odot 2K_1$ and $P_{11}\odot 2K_1$,
to get that all the claimed values 
are upper bounds.
This can be done by using the  patterns
$2$, $23$, $2342$ and $1(35)243251(23)4231(25)$, 
respectively.

\ESOK{To finish the proof, we need to show that all these bounds are tight.
This is obvious for $n=1$ and this is a direct consequence of Proposition~\ref{prop:Pn}, for $2\le n\le 4$, since
it implies that we cannot use color~1 on the vertices of the path, so that no packing coloring using less
colors than stated in the theorem can exist in those cases.
For $n=5$, Proposition~\ref{prop:Pn} again implies that we cannot use color~1 for the vertices of $P_5$
in a packing 4-coloring and it is easily checked that no such pattern exists (the longest one is $2342$).
Finally, we have to check that there exists no packing 5-coloring of $P_{12}\odot 2K_1$. We did it by means of
a computer program.
}%
\end{proof}

\begin{theorem}
Let $P_n\odot 3K_1$ be a generalized corona of the path $P_n$. Then we have:
$$\ESOK{\pcn(P_n\odot 3K_1)= \left\{
			    \begin{array}{ll}
			    	 $2$ & \hbox{if $n=1$,} \\
			    	 $3$ & \hbox{if $n=2$,} \\
			    	 $4$ & \hbox{if $n\in\{3,4\}$,} \\
			    	 $5$ & \hbox{if $5\le n\le 8$,} \\
    			     $6$ & \hbox{otherwise.}
   			 	\end{array}
 			 			\right.
}$$ 
\label{th:Pn3K1}
\end{theorem}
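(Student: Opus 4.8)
The plan is to mirror the structure of the proof of Theorem~\ref{th:Pn2K1}, adapting the patterns to the case $p=3$. First, for the small values $n\le 8$, I would exhibit explicit optimal packing colorings by patterns on the colors $2,3,4,5$ that avoid color~$1$ entirely; the natural candidates are the same prefixes as in Theorem~\ref{th:PnpK1}, namely $2$, $23$, $234$, $2342$, $23425$, $234253$, $2342532$, $23425324$ for $n=1,\dots,8$, and these are valid here because having $3$ pendant vertices causes no new constraint on a pattern that never uses color~$1$. The lower bounds for $n\le 8$ follow exactly as in Theorem~\ref{th:Pn2K1}: for $2\le n\le 4$, Proposition~\ref{prop:Pn} forbids color~$1$ on any vertex $x_i$ of the path in a packing coloring with fewer than the claimed number of colors (since that would force at least $p+2=5$ colors when $i\in\{1,n\}$ and at least $p+3=6$ colors otherwise), so one is reduced to packing colorings of a path-plus-leaves using only colors $\ge 2$, and a trivial argument bounds such patterns; for $5\le n\le 8$ the same reasoning shows one cannot use color~$1$ in a packing $4$-coloring of $P_5\odot 3K_1$, and the longest color-$\{2,3,4\}$ pattern is $2342$ of length $4<5$, giving $\pcn\ge 5$.

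Second, for the upper bound $\pcn(P_n\odot 3K_1)\le 6$ for all $n$, I would produce a circular pattern on colors $1,\dots,6$ (with parenthesised triples of pendant colors following each occurrence of color~$1$ on the path) of some fixed length $L$, analogous to the length-$12$ circular pattern $[1(36)2432\ 56234\ 25]$ used for $p=2$; here each color-$1$ vertex needs three pendant colors drawn from $\{2,3,4,5,6\}$ distinct from each other and from the colors of its two path-neighbours, and one must verify the packing constraints for colors $2$ (distance $>2$), $3$ (distance $>3$), etc., across the periodic extension. Concretely I would search for a pattern such as a length-$8$ or length-$12$ block, check by hand (or state that it is checked) that it extends periodically to a valid packing $6$-coloring, and conclude $\pcn(P_n\odot 3K_1)\le 6$. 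Then, to pin down the small exceptional values, I note that $P_m\odot 3K_1$ is a subgraph of $P_n\odot 3K_1$ for $m\le n$, so it suffices to give optimal packing colorings of $P_1\odot 3K_1$, $P_2\odot 3K_1$, $P_4\odot 3K_1$ and $P_8\odot 3K_1$: the first three by the patterns $2$, $23$, $2342$ (each using the stated number of colors and optimal by Proposition~\ref{prop:Pn}), and $P_8\odot 3K_1$ by an explicit packing $5$-pattern, most naturally one that uses color~$1$ on some path vertices together with parenthesised pendant triples, or simply the color-$\{2,3,4,5\}$ pattern $23425324$.

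Third, the remaining and most delicate point is the tightness of the bound $6$ for $n\ge 9$, i.e.\ showing there is no packing $5$-coloring of $P_9\odot 3K_1$. By Proposition~\ref{prop:Pn}, a packing $5$-coloring using color~$1$ on an interior path vertex would need $p+3=6$ colors, a contradiction, and using color~$1$ on an endpoint would need $p+2=5$ colors --- exactly $5$ --- so color~$1$ may appear on the path only at $x_1$ and/or $x_n$, and at such a vertex the $p+1=4$ neighbours (its unique path-neighbour plus its $3$ pendants) must receive the four distinct colors $\{2,3,4,5\}$. Any other path vertex must get a color in $\{2,3,4,5\}$. So one is looking for a pattern of the form $1(\ast\ast\ast)w$ or $w$ or $1(\ast\ast\ast)w1(\ast\ast\ast)$ where $w$ is a word of length $\ge 7$ on $\{2,3,4,5\}$ forming a valid packing coloring of the corresponding sub-caterpillar; since the pendants of non-color-$1$ path vertices can always be colored $1$, the constraint on $w$ is just that $w$ itself (as colors on a path, ignoring pendants) is a packing coloring using colors $2,3,4,5$ with the additional distance constraints at the two ends coming from the color-$1$ endpoint vertices. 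I expect this to require a short but genuine case analysis --- essentially: the longest packing-coloring pattern on $\{2,3,4,5\}$ for a path is $23425324$ of length $8$, and one checks that neither it nor any length-$8$ such pattern can be sandwiched between two endpoint color-$1$ gadgets, nor extended to length $9$ --- and I would either carry out this finite check explicitly or, following the style of Theorem~\ref{th:Pn2K1}, remark that the nonexistence of a packing $5$-coloring of $P_9\odot 3K_1$ was verified by computer. The main obstacle is thus making the $n=9$ lower bound fully rigorous by hand rather than by appeal to computation; everything else is routine pattern-exhibition plus Proposition~\ref{prop:Pn}.
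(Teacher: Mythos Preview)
Your approach mirrors the paper's. For the upper bound the paper exhibits the explicit length-$14$ circular pattern $[1(234)5234\ 26325\ 4326]$; for the lower bound at $n=9$ the paper simply invokes Proposition~\ref{prop:Pn} to rule out color~$1$ on the path vertices and then notes that the longest $\{2,3,4,5\}$-pattern is $23425324$ of length~$8$, with no computer check. Your more careful treatment of the endpoints (where Proposition~\ref{prop:Pn} only gives $p+2=5$ colors, not $6$) is in fact a subtlety the paper glosses over, so your proposed short case analysis there is a genuine improvement rather than a detour.
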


\begin{proof}
To see that $\pcn(P_n\odot 3K_1)\le 6$ for every $n$, it is enough to consider the following
circular pattern of length~14:
$$[1(234)5234\ 26325\ 4326].$$
\ESOK{As before, it suffices to construct optimal packing colorings
of $P_1\odot 3K_1$, $P_2\odot 3K_1$, $P_4\odot 3K_1$ and $P_8\odot 3K_1$,
to get that all the claimed values 
are upper bounds.
This can be done by using the  patterns 
$2$, $23$, $2342$ and $23425324$,
respectively.
}%

\ESOK{To finish the proof, we need to show that all these bounds are tight.
This is obvious for $n=1$ and this is a direct consequence of Proposition~\ref{prop:Pn}, for $n\in\{2,3,5,9\}$, since
it implies that we cannot use color~1 on the vertices of the path. It is then not difficult
to check that the longest such patterns are the ones given above, and the result follows.
}%
\end{proof}

We now turn to generalized coronae of cycles $C_n\odot pK_1$. 
When $p\ge 4$, we have the following (note the particular case when $n=11$):

\begin{theorem}
Let $C_n\odot pK_1$, $p\ge 4$, be a generalized corona of the cycle $C_n$. Then we have:
$$\pcn(C_n\odot pK_1)= \left\{
			    \begin{array}{ll}
			    	 $4$ & \hbox{if $n=3$,} \\
			    	 $5$ & \hbox{if $n=4$,} \\
			    	 $6$ & \hbox{if $n\in\{5,6\}$,} \\
			    	 $8$ & \hbox{if $n=11$,} \\
			    	 $7$ & \hbox{otherwise.} \\
   			 	\end{array}
 			 			\right.
$$ 
\label{th:Cn4K1}
\end{theorem}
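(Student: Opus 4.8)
The plan is to mimic the structure of the proofs for generalized coronae of paths (Theorems~\ref{th:PnpK1}--\ref{th:Pn3K1}), adapted to the cyclic setting, and to exploit Proposition~\ref{prop:Pn}, which tells us that any packing coloring of $C_n\odot pK_1$ that uses color~1 on the cycle needs at least $p+3\ge 7$ colors. First I would dispose of the small cases $n\in\{3,4,5,6\}$ by exhibiting explicit optimal circular patterns (using colors $\ge 2$ only, since using color~1 on the cycle is never helpful when $p\ge 4$) and proving matching lower bounds: for $n=3$ the cycle $C_3$ forces three mutually adjacent vertices, hence three distinct colors among $\{2,3,4,\dots\}$ plus the pendants all colored~1, giving $\pcn=4$; similar short arguments (counting how many vertices can carry colors $2,3,\dots$ in a window, together with Proposition~\ref{prop:subgraph} applied to the path-corona results of Theorem~\ref{th:PnpK1}) settle $n=4,5,6$.

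For the generic case $n\ge 7$, $n\ne 11$, the upper bound $\pcn(C_n\odot pK_1)\le 7$ would be established by giving a family of circular patterns on colors $\{2,\dots,7\}$ covering each residue class of $n$ modulo a suitable period (most likely period~4, as in Figure~\ref{fig:CrCn}), patching in a few short correction blocks for the residues that do not divide evenly. The lower bound $\pcn(C_n\odot pK_1)\ge 7$ comes from Proposition~\ref{prop:Pn}: if only $6$ colors were available we could not use color~1 on the cycle (that needs $p+3\ge 7$), so all $n$ cycle-vertices would be colored from $\{2,3,4,5,6\}$; but such a coloring, read around the cycle, is a circular packing coloring of $C_n$ using colors $2,\dots,6$ with the extra constraint imposed by the pendants (each vertex colored~$c$ forbids repeating $c$ within distance~$c$, and additionally two cycle-vertices at distance~$2$ cannot both avoid color~1), and one shows by the same "unfolding into a path" trick used for Theorem~\ref{th:CrCn} that no such pattern exists — equivalently, Sloper's length-$34$ bound on patterns over $\{2,\dots,6\}$ becomes a finite obstruction once wrapped into a cycle. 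The case $n=11$ is the subtle one: here the claim is $\pcn(C_{11}\odot pK_1)=8$, so we must (i) give an explicit circular $8$-pattern of length~$11$ on colors $\{2,\dots,8\}$, and (ii) prove that no circular $7$-pattern of length~$11$ over $\{2,\dots,7\}$ exists — note $7$ colors still forbids color~1 on the cycle since $p+3\ge7$ only gives $\ge 7$, and in fact with exactly $7$ colors one \emph{could} a priori use color~1 when $p=4$, so for $p=4$ one must separately rule out $7$-colorings that do place color~1 on $C_{11}$, while for $p\ge5$ color~1 on the cycle already forces $\ge 8$ colors.

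The main obstacle is precisely the $n=11$ lower bound $\pcn(C_{11}\odot pK_1)\ge 8$, together with the uniform $n\ge 7$, $n\ne 11$ lower bound: these are finite but genuinely combinatorial non-existence statements about circular patterns of bounded length over a $5$- or $6$-letter color alphabet with distance constraints. I expect the cleanest route is a short case analysis on the multiset of colors used around the cycle (how many $2$'s, $3$'s, $\dots$ can fit in a cycle of length~$n$ given that color~$c$ repeats only at circular distance~$>c$), which for $n=11$ pins things down tightly enough to force a contradiction; alternatively, and more in the spirit of the rest of the paper, one invokes a computer search to certify that the required $7$-pattern of length~$11$ does not exist and that the generic $6$-color circular patterns fail for every $n\ge 7$, $n\ne 11$. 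I would present the explicit patterns and the ``unfolding'' reduction to Theorem~\ref{th:CrCn}/Theorem~\ref{th:PnpK1} in full, and defer the finitely-many residual non-existence checks to such a search, exactly as was done for $\pcn(P_{10}\odot K_1)\ge 5$ and $\pcn(P_{12}\odot 2K_1)=6$.
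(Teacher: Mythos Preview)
Your plan matches the paper's approach closely. A few points of comparison and correction:

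\textbf{Upper bound construction.} The paper does not use period~4. It gives explicit circular patterns over $\{2,\dots,7\}$ for each $n\in\{7,\dots,15\}\setminus\{11\}$, observes that the length-$8$ pattern $[23425367]$ is compatible with all of them (and with itself), and covers every $n\ge 16$ by concatenating copies of the length-$8$ block with one terminal block of length $8,\dots,15$; the residue $n\equiv 3\pmod 8$ is handled separately with a length-$19$ block. Your ``period with correction blocks'' idea is the right shape, just with the wrong guess of period.

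\textbf{The $n=11$ lower bound.} The paper does \emph{not} defer this to computer search. It gives a short hand argument: if color~1 is absent from the cycle then color~2 fits at most three times, colors~3 and~4 at most twice each, colors~5,6,7$ at most once each, totalling $10<11$; if color~1 is used on some $x_i$ then (for the critical case $p=4$) its pendants must take $\{2,3,4,5\}$, which forces the local pattern $\dots 43271(2345)6234\dots$, and one checks that no circular completion of length~11 exists (the shortest completion has length~14). Your observation that for $p=4$ one must separately treat colorings placing~1 on the cycle is exactly right and is what drives this second half.

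\textbf{Generic lower bound $\pcn\ge 7$ for $n\ge 7$.} Your unfolding argument is correct: a circular pattern over $\{2,\dots,6\}$ of length $n\ge 7$ unfolds to an arbitrarily long path pattern over $\{2,\dots,6\}$, contradicting Sloper's length-$34$ bound (equivalently, Theorem~\ref{th:PnpK1}). In fact the paper's written proof is silent on this point, so your version is more explicit here.

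\textbf{One confusion to drop.} Your parenthetical ``additionally two cycle-vertices at distance~$2$ cannot both avoid color~1'' is not a real constraint. Once Proposition~\ref{prop:Pn} forbids color~1 on the cycle, the pendants (all colored~1) impose nothing further between cycle vertices; the question reduces cleanly to whether $C_n$ itself admits a packing coloring using only colors $\{2,\dots,6\}$.
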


\begin{proof}
Note first that by Proposition~\ref{prop:Pn}, since $p\ge 4$, color 1 cannot be used 
on the vertices of $C_n$ in any packing coloring of $C_n\odot pK_1$ using at most 6 colors.

Packing colorings of $C_n\odot pK_1$, for $3\le n\le 6$, are given by the following circular patterns:
$$[234]\ \ \ [2345]\ \ \ [23456]\ \ \ [234256].$$
It is not difficult to check that these packing colorings are optimal.

On the other hand, a packing 8-coloring of $C_{11}\odot pK_1$ is given by the following circular pattern:
$$[23425324678].$$
Let us show that no packing 7-coloring of $C_{11}\odot pK_1$ can exist.
If color 1 is not used then, due to the length of the cycle, color 2 can be used
at most three times, colors 3 and 4 at most twice each, and colors 5, 6 and 7 at most once each.
Hence, at most 10 vertices of the cycle can be colored.
Now, if color 1 is used on the cycle, then the pendant vertices must be colored 2, 3, 4 and 5, as otherwise
the packing coloring cannot be extended far enough. The coloring is then ``forced'' around the color 1
as $\dots 43271(2345)6234\dots$. It is then easy to check that this pattern cannot be
extended to a packing 7-coloring of $C_{11}\odot pK_1$ (the smallest extension has length
14 and is given by $[43271(2345)623425362]$).

Packing 7-colorings of $C_n\odot pK_1$, for $7\le n\le 15$, $n\neq 11$, are given by the following
circular patterns:
$$\begin{array}{rl}
n=7: & [2342567]; \\
n=8: & [23425367]; \\
n=9: & [234253267]; \\
n=10: & [2342532467]; \\
n=12: & [234253246257]; \\
n=13: & [2342532462357]; \\
n=14: & [23425362432576]; \\
n=15: & [234253264235276]. 
\end{array}$$

Moreover, all the above circular patterns for $n\ge 9$ are compatible with the circular
pattern $[23425367]$ of length 8. Hence, if $n\ge 16$, $n=8q+r$ with $0\le r\le 7$, $r\neq 3$,
a packing 7-coloring of $C_n\odot pK_1$ can be obtained by combining $q-1$ patterns of length 8
followed by a pattern of length $q+r$ (if $r=0$, we thus have $q$ occurrences of the pattern of length 8).

Finally, for $n=8q+3$, $q\ge 2$, a packing 7-coloring of $C_n\odot pK_1$ can be obtained by combining $q-2$ patterns of length 8
followed by the circular pattern $[2342532462352432657]$ of length 19, which is also compatible with $[23425367]$.
This concludes the proof.
\end{proof}

We now consider the remaining cases, that is $p\in\{2,3\}$.
For $p=2$, we have the following (note the particular case when $n=9$):

\begin{theorem}
Let $C_n\odot 2K_1$ be a generalized corona of the cycle $C_n$. Then we have:
$$\pcn(C_n\odot 2K_1)= \left\{
			    \begin{array}{ll}
			    	 $4$ & \hbox{if $n=3$,} \\
			    	 $5$ & \hbox{if $n=4$,} \\
			    	 $7$ & \hbox{if $n=9$,} \\
			    	 $6$ & \hbox{otherwise.} \\
   			 	\end{array}
 			 			\right.
$$ 
\label{th:Cn2K1}
\end{theorem}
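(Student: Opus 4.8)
The structure mirrors the proof of Theorem~\ref{th:Cn4K1}, so I would follow the same three-part template: small cases, a special case, and then a modular construction for all large~$n$ together with a matching lower bound. First, for $n\in\{3,4\}$ the values $4$ and $5$ come from $C_3\odot 2K_1$ and $C_4\odot 2K_1$: the upper bounds are given by short circular patterns (e.g.\ $[234]$ and $[2345]$), and the lower bounds follow from Proposition~\ref{prop:Pn} (color~$1$ cannot appear on the cycle when $p=2$ only if the number of colors is $\le p+2=4$; more directly, one checks by hand, as in the proof of Theorem~\ref{th:CrCn}, that $3$ colors are impossible for $n=3$ and that $5$ are needed for $n=4$). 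For the generic case $n\ge 5$, $n\neq 9$, I would exhibit a packing $6$-coloring: since $p=2$, color~$1$ \emph{may} be used on the cycle (each cycle vertex has degree $p+2=4$, so its four neighbors just need four distinct colors), and I expect a circular pattern of some small length — analogous to the length-$12$ pattern $[1(36)2432\,56234\,25]$ from Theorem~\ref{th:Pn2K1} — to work, with a handful of short ad hoc circular patterns for the smallest residues (say $n=5,6,7,8,10,11$) and then a ``base pattern plus filler'' argument using compatibility of patterns (the notion of $v$ compatible with $[u]$ defined just before Theorem~\ref{th:PnpK1}) to cover every sufficiently large $n$ in each residue class modulo the base length. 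The matching lower bound $\pcn(C_n\odot 2K_1)\ge 6$ for $n\ge 5$ follows by the ``unfolding'' trick used in Theorem~\ref{th:CrCn}: a packing $5$-coloring of $C_n\odot 2K_1$ would unfold to a periodic packing $5$-coloring of arbitrarily long $P_m\odot 2K_1$, contradicting Theorem~\ref{th:Pn2K1}, which caps packing $5$-colorability at $n=11$.

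The genuinely interesting part is the exceptional value $\pcn(C_9\odot 2K_1)=7$. For the upper bound I would simply give an explicit circular pattern of length~$9$ on colors $1,\dots,7$ and verify it is a valid packing coloring (routine). The lower bound — that no packing $6$-coloring of $C_9\odot 2K_1$ exists — is the heart of the theorem and I would argue it by a short structural case analysis on whether color~$1$ is used on the $9$-cycle, imitating the $C_{11}$ argument in Theorem~\ref{th:Cn4K1}. If color~$1$ is \emph{not} used on the cycle, then on a cycle of length~$9$ the available colors $2,\dots,6$ can cover at most $3+2+2+1+1 = 9$ vertices, and one checks this is achievable only in a way that forces the pendant vertices into a conflict, or — more cleanly — that the five ``free'' colors cannot simultaneously meet the packing distance constraints \emph{on the cycle and} leave room to color the $18$ pendant vertices with colors from $\{1,\dots,6\}$ at distance constraints. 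If color~$1$ \emph{is} used on the cycle at some vertex $y_i$, then by Proposition~\ref{prop:Pn} the two cycle-neighbors and the two pendant neighbors of $y_i$ receive four distinct colors from $\{2,\dots,6\}$, and as in the $C_{11}$ case the coloring is essentially ``forced'' in a neighborhood of $y_i$; I would then show this forced local pattern cannot be closed up into a circular pattern of length~$9$ (the relevant minimal closed extensions being longer than $9$, exactly as the length-$14$ obstruction $[43271(2345)623425362]$ appears for $C_{11}$).

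The main obstacle, then, is precisely this $C_9$ lower bound: getting the case analysis tight and genuinely complete rather than merely plausible. The $n=9$ exception is delicate because $9$ is just small enough that the ``color~$1$ not used'' count ($9 = 3+2+2+1+1$) is exactly tight, so one cannot dismiss it by a crude counting inequality and must actually rule out the finitely many extremal color-$1$-free cyclic arrangements; and the ``color~$1$ used'' branch requires carefully tracking how far the forced pattern propagates around a cycle of length only~$9$. As in the companion theorems, it is legitimate (and probably what I would do for the write-up) to reduce the residual finite check to a computer verification, stating explicitly that an exhaustive search confirms no packing $6$-coloring of $C_9\odot 2K_1$ exists; this is consistent with the paper's practice in Theorems~\ref{th:CrPn} and~\ref{th:Pn2K1}. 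Everything else — the small patterns, the periodic $6$-coloring, and the unfolding lower bound — is routine bookkeeping of the kind already carried out repeatedly in the preceding proofs.
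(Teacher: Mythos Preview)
Your plan is correct and follows essentially the same template as the paper's own proof: explicit circular patterns for small $n$, a base-plus-compatible-filler construction for large $n$ (the paper uses the length-$7$ base pattern $[1(23)423526]$ rather than the length-$12$ one you guessed), the unfolding lower bound via Theorem~\ref{th:Pn2K1}, and for $n=9$ an explicit $7$-coloring together with exactly the two-branch case analysis you describe. The only minor deviation is that the paper does the $n=9$ lower bound by hand (sketching your ``color~$1$ not used'' count $3+2+2+1+1=9$ and then the forcing when color~$1$ is used) rather than by computer.
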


\begin{proof}
The packing colorings of $C_n\odot 2K_1$, for $n\le 13$, $n\neq 9$ are given by the following
circular patterns:
$$\begin{array}{rl}
n=3: & [234]; \\
n=4: & [2345]; \\
n=5: & [23456]; \\
n=6: & [234256]; \\
n=7: & [1(23)423526]; \\
n=8: & [1(24)3251(24)326]; \\
n=10: & [1(23)41(23)523421(35)6]; \\
n=11: & [1(23)4231(25)624325]; \\
n=12: & [1(23)41(23)521(26)423526]; \\
n=13: & [1(23)41(23)5231(26)423526]. \\
\end{array}$$
It is not difficult to check that these colorings are optimal for $n\le 6$.
For $n\ge 7$, any packing 5-coloring of $C_n\odot 2K_1$ would induce a packing
5-coloring of $P_{12}\odot 2K_1$, in contradiction with Theorem~\ref{th:Pn2K1}. 

We now consider the case $n\ge 14$. Similarly, no packing 5-coloring of
$C_n\odot 2K_1$ can exist in this case. 
All the patterns given above for $n\ge 8$ are compatible with the circular pattern  $[1(23)423526]$ of length 7.
Moreover, the pattern $423524326$ of length 9 is also compatible with the same pattern $[1(23)423526]$.
This allows us to construct a packing 6-coloring of any generalized corona
$C_n\odot 2K_1$ with $n\ge 14$. If $n=7q+r$, with $q\ge 2$ and $0\le r<7$, the coloring
is obtained by repeating $q-1$ times the pattern $u$ of length 7 and adding the compatible pattern
of length $7+r$ (note that since the pattern $u$ is a circular pattern, it is compatible with itself).

The last case to consider is the case $n=9$.
A packing 7-coloring of $C_9\odot 2K_1$ is given by the circular pattern 
$$[1(24)3251(24)3267].$$
It is then tedious but not difficult to check that $C_9\odot 2K_1$
does not admit any packing 6-coloring.
 (The main idea is that in such a case,
each of the colors 4, 5 and 6 can be used only once
on the vertices of $C_9$
while the color 3 can be used at most twice and the color 2 at most three times,
so that color 1 has to be used on some vertex of $C_9$;
but in that case, the colors assigned to the pendant neighbors of this vertex
forces the color 1 to be used again on the cycle, leading eventually to a contradiction.)
\end{proof}

Finally, for $p=3$, we have the following:
\begin{theorem}
Let $C_n\odot 3K_1$ be a generalized corona of the cycle $C_n$. Then we have:
$$\pcn(C_n\odot 3K_1)= \left\{
			    \begin{array}{ll}
			    	 $4$ & \hbox{if $n=3$,} \\
			    	 $5$ & \hbox{if $n=4$,} \\
			    	 $7$ & \hbox{if $n\in\{7,\dots,13,15,\dots,22,24,\dots,27,30,\dots,36,39,40,41\}$} \\
			    	 	 & \hbox{$\ \ \ \ \cup\ \{45,47,\dots,50,53,54,55,59,62,63,64,68,77,78,91\}$,} \\
			    	 $6$ & \hbox{otherwise.} \\
   			 	\end{array}
 			 			\right.
$$ 
\label{th:Cn3K1}
\end{theorem}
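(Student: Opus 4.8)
The plan is to follow the same scheme as in the proofs of Theorems~\ref{th:CrCn}, \ref{th:Cn4K1} and~\ref{th:Cn2K1}: settle the small cases and the general lower bound first, then produce explicit (circular) packing colorings, and finally reduce the determination of the exact exceptional set to a bounded computation backed by a structural ``forced pattern'' argument.

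\medskip

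\emph{Lower bounds.} For $n=3$ and $n=4$ one argues as in Theorem~\ref{th:CrCn}: in a putative packing $3$- (resp.\ $4$-) coloring of $C_3\odot 3K_1$ (resp.\ $C_4\odot 3K_1$) color~$1$ must appear on the cycle, and then Proposition~\ref{prop:Pn} with $p=3$ forces at least $6$ colors, a contradiction; the circular patterns $[234]$ and $[2345]$ show that $4$ and~$5$ colors suffice, respectively. For every $n\ge 5$ I claim $\pcn(C_n\odot 3K_1)\ge 6$. When $n\ge 9$ this is immediate from Proposition~\ref{prop:subgraph}, since $C_n\odot 3K_1$ contains $P_9\odot 3K_1$ and $\pcn(P_9\odot 3K_1)=6$ by Theorem~\ref{th:Pn3K1}. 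For $n\in\{5,6,7,8\}$ one uses the diameter bound: a color $c\ge 2$ can occur on at most $\lfloor n/(c+1)\rfloor$ vertices of $C_n$, so the colors $2,\dots,5$ together can cover fewer than $n$ cycle vertices (for $n=5$, each of $2,3,4,5$ occurs at most once), whence color~$1$ must be used on the cycle and Proposition~\ref{prop:Pn} again yields $\ge 6$. For $n=7,8$, which lie in the ``$=7$'' set, the stronger bound $\ge 7$ is obtained together with the rest of that set below.

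\medskip

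\emph{Upper bounds.} Since $C_n\odot 3K_1$ is a subgraph of $C_n\odot 4K_1$, Proposition~\ref{prop:subgraph} and Theorem~\ref{th:Cn4K1} give $\pcn(C_n\odot 3K_1)\le 7$ for all $n\ge 7$ with $n\ne 11$; for $n=11$ the bound $\pcn(C_{11}\odot 3K_1)\le 7$ is obtained from an explicit circular pattern on $7$ colors that uses color~$1$ on the cycle, which is now possible because each cycle vertex has only $3$ pendants. For the values of $n$ where the claimed answer is~$6$ we exhibit packing $6$-colorings. The circular pattern $[1(234)5234\,26325\,4326]$ of length~$14$ from the proof of Theorem~\ref{th:Pn3K1} yields, by repetition, packing $6$-colorings of $C_{14k}\odot 3K_1$ for all $k\ge 1$; augmenting it with a fixed finite family of patterns, one per residue class modulo~$14$ and each compatible with this length-$14$ pattern, one gets a packing $6$-coloring of $C_n\odot 3K_1$ for every $n$ above the largest exception, that is, for all $n\ge 92$. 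The finitely many ``$=6$'' values below~$92$, namely $n\in\{5,6,14,23,28,29,37,38,42,43,44,46,51,52,56,57,58,60,61,65,\dots,76,79,\dots,90\}$, are handled by displaying an explicit circular pattern on $6$ colors for each (most conveniently found by computer), and for the ``$=7$'' values the bound $\le 7$ has already been noted.

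\medskip

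\emph{The exceptional set.} The heart of the theorem is the assertion that $C_n\odot 3K_1$ admits no packing $6$-coloring \emph{exactly} when $n$ lies in the listed set. The structural skeleton is the one used for $C_9\odot 2K_1$ in the proof of Theorem~\ref{th:Cn2K1}: in a hypothetical packing $6$-coloring, color $c\in\{2,\dots,6\}$ occurs at most $\lfloor n/(c+1)\rfloor$ times on $C_n$, so for the relevant ranges of $n$ the cycle vertices cannot all be covered by colors $2,\dots,6$, forcing color~$1$ onto the cycle; each cycle vertex colored~$1$ then has its three pendant neighbours colored with three distinct colors from $\{2,3,4,5\}$, and this, together with the scarcity of the large colors, makes the coloring essentially forced in a window around every occurrence of~$1$; one finally shows these forced blocks cannot be assembled into a cyclic sequence of length~$n$. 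Turning this into a clean dichotomy for \emph{every} $n$ in the irregular exceptional set — and simultaneously exhibiting a packing $6$-coloring for every $n$ outside it — is precisely what forces a finite computer search over packing $6$-colorings of $C_n\odot 3K_1$ for $n$ up to about~$91$, the large-$n$ construction above taking care of all $n\ge 92$. I expect essentially all of the genuine difficulty to be concentrated here: in the verification for medium-sized~$n$, and in pinning down a threshold beyond which the ``base pattern $+$ compatible family'' construction provably realizes every~$n$.
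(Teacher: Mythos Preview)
Your proposal is correct and follows essentially the same route as the paper: the upper bound $\le 7$ via Theorem~\ref{th:Cn4K1} and Proposition~\ref{prop:subgraph}, the upper bound $\le 6$ for all sufficiently large $n$ via the length-$14$ circular pattern $[1(234)5234\,26325\,4326]$ combined with a finite family of compatible patterns covering each residue class modulo~$14$, and a computer search to settle the remaining finitely many values of~$n$. Two small remarks: your enumeration of the ``$=6$'' values below~$92$ should exclude~$68$ from the range $65,\dots,76$; and you are right to single out $n=11$ for the $\le 7$ upper bound (the paper glosses over this), since Theorem~\ref{th:Cn4K1} only yields $\le 8$ there and an explicit $7$-pattern using color~$1$ on the cycle is indeed needed.
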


\begin{proof}
By Theorem~\ref{th:Cn4K1} and Proposition~\ref{prop:subgraph}, we know that $\pcn(C_n\odot 3K_1)\le 7$ for every $n\ge 3$, $n\neq 11$.
Packing colorings of $C_3\odot 3K_1$, $C_4\odot 3K_1$, $C_5\odot 3K_1$ and $C_6\odot 3K_1$ are given by the following circular patterns:
$$[234],\ \ \ [2345],\ \ \ [23456],\ \ \ [234256],$$
whose optimality is easy to check.

\ESOK{Table~\ref{tab:Cn3K1} gives, as circular patterns, packing 6-colorings of $C_n\odot 3K_1$
for every $n\in\{14,23,29,38,44,46,61,67,69,73,76,82,92\}$ (pendant neighbors of vertices
colored~1 are always assigned colors 2, 3 and 4).
Since all these patterns begin with $152342\dots$ and end with $\dots 524326$,
they are all pairwise compatible.
Therefore, by repeating the pattern of length 14 a certain number of times, and adding one of
the patterns of Table~\ref{tab:Cn3K1}, we can produce a packing 6-coloring
of $C_n\odot 3K_1$ in all the following cases, according to the value of $n$ mod 14:
\begin{itemize}
\item $n=14q$, $n\ge 14$,
\item $n=14q+1$, $n\ge 29$ (by repeating $q-2$ times the pattern of length 14 and adding the pattern of length 29),
\item $n=14q+2$, $n\ge 44$ (by repeating $q-3$ times the pattern of length 14 and adding the pattern of length 44),
\item $n=14q+3$, $n\ge 73$ (by repeating $q-5$ times the pattern of length 14 and adding the pattern of length 73),
\item $n=14q+4$, $n\ge 46$ (by repeating $q-3$ times the pattern of length 14 and adding the pattern of length 46),
\item $n=14q+5$, $n\ge 61$ (by repeating $q-4$ times the pattern of length 14 and adding the pattern of length 61),
\item $n=14q+6$, $n\ge 76$ (by repeating $q-5$ times the pattern of length 14 and adding the pattern of length 76),
\item $n=14q+7$, $n\ge 105$ (by repeating $q-7$ times the pattern of length 14 and adding the patterns of length 44 and 61),
\item $n=14q+8$, $n\ge 92$ (by repeating $q-6$ times the pattern of length 14 and adding the pattern of length 92),
\item $n=14q+9$, $n\ge 23$ (by repeating $q-1$ times the pattern of length 14 and adding the pattern of length 23),
\item $n=14q+10$, $n\ge 38$ (by repeating $q-2$ times the pattern of length 14 and adding the pattern of length 38),
\item $n=14q+11$, $n\ge 67$ (by repeating $q-4$ times the pattern of length 14 and adding the pattern of length 67),
\item $n=14q+12$, $n\ge 82$ (by repeating $q-5$ times the pattern of length 14 and adding the pattern of length 82),
\item $n=14q+13$, $n\ge 69$ (by repeating $q-4$ times the pattern of length 14 and adding the pattern of length 69).
\end{itemize}
It is now easy to check that the remaining values of $n$, for which a packing 6-coloring cannot be produced in this way, are exactly
those given in the statement of the theorem. The fact that, for each of these values, $\pcn(C_n\odot 3K_1)=7$
has been checked by means of a computer program.}
\end{proof}

\begin{table}[t]
\small
\begin{center}
  \begin{tabular}{cl}
\hline
{\bf n} & {\bf circular pattern} \\
\hline
\hline
14 & [1523426325 4326] \\
\hline
23 & [1523426324 5236423524 326] \\
\hline
29 & [1523426324 5236423524 623524326] \\
\hline
38 & [1523426324 5236243251 6234253246 23524326] \\
\hline
44 & [1523426324 5236243251 6234253264 2352462352 4326] \\
\hline
46 & [1523426324 5236423524 3261523426 3245236423 524326] \\
\hline
61 & [1523426324 5236243251 6234253246 2352432615 2342632452 3642352432 6] \\
\hline
67 & [1523426324 5236243251 6234253246 2352432615 2342632452 3642352462 \\
 & 3524326] \\
\hline
69 & [1523426324 5236423524 3261523426 3245236423 5243261523 4263245236 \\ 
 & 423524326] \\
\hline
73 & [1523426324 5236243251 6234253264 2352462352 4326152342 6324523642 \\
 & 3524623524 326] \\
\hline
76 & [1523426324 5236243251 6234253246 2352432615 2342632452 3624325162 \\
 & 3425324623 524326] \\
\hline
82 & [1523426324 5236243251 6234253246 2352432615 2342632452 3624325162 \\
 & 3425326423 5246235243 26] \\
\hline
92 & [1523426324 5236423524 3261523426 3245236423 5243261523 4263245236 \\
 & 4235243261 5234263245 2364235243 26] \\
\hline
  \end{tabular}
\end{center}
\caption{Circular patterns for the proof of Theorem~\ref{th:Cn3K1}}
\label{tab:Cn3K1}
\end{table}

\section{\ESOK{Oriented paths, oriented cycles and their generalized coronae}}
\label{sec:oriented-coronae}

In this section, we extend  the notion of packing colorings to digraphs and study the case of oriented
graphs whose underlying undirected graph is a path, a cycle, or a generalized corona of a path or a cycle.

Let $\OR{D}$ be a digraph, with vertex set $V(\OR{D})$ and arc set $E(\OR{D})$.
A {\em directed path} of length $k$ in $\OR{D}$ is a sequence $u_0\dots u_k$ of vertices of $V(\OR{D})$ such that for every $i$,
$0\le i\le k-1$, $u_iu_{i+1}$ is an arc in $E(\OR{D})$.
The {\em weak directed distance} between two vertices $u$ and $v$ in $\OR{D}$, denoted $d_{\OR{D}}(u,v)$,
is the shortest length (number of arcs) of a directed path in $\OR{D}$ going either from $u$ to $v$ or from $v$ to $u$.

A {\em packing $k$-coloring} of a digraph $\OR{D}$ is a mapping $\pi:V(\OR{D})\rightarrow\{1,\ldots,k\}$
such that, for every two distinct vertices $u$ and $v$, $\pi(u)=\pi(v)=i$ implies $d_{\OR{D}}(u,v)>i$.
The {\em packing chromatic number} $\pcn(\OR{D})$ of $\OR{D}$ is then the smallest $k$ such that
$\OR{D}$ admits a packing $k$-coloring.

A digraph $\OR{O}$ with no pair of opposite arcs, that is $uv\in E(\OR{O})$ implies $vu\not\in E(\OR{O})$, is
called an {\em oriented graph}.
If $G$ is an undirected graph, an {\em orientation} of $G$ is any oriented graph $\OR{G}$ obtained by giving
to each edge of $G$ one of its two possible orientations.

By definition, if $\OR{G}$ is any orientation of an undirected graph $G$ then, for any two vertices $u$ and $v$
in $G$, $d_{\OR{G}}(u,v)\le d_G(u,v)$. Therefore, every packing coloring of $G$ is a packing coloring of $\OR{G}$.
Hence, we have the following:

\begin{proposition}
For every orientation $\OR{G}$ of an undirected graph $G$, $\pcn(\OR{G})\le\pcn(G)$.
\label{prop:oriented}
\end{proposition}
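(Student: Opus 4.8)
The plan is to verify directly that every packing coloring of the underlying undirected graph $G$ remains a packing coloring of the orientation $\OR{G}$; the inequality $\pcn(\OR{G})\le\pcn(G)$ then follows at once by feeding an optimal packing coloring of $G$ into this observation.

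First I would make precise the one structural fact that is needed: orienting a graph cannot decrease distances, that is, $d_{\OR{G}}(u,v)\ge d_G(u,v)$ for every pair of vertices $u,v$, with the convention that $d_{\OR{G}}(u,v)=+\infty$ when no directed path joins $u$ and $v$ in either direction. Indeed, if $d_{\OR{G}}(u,v)$ is finite, pick a shortest directed path $u_0u_1\dots u_k$ in $\OR{G}$, in one of the two directions, with $\{u_0,u_k\}=\{u,v\}$ and $k=d_{\OR{G}}(u,v)$; forgetting the orientations of its arcs yields a walk of length $k$ between $u$ and $v$ in $G$, so $d_G(u,v)\le k=d_{\OR{G}}(u,v)$. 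When $d_{\OR{G}}(u,v)=+\infty$ the inequality is trivial.

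Next, let $\pi$ be an optimal packing coloring of $G$, using $k=\pcn(G)$ colors, and regard $\pi$ as a coloring of $V(\OR{G})=V(G)$. For any two distinct vertices $u$ and $v$ with $\pi(u)=\pi(v)=i$, the fact that $\pi$ is a packing coloring of $G$ gives $d_G(u,v)>i$, whence $d_{\OR{G}}(u,v)\ge d_G(u,v)>i$ by the previous paragraph. Thus $\pi$ satisfies the defining condition of a packing $k$-coloring of the digraph $\OR{G}$, and therefore $\pcn(\OR{G})\le k=\pcn(G)$.

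I do not expect any real obstacle: the statement reduces to the monotonicity of distances under orientation, together with the remark that the packing condition only constrains pairs of vertices that are close enough, a requirement that can only become weaker as distances grow. The single point to watch is to allow $d_{\OR{G}}(u,v)$ to be infinite (equivalently, to read the packing condition as vacuously satisfied for such pairs), so that the argument also covers orientations that fail to be strongly connected.
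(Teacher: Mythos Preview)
Your proof is correct and follows exactly the approach the paper takes: the proposition is justified in the paper by the single sentence preceding it, namely that distances under the weak directed metric compare to undirected distances so that every packing coloring of $G$ remains one of $\OR{G}$. In fact you have the distance inequality in the correct direction, $d_{\OR{G}}(u,v)\ge d_G(u,v)$; the paper's line ``$d_{\OR{G}}(u,v)\le d_G(u,v)$'' is a typo, since it is the opposite inequality that is needed (and true) for the stated conclusion.
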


Note also that Proposition~\ref{prop:subgraph} is still valid for oriented graphs:

\begin{proposition}
If $\OR{H}$ is a subgraph of $\OR{G}$, then $\pcn(\OR{H})\le\pcn(\OR{G})$.
\label{prop:or-subgraph}
\end{proposition}

\ESOK{The characterization of oriented graphs with packing chromatic number 2 is given by the following
result:}

\begin{proposition}
\ESOK{For every orientation $\OR{G}$ of an undirected graph $G$, $\pcn(\OR{G})=2$
if and only if (i) $G$ is bipartite and (ii) one part of the bipartition of $G$
contains only sources or sinks in $\OR{G}$.}
\label{prop:pcn2}
\end{proposition}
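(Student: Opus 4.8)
The statement is an ``if and only if,'' so I would prove the two implications separately. For the forward direction, suppose $\pcn(\OR{G})=2$ with a packing $2$-coloring $\pi:V(\OR{G})\to\{1,2\}$. Since color $1$ is a ``distance greater than $1$'' class, the set $V_1=\pi^{-1}(1)$ is an independent set in the underlying graph $G$; likewise $V_2=\pi^{-1}(2)$ must satisfy that any two of its vertices are at weak directed distance greater than $2$. The first observation I would record is that $V_2$ is also independent in $G$ (two adjacent vertices are at weak directed distance $1\le 2$), so $\{V_1,V_2\}$ is a bipartition of $G$ and hence $G$ is bipartite, giving (i). For (ii): I claim the part $V_2$ contains only sources or sinks of $\OR{G}$. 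Indeed, if some $v\in V_2$ were neither a source nor a sink, it would have an in-neighbor $u$ and an out-neighbor $w$; both $u$ and $w$ lie in $V_1$ (as $G$ is bipartite with parts $V_1,V_2$), but then $u\to v\to w$ is a directed path of length $2$, so $d_{\OR{G}}(u,w)\le 2$, contradicting that $u,w$ get the same color $1$ — wait, color $1$ only forbids distance $\le 1$, so this is fine. The real constraint is on $V_2$: I need two vertices of $V_2$ at weak directed distance $\le 2$. So instead: if $v\in V_2$ has an in-neighbor $u$ and an out-neighbor $w$ with $u,w\in V_1$, that alone is not yet a contradiction. I need to think again — the obstruction must come from a vertex of $V_2$ having a directed path of length exactly $2$ to another vertex of $V_2$, which is impossible since $V_2$ is independent and $G$ is bipartite (any length-$2$ walk from $V_2$ lands back in $V_2$). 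So actually the constraint ``distance $>2$ within $V_2$'' in a bipartite graph with parts $V_1, V_2$ is automatic unless there is a length-$2$ directed path between two vertices of $V_2$; such a path is $v_1\to u\to v_2$ (or reversed) with $u\in V_1$. Hence the genuine condition is: \emph{no vertex of $V_1$ is simultaneously a non-source and a non-sink}, i.e.\ every vertex of $V_1$ is a source or a sink. So (ii) should read that the part $V_1$ — the one we may as well take to be the ``problematic'' side — consists only of sources or sinks. This matches the statement once we identify that part; I would phrase (ii) as: one part of the bipartition consists entirely of sources and sinks (mixing both is allowed, but no vertex is both an in- and out-endpoint).

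\textbf{Converse direction.} Suppose $G$ is bipartite with parts $A,B$ and every vertex of $B$ is a source or a sink in $\OR{G}$. Define $\pi$ by $\pi(a)=1$ for $a\in A$ (wait — I want color $1$ on the side with no distance-$2$ issue, which is $A$) and $\pi(b)=2$ for $b\in B$. Then $V_1=A$ is independent, so the color-$1$ condition holds. For color $2$: two vertices $b,b'\in B$ at weak directed distance $1$ would be adjacent, impossible since $B$ is independent; at weak directed distance $2$ there would be a directed path $b\to x\to b'$ or $b'\to x\to b$ with $x\in A$ — but this requires $b$ (resp.\ $b'$) to have an out-arc and $b'$ (resp.\ $b$) an in-arc; re-examining, such a length-$2$ directed path uses the intermediate vertex $x\in A$, and places no constraint on $B$ being all sources/sinks. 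Hmm — so I have the sides backwards once more. Let me settle it cleanly: a length-$2$ directed path between two same-colored vertices $u,v$ has the form $u\to m\to v$; the endpoints $u,v$ lie in one part, the midpoint $m$ in the other. Color $2$ forbids distance $\le 2$, so we must forbid such a configuration with $u,v$ the color-$2$ class: this forbids the \emph{midpoint} $m$ from having both an in-arc and an out-arc, i.e.\ forces every vertex in the part \emph{not} colored $2$ to be a source or a sink. Therefore: color $2$ goes on one part, and the \emph{other} part must consist of sources/sinks — which, after renaming, is exactly hypothesis (ii). So I would colour $B$ with $2$ and $A$ with $1$ where $A$ (equivalently $B$, depending on which satisfies (ii); the statement is symmetric in naming) is the part made of sources/sinks, and then verify both colour-class conditions directly.

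\textbf{Assembling and the main obstacle.} Concretely, I would write: ($\Rightarrow$) a packing $2$-colouring yields two independent colour classes, hence bipartiteness; then argue that the existence of a vertex in the color-$1$-complement part — precisely, in the part not carrying color $1$... I will instead argue symmetrically that \emph{for each part, being all-sources-or-sinks is equivalent to admitting the color serving as the higher color on the opposite part}, and conclude (ii) for whichever part does not receive color $1$. ($\Leftarrow$) colour the distinguished part $S$ (all sources/sinks) with color $2$ and the other part with color $1$; independence of both parts kills distance-$1$ clashes, and a distance-$2$ directed path between two color-$2$ vertices would force its midpoint, a vertex of $S$, to be neither source nor sink — contradiction. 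The main subtlety, and the step I would be most careful about, is pinning down \emph{which} part must be the source/sink part and making the quantifiers in (ii) match: the two colour classes play asymmetric roles (color $1$ forbids only distance $1$, color $2$ forbids distances $1$ and $2$), so the ``sources or sinks'' condition attaches to the part that will not wear color $1$; checking that a vertex that is both a source and a sink (i.e.\ isolated) causes no trouble, and that the condition ``sources \emph{or} sinks, not necessarily all the same type'' is exactly what is needed, is where I would spend the verification effort. Everything else is immediate from the definitions and from Proposition~\ref{prop:oriented}.
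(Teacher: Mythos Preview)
Your approach is essentially the same as the paper's, which is very terse: the paper simply observes that colour~1 cannot be assigned to the central vertex of any directed path of length~2 (since its two neighbours would then both carry colour~2 at weak directed distance at most~2), and deduces the result immediately. You arrive at exactly this observation after some back-and-forth, correctly concluding in your ``Converse direction'' paragraph that the part receiving colour~1 must be the source/sink part.

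There is, however, a slip in your final ``Assembling'' paragraph: you write ``colour the distinguished part $S$ (all sources/sinks) with color~$2$,'' which is backwards. It should be colour~$1$, exactly as you had it two paragraphs earlier (``colour $B$ with $2$ and $A$ with $1$ where $A$ \dots\ is the part made of sources/sinks''). With the colours swapped as written, the next clause (``its midpoint, a vertex of $S$'') is false, since a length-$2$ directed path between two colour-$2$ vertices has its midpoint in the \emph{other} part. Once you restore the correct assignment ($S$ gets colour~$1$, the complementary part gets colour~$2$), that midpoint does lie in $S$ and the contradiction goes through. With this fix your argument is complete and coincides with the paper's.
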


\begin{proof}
\ESOK{Clearly, $\pcn(\OR{G})>2$ whenever $G$ is not bipartite.
Assume thus that $G$ is bipartite. Since color 1 cannot be used for the central vertex of any directed path of length 2, 
we get that $\pcn(\OR{G})=2$  if and only if all the vertices from one of the two parts
are sources or sinks in $\OR{G}$.}
\end{proof}

We now determine the packing chromatic number of orientations of paths, cycles, and coronae of paths and cycles.

For oriented paths, we have the following:

\begin{theorem}
Let $\OR{P_n}$ be any orientation of the path $P_n=x_1\dots x_n$. Then, for every $n\ge 2$,
$2\le \pcn(\OR{P_n}) \le 3$.
Moreover, $\pcn(\OR{P_n})=2$ if and only if 
\ESOK{one part of the bipartition of $P_n$
contains only sources or sinks in $\OR{P_n}$}.
\label{th:or-Pn}
\end{theorem}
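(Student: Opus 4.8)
The plan is to establish the two inequalities $2\le\pcn(\OR{P_n})\le 3$ first, and then characterize exactly when the lower value is attained. For the upper bound, I would invoke Proposition~\ref{prop:oriented} together with Theorem~\ref{th:goddard}: since $\pcn(P_n)\le 3$ for every $n\ge 2$, we immediately get $\pcn(\OR{P_n})\le\pcn(P_n)\le 3$ for any orientation. For the lower bound, $P_n$ with $n\ge 2$ contains an edge, and any packing $1$-coloring would force two adjacent vertices to share color $1$ while being at distance $1$; since an edge gives weak directed distance $1$ regardless of orientation, a single color never suffices, so $\pcn(\OR{P_n})\ge 2$.

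The heart of the statement is the characterization of the equality $\pcn(\OR{P_n})=2$. Here I would simply apply Proposition~\ref{prop:pcn2}: an orientation $\OR{G}$ of an undirected graph $G$ has $\pcn(\OR{G})=2$ if and only if $G$ is bipartite and one part of the bipartition consists entirely of sources or sinks in $\OR{G}$. Since $P_n$ is always bipartite, with a unique bipartition (the path being connected), condition (i) is automatic, and the criterion reduces exactly to condition (ii) as stated in the theorem: one part of the bipartition of $P_n$ contains only sources or sinks in $\OR{P_n}$. This is essentially a direct specialization of the general result, so no new argument is needed beyond noting that the bipartition of a path is forced.

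The only point requiring a word of care — and the place I expect a referee might want more detail — is the subtlety hidden in Proposition~\ref{prop:pcn2}'s condition (ii): ``contains only sources or sinks'' should be read as ``every vertex in that part is either a source or a sink,'' not ``every vertex in that part has the same type.'' For paths this matters: a vertex $x_i$ interior to the path that is neither a source nor a sink is precisely the center of a directed path $x_{i-1}x_ix_{i+1}$ (or its reverse) of length $2$, so it cannot receive color $1$; hence if \emph{both} parts of the bipartition contain such a vertex, then neither part can be monochromatically colored $1$, and two colors are insufficient. Conversely, if one part consists only of sources and sinks, coloring every vertex of that part with $1$ and every vertex of the other part with $2$ yields a valid packing $2$-coloring, since any two vertices colored $1$ lie in the same part and are therefore at distance at least $2$, and each such vertex, being a source or sink, is never the middle vertex of a directed $2$-path. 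I do not anticipate any genuine obstacle here; the main task is merely to state clearly that the bipartition of $P_n$ is unique and to cite Propositions~\ref{prop:oriented} and~\ref{prop:pcn2} and Theorem~\ref{th:goddard}.
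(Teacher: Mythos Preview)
Your proposal is correct and follows essentially the same approach as the paper: the lower bound from adjacency, the upper bound via Proposition~\ref{prop:oriented} and Theorem~\ref{th:goddard}, and the characterization by direct application of Proposition~\ref{prop:pcn2}. Your additional paragraph unpacking condition~(ii) is more detail than the paper gives (the paper simply says ``directly follows from Proposition~\ref{prop:pcn2}''), but it is accurate and does no harm.
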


\begin{proof}
Since adjacent vertices cannot receive the same color, we clearly have $\pcn(\OR{P_n})\ge 2$ for all $n\ge 2$.
\ESOK{By Theorem~\ref{th:goddard}, we know that $\pcn(P_n)\le 3$ for every $n\ge 2$ and thus, by Proposition~\ref{prop:oriented}, we get
that $\pcn(\OR{P_n})\le 3$ for every $n\ge 2$.}

\ESOK{The last claim directly follows from Proposition~\ref{prop:pcn2}.}
\end{proof}
%

%

For oriented cycles, we have the following:

\begin{theorem}
Let $\OR{C_n}$ be any orientation of the cycle $C_n=x_0\dots x_{n-1}x_0$. Then, for every $n\ge 3$,
$2\le \pcn(\OR{C_n}) \le 4$.
Moreover, 
\begin{itemize}
\item[{\rm (1)}]  $\pcn(\OR{C_n})=2$ if and only if 
\ESOK{$C_n$ is bipartite (that is, $n$ is even) and one part of the bipartition 
contains only sources or sinks in $\OR{C_n}$}.
%
\item[{\rm (2)}] $\pcn(\OR{C_n})=4$ if and only if $\OR{C_n}$ is a directed cycle (all arcs have the same direction), 
   $n\ge 5$ and $n\not\equiv 0\pmod 4$.
   \end{itemize}
\label{th:or-Cn}
\end{theorem}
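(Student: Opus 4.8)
The plan is to first establish the uniform bounds $2\le\pcn(\OR{C_n})\le 4$, then characterize the two extreme values. The lower bound $\pcn(\OR{C_n})\ge 2$ is immediate since adjacent vertices need distinct colors. For the upper bound, by Proposition~\ref{prop:oriented} we have $\pcn(\OR{C_n})\le\pcn(C_n)$, and Theorem~\ref{th:goddard} gives $\pcn(C_n)\le 4$ for all $n\ge 3$; this disposes of the bound. Part~(1) is a direct instance of Proposition~\ref{prop:pcn2}: $\OR{C_n}$ is an orientation of $C_n$, so $\pcn(\OR{C_n})=2$ iff $C_n$ is bipartite (equivalently $n$ even) and one side of the bipartition consists entirely of sources or sinks. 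I would simply cite Proposition~\ref{prop:pcn2} here.

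The substance is Part~(2). First I would handle the ``if'' direction: suppose $\OR{C_n}$ is the directed cycle with $n\ge 5$ and $n\not\equiv 0\pmod 4$. Since in a directed cycle the weak directed distance equals the ordinary cycle distance $d_{C_n}$ (a shortest directed path in one direction or the other realizes the undirected distance), a packing coloring of $\OR{C_n}$ is exactly a packing coloring of $C_n$; hence $\pcn(\OR{C_n})=\pcn(C_n)=4$ by Theorem~\ref{th:goddard} (using $n\ge 5$, $n\equiv 1,2,3\pmod 4$). For the ``only if'' direction I must show $\pcn(\OR{C_n})\le 3$ in every remaining case, i.e. whenever $\OR{C_n}$ is \emph{not} a directed cycle, or $n\le 4$, or $n\equiv 0\pmod 4$. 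If $n\equiv 0\pmod 4$ (and $n=3$), then $\pcn(C_n)=3$ already by Theorem~\ref{th:goddard}, so $\pcn(\OR{C_n})\le 3$ by Proposition~\ref{prop:oriented}; and $n=4$ has $\pcn(C_4)=3$ too. So the only case left is: $\OR{C_n}$ is not a directed cycle and $n\equiv 1,2,3\pmod 4$, $n\ge 5$, where we must exhibit a packing $3$-coloring.

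The key idea for that remaining case is to exploit a vertex where the orientation ``switches''. Since $\OR{C_n}$ is not a directed cycle, there is at least one source and at least one sink; in particular there is a vertex $x_j$ that is a sink (both incident arcs point into it) or a source. I would \emph{cut} the cycle at such a vertex — or rather, observe that the two arcs at a source/sink $x_j$ mean that for the purpose of weak directed distance $x_j$ behaves like an endpoint: no directed path passes \emph{through} $x_j$. Concretely, color $x_j$ with $3$; its two neighbors form, together with the rest of the cycle, a structure whose relevant distances are those of the path $\OR{P_{n-1}}$ obtained by deleting $x_j$ (since any directed path using an edge at $x_j$ stops at $x_j$). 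Then color the remaining $n-1$ vertices using a packing $3$-coloring of that oriented path — which exists by Theorem~\ref{th:or-Pn} — and verify that color $3$ is safe for $x_j$, i.e. no other vertex colored $3$ lies within weak directed distance $3$ of $x_j$. The main obstacle I anticipate is precisely this last verification: a $3$-coloring of $\OR{P_{n-1}}$ from Theorem~\ref{th:or-Pn} might place another $3$ close to an end, within distance $3$ of $x_j$ along the cycle. I would handle this by either (a) choosing, among the colorings furnished by Theorem~\ref{th:goddard}/the path argument, one whose ends avoid color $3$ (a short case check on the pattern near both endpoints of the path, using that weak distance to $x_j$ only counts edges on one side because $x_j$ is a source/sink), or (b) giving an explicit near-periodic $3$-pattern around the cycle that is anchored at $x_j$, treating the residues $n\equiv 1,2,3\pmod4$ separately with a short constant-length adjustment block. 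This case analysis on the pattern near the cut vertex, together with checking the weak-distance constraints there, is the only genuinely technical part; everything else follows from Propositions~\ref{prop:oriented} and~\ref{prop:pcn2} and Theorems~\ref{th:goddard} and~\ref{th:or-Pn}.
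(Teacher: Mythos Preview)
Your proposal is correct and matches the paper's proof: the bounds and Part~(1) are handled identically via Propositions~\ref{prop:oriented} and~\ref{prop:pcn2} and Theorem~\ref{th:goddard}, and for Part~(2) the paper does exactly your option~(b) --- it fixes a source vertex $x_0$ and writes down an explicit near-periodic $3$-pattern for each residue class $n\equiv 1,2,3\pmod 4$ (namely $[1231\,|\,2131\,|\cdots|\,2131\,|\,2]$, $[1\,|\,2131\,|\cdots|\,2131\,|\,2]$, and $[13\,|\,1213\,|\cdots|\,1213\,|\,2]$). Your option~(a) (color the source/sink with $3$ and invoke Theorem~\ref{th:or-Pn} on the remaining path) would indeed require the extra end-control you anticipate, so the paper's direct-pattern route is the cleaner of the two you outline.
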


\begin{proof}
Since adjacent vertices cannot receive the same color, we clearly have $\pcn(\OR{C_n})\ge 2$ for all $n\ge 3$.
By Theorem~\ref{th:goddard}, we know that $\pcn(C_n)\le 4$ for every $n\ge 3$ and thus, by Proposition~\ref{prop:oriented}, we get
that $\pcn(\OR{C_n})\le 4$ for every $n\ge 3$.

\ESOK{Claim (1) directly follows from Proposition~\ref{prop:pcn2}.}


\ESOK{Let us now consider Claim (2).}
By Theorem~\ref{th:goddard}, we know that $\pcn(C_n)=4$ if and only if $n\ge 5$ and $n\not\equiv 0\pmod 4$.
\ESOK{By Proposition~\ref{prop:oriented}, we get that $\pcn(\OR{C_n})\le 3$ in all other cases.
Thus suppose that $n\ge 5$ and $n\not\equiv 0\pmod 4$.}
If $\OR{C_n}$ is a directed cycle, with all arcs having the same direction, then $d_{\OR{C_n}}(x_i,x_j)=d_{C_n}(x_ix_j)$
for every $0\le i,j\le n-1$ and thus $\pcn(\OR{C_n})=4$.
If $\OR{C_n}$ is not a directed cycle, it contains a source vertex, say $x_0$ without loss of generality.
\ESOK{We will prove that, in this case, $\OR{C_n}$ admits a packing 3-coloring.}

\ESOK{We consider three cases:} 
\begin{itemize}
\item If $n\equiv 1\pmod 4$, a packing 3-coloring of $\OR{C_n}$ is given by the following pattern:
$$1231\ |\ 2131\ |\ \dots\ |\ 2131\ |\ 2.$$
\item If $n\equiv 2\pmod 4$, a packing 3-coloring of $\OR{C_n}$ is given by the following pattern:
$$1\ |\ 2131\ |\ \dots\ |\ 2131\ |\ 2.$$
\item If $n\equiv 3\pmod 4$, a packing 3-coloring of $\OR{C_n}$ is given by the following pattern:
$$13\ |\ 1213\ |\ \dots\ |\ 1213\ |\ 2.$$
\end{itemize}


This completes the proof.
\end{proof}

For orientations of generalized coronae of paths, we have the following:

\begin{theorem}
Let $\OR{G}$ be any orientation of a generalized corona $P_n\odot pK_1$, with $p\ge 1$ and $P_n=x_1\dots x_n$. Then, \ESOK{for every $n\ge 1$},
$2\le \pcn(\OR{G}) \le 3$.
Moreover, $\pcn(\OR{G})=2$ if and only if \ESOK{one part of the bipartition of $P_n\odot pK_1$ contains only sources or sinks in $\OR{G}$}.
\label{th:or-PnpK1}
\end{theorem}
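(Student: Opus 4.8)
The plan is as follows. The inequality $\pcn(\OR{G}) \ge 2$ is immediate since $\OR{G}$ has at least one arc (as $p \ge 1$), and the characterization of the case $\pcn(\OR{G}) = 2$ follows at once from Proposition~\ref{prop:pcn2}: the graph $G = P_n \odot pK_1$ is bipartite and every pendant vertex is a source or a sink, so one part of the bipartition of $G$ consists only of sources and sinks if and only if the corresponding part of the bipartition of $P_n$ does. Observe that Proposition~\ref{prop:oriented} only yields $\pcn(\OR{G}) \le \pcn(P_n \odot pK_1)$, which by Theorem~\ref{th:PnpK1} may be as large as $7$; hence the bound $\pcn(\OR{G}) \le 3$ must be obtained by an explicit construction, and this is where the work lies.

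Write $P_n = x_1 \cdots x_n$. First I would give colour $1$ to every path vertex $x_i$ with $i$ odd and to every pendant vertex attached to some $x_j$ with $j$ even. This set is independent in $\OR{G}$ --- path vertices of a fixed parity are pairwise non-adjacent, and no pendant coloured $1$ hangs from a vertex coloured $1$ --- so it is a legitimate colour class for colour~$1$, whatever the orientation. It then remains to colour the remaining vertices, namely the even-indexed path vertices (the \emph{B-vertices}) and the pendants of the odd-indexed path vertices (the \emph{A-pendants}), with colours $2$ and $3$ only, so that any two colour-$2$ vertices are at weak directed distance at least $3$ and any two colour-$3$ vertices at weak directed distance at least~$4$; as these never conflict with the colour-$1$ vertices, the two problems decouple. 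Two elementary facts, both consequences of $\OR{G}$ having only path-edges and pendant-edges, drive the construction: for two pendants $y,y'$ of the same $x_i$ one has $d_{\OR{G}}(y,y') = \infty$ if the two incident arcs point the same way relative to $x_i$ and $d_{\OR{G}}(y,y')=2$ otherwise; and for pendants $y, y'$ of distinct path vertices $x_i, x_{i'}$ one has $d_{\OR{G}}(y,y') \in \{\,|i-i'|+2,\ \infty\,\}$, so in particular $d_{\OR{G}}(y,y') \ge 4$ whenever it is finite.

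Now I would choose the colours. Consecutive B-vertices $x_{2k}$ and $x_{2k+2}$ are at distance $2$ precisely when the $2$-path through $x_{2k+1}$ is directed, in which case they must receive different colours; conversely, if $x_{2k+1}$ is a source of the path carrying an in-pendant, or a sink of the path carrying an out-pendant, I will want $x_{2k}$ and $x_{2k+2}$ to receive the \emph{same} colour. For each $k$ at most one of these requirements applies, and the B-vertices form a path under the resulting ``same/different'' relation, so a $\{2,3\}$-colouring of the B-vertices meeting all of them exists. Then, for each odd $i$, I colour all in-pendants of $x_i$ with the colour $a_i\in\{2,3\}$ not appearing on any path-neighbour $x_{i\pm1}$ with $x_i \to x_{i\pm1}$, and all out-pendants of $x_i$ with the colour $b_i$ not appearing on any path-neighbour $x_{i\pm1}$ with $x_{i\pm1}\to x_i$ (when there is no such neighbour the colour is free and chosen so that $a_i\ne b_i$). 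A case distinction on whether $x_i$ is, on the path, a source, a sink or transitive --- together with the ``same colour'' requirement imposed above --- shows that $a_i$ and $b_i$ are well defined and that $a_i \ne b_i$ whenever $x_i$ has both an in- and an out-pendant.

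Finally I would verify correctness: using the two elementary facts, one checks that the only vertices at weak directed distance $<3$ (resp.\ $<4$) from a colour-$2$ (resp.\ colour-$3$) A-pendant $y$ of $x_i$ are $x_{i-1}$, $x_{i+1}$ and pendants of $x_i$ itself (the vertices $x_{i\pm2}$ carry colour~$1$, being odd-indexed), and that every such potential conflict is excluded by the defining property of $a_i, b_i$; conflicts between two B-vertices are excluded by their colouring, and conflicts between two A-pendants by the two elementary facts. The degenerate cases $n=1$ and $n=2$, where there are zero or one B-vertices, are handled directly by the same recipe. I expect the main obstacle to be exactly the bookkeeping of the previous paragraph: one must ensure that the colours forced on the A-pendants of $x_i$ by the local orientation are simultaneously compatible with one another and with the colours already fixed on the B-vertices around $x_i$, which is precisely why path-sources and path-sinks carrying the ``wrong'' type of pendant must be singled out when $2$-colouring the B-vertices.
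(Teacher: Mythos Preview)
Your argument is correct and rests on the same idea as the paper's: colour~$1$ goes on the odd-indexed path vertices and on the pendants of the even-indexed vertices, and the remaining vertices are $\{2,3\}$-coloured according to the local orientation. The paper carries this out inductively along the path (first for $p=1$, then extending to $p\ge 2$), maintaining the invariant~(P) that every colour-$1$ vertex has all in-neighbours of one colour $\alpha\in\{2,3\}$ and all out-neighbours of colour $5-\alpha$; you recast the same construction as a chain of same/different constraints on the even-indexed $x_j$, solved greedily, with the pendant colours $a_i,b_i$ then determined locally --- the two presentations yield equivalent colourings, and your appeal to Proposition~\ref{prop:pcn2} for the equality case matches the paper exactly.
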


\begin{figure}
\begin{center}
\begin{tikzpicture}[domain=0:17,x=0.8cm,y=0.8cm]
\begin{scope}[shift={(0,0)}]
\SOMMET{0,0}; \SOMMET{0,1};
\SOMMET{1,0}; \SOMMET{1,1};
\SOMMET{2,0}; \SOMMET{2,1};
\SOMMET{3,0}; \SOMMET{3,1};
\SOMMET{4,0}; \SOMMET{4,1};
\SOMMET{5,0}; \SOMMET{5,1};
\SOMMET{6,0}; \SOMMET{6,1};
\SOMMET{7,0}; \SOMMET{7,1};

\ETIQUETTE{0,1.7}{1}; \ETIQUETTE{0,-0.1}{2};
\ETIQUETTE{1,1.7}{3}; \ETIQUETTE{1,-0.1}{1};
\ETIQUETTE{2,1.7}{1}; \ETIQUETTE{2,-0.1}{3};
\ETIQUETTE{3,1.7}{2}; \ETIQUETTE{3,-0.1}{1};
\ETIQUETTE{4,1.7}{1}; \ETIQUETTE{4,-0.1}{3};
\ETIQUETTE{5,1.7}{3}; \ETIQUETTE{5,-0.1}{1};
\ETIQUETTE{6,1.7}{1}; \ETIQUETTE{6,-0.1}{2};
\ETIQUETTE{7,1.7}{2}; \ETIQUETTE{7,-0.1}{1};

\ARCE{1,1};
\ARCE{2,1};
\ARCE{3,1};
\ARCO{3,1};
\ARCO{4,1};
\ARCE{6,1};
\ARCE{7,1};

\ARCS{0,1};
\ARCS{2,1};
\ARCS{3,1};
\ARCS{4,1};
\ARCN{1,0};
\ARCN{5,0};
\ARCN{6,0};
\ARCN{7,0};

\ARETEH{0,1}{7};
\ARETEV{0,0};
\ARETEV{1,0};
\ARETEV{2,0};
\ARETEV{3,0};
\ARETEV{4,0};
\ARETEV{5,0};
\ARETEV{6,0};
\ARETEV{7,0};

\ETIQUETTE{3.5,-0.7}{(a)};

\end{scope}

\begin{scope}[shift={(9,0)}]
\SOMMET{0,0}; \SOMMET{0,1};
\SOMMET{1,0}; \SOMMET{1,1};
\SOMMET{2,0}; \SOMMET{2,1};
\SOMMET{3,0}; \SOMMET{3,1};
\SOMMET{4,0}; \SOMMET{4,1};
\SOMMET{5,0}; \SOMMET{5,1};
\SOMMET{6,0}; \SOMMET{6,1};
\SOMMET{7,0}; \SOMMET{7,1};

\SOMMET{0,2};
\SOMMET{1,2};
\SOMMET{2,2};
\SOMMET{3,2};
\SOMMET{4,2};
\SOMMET{5,2};
\SOMMET{6,2};
\SOMMET{7,2};

\ETIQUETTE{-0.3,1.7}{1}; \ETIQUETTE{0,-0.1}{2};
\ETIQUETTE{0.7,1.7}{3}; \ETIQUETTE{1,-0.1}{1};
\ETIQUETTE{1.7,1.7}{1}; \ETIQUETTE{2,-0.1}{3};
\ETIQUETTE{2.7,1.7}{2}; \ETIQUETTE{3,-0.1}{1};
\ETIQUETTE{3.7,1.7}{1}; \ETIQUETTE{4,-0.1}{3};
\ETIQUETTE{4.7,1.7}{3}; \ETIQUETTE{5,-0.1}{1};
\ETIQUETTE{5.7,1.7}{1}; \ETIQUETTE{6,-0.1}{2};
\ETIQUETTE{6.7,1.7}{2}; \ETIQUETTE{7,-0.1}{1};

\ETIQUETTE{0,2.75}{3};
\ETIQUETTE{1,2.75}{1};
\ETIQUETTE{2,2.75}{2};
\ETIQUETTE{3,2.75}{1};
\ETIQUETTE{4,2.75}{2};
\ETIQUETTE{5,2.75}{1};
\ETIQUETTE{6,2.75}{3};
\ETIQUETTE{7,2.75}{1};

\ARCE{1,1};
\ARCE{2,1};
\ARCE{3,1};
\ARCO{3,1};
\ARCO{4,1};
\ARCE{6,1};
\ARCE{7,1};

\ARCS{0,1};
\ARCS{2,1};
\ARCS{3,1};
\ARCS{4,1};
\ARCN{1,0};
\ARCN{5,0};
\ARCN{6,0};
\ARCN{7,0};

\ARCN{1,1};
\ARCN{5,1};
\ARCN{6,1};
\ARCN{7,1};
\ARCS{0,2};
\ARCS{2,2};
\ARCS{3,2};
\ARCS{4,2};

\ARETEH{0,1}{7};
\ARETEV{0,0};
\ARETEV{1,0};
\ARETEV{2,0};
\ARETEV{3,0};
\ARETEV{4,0};
\ARETEV{5,0};
\ARETEV{6,0};
\ARETEV{7,0};
\ARETEV{0,1};
\ARETEV{1,1};
\ARETEV{2,1};
\ARETEV{3,1};
\ARETEV{4,1};
\ARETEV{5,1};
\ARETEV{6,1};
\ARETEV{7,1};

\ETIQUETTE{3.5,-0.7}{(b)};

\end{scope}

\end{tikzpicture}
\caption{Packing colorings for the proof of Theorem~\ref{th:or-PnpK1}}
\label{fig:oriented-caterpillar}
\end{center}
\end{figure}
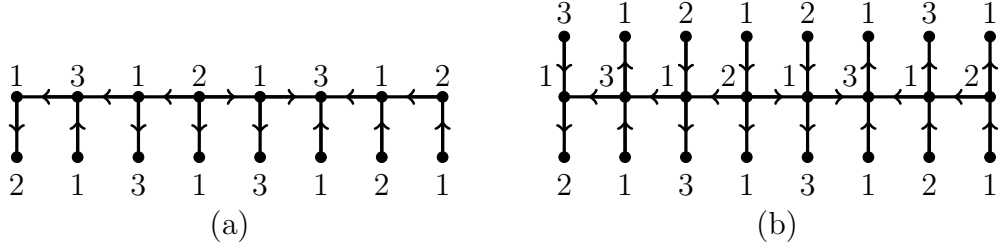

\begin{proof}
Since a packing coloring is a proper coloring, we clearly have 
$\pcn(\OR{G}) \ge 2$ for every orientation $\OR{G}$ of $P_n\odot pK_1$, $n,p\ge 1$.

We first consider the case $p=1$.
For any orientation $\OR{G}$ of $P_1\odot K_1$,  the coloring given by the pattern 1(2),
 is clearly a packing 2-coloring of $\OR{G}$. Assume now that $n\ge 2$ and let $\OR{G}$ be any orientation of $P_n\odot K_1$.
Let $z_1,\dots,z_n$ denote the pendant vertices associated with $x_1,\dots,x_n$, respectively. 
We will construct inductively a packing 3-coloring $\pi$ of $\OR{G}$.
We first set $\pi(x_1):=1$ and $\pi(z_1):=2$.
Assume now that all the vertices $x_1$, $z_1$, $\dots$, $x_i$, $z_i$, $1\le i\le n-1$ have been colored
in such a way that $\pi(x_i)=1$ if and only if $i$ is odd and $\pi(z_i)=1$ if and only if $i$ is even.
Then, use the following rule:
\begin{itemize}
\item If $\pi(x_i)=1$ then set $\pi(x_{i+1}):=5-\pi(z_i)$ if $z_ix_ix_{i+1}$ is a directed path (in either direction)
and $\pi(x_{i+1}):=\pi(z_i)$ otherwise. In both cases, set $\pi(z_{i+1}):=1$.
\item If $\pi(x_i)\neq 1$ then set $\pi(z_{i+1}):=5-\pi(x_i)$ if $x_ix_{i+1}z_{i+1}$ is a directed path (in either direction)
and $\pi(z_{i+1}):=\pi(x_i)$ otherwise. In both cases, set $\pi(x_{i+1}):=1$.
\end{itemize}
The coloring $\pi$ thus obtained (see Figure~\ref{fig:oriented-caterpillar}(a) for an example) has the following property:
\begin{itemize}
\item[(P)] every vertex with
color 1 is such that all its in-neighbors have the same color $\alpha\in\{2,3\}$ and
all its out-neighbors have the same color $5-\alpha\in\{2,3\}$.
\end{itemize}
 The coloring $\pi$ is thus a packing 3-coloring of $\OR{G}$.

Consider now the case $p\ge 2$. We first color the vertices $x_1,\dots,x_n$ and one of their pendant neighbors using
the procedure described above, and then color the remaining pendant vertices in such a way that  property (P)
is satisfied. Hence, all pendant neighbors of a vertex with color 2 or 3 will be colored 1,
and all pendant neighbors of a vertex with color 1 will be colored 2 or 3, depending on the orientation of the
corresponding arc
(see Figure~\ref{fig:oriented-caterpillar}(b) for an example).

\ESOK{The last claim directly follows from Proposition~\ref{prop:pcn2}.}
\end{proof}

Finally, for orientations of generalized coronae of cycles, we have the following:

\begin{theorem}
Let $\OR{G}$ be any orientation of a generalized corona $C_n\odot pK_1$, with $p\ge 1$ and $C_n=x_0\dots x_{n-1}$. Then, for every $n\ge 3$,
$2\le \pcn(\OR{G}) \le 4$.
Moreover, 
\begin{itemize}
\item [{\rm (1)}] $\pcn(\OR{G})=2$ if and only if \ESOK{$C_n\odot pK_1$ is bipartite (that is, $n$ is even) and one part of the bipartition 
contains only sources or sinks in $\OR{G}$}.

\item [{\rm (2)}] $\pcn(\OR{G})=4$ if and only if either:
  \begin{itemize}
  \item[{\rm (2.1)}] $\OR{C_n}$ is a directed cycle, $n\ge 5$ and $n\not\equiv 0\pmod 4$, or
  \item [{\rm (2.2)}] $\OR{G}$ contains the oriented graph depicted in Figure~\ref{fig:subgraphsChiRho4} as a subgraph, or
  \item [{\rm (2.3)}] $n\equiv 0\pmod 4$ and there exists a vertex $x_i$, $0\le i\le n-1$, such that 
  the paths $x_ix_{i+1}x_{i+2}x_{i+3}$ and $x_{i+4}\dots x_{i-1}$(indices are taken modulo~$n$) are both directed paths, but in opposite direction.
  \end{itemize}
\end{itemize}
\label{th:or-CnpK1}
\end{theorem}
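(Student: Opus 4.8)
The plan is to settle the trivial bound and Claim~(1) first, and then to establish the upper bound $\pcn(\OR G)\le 4$ together with the characterization~(2) by reducing to $\OR{C_n}$ and carrying out a case analysis. Since a packing coloring is proper and $\OR G$ has an edge, $\pcn(\OR G)\ge 2$; and Claim~(1) is an immediate application of Proposition~\ref{prop:pcn2}, exactly as Claim~(1) of Theorem~\ref{th:or-Cn} and the corresponding statement in Theorem~\ref{th:or-PnpK1}.

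For the remaining assertions I would use the following reduction. By Theorem~\ref{th:or-Cn}, $\OR{C_n}$ admits a packing coloring $\psi$ with at most $4$ colors, and with at most $3$ colors unless $\OR{C_n}$ is a directed cycle with $n\ge 5$ and $n\not\equiv 0\pmod 4$. I extend such a $\psi$ to the pendant vertices as follows: if $\psi(x_i)\neq 1$, color every pendant neighbour of $x_i$ with~$1$ (this is always legal, since the class of color~$1$ need only be independent, non-adjacent vertices being at weak directed distance at least~$2$); if $\psi(x_i)=1$, observe that the $p$ pendant neighbours of $x_i$ split into the set $Z_i$ of \emph{in}-pendants ($z\to x_i$) and the set $W_i$ of \emph{out}-pendants ($x_i\to z$), that any two vertices of $Z_i$, and any two vertices of $W_i$, are joined by no directed path in either direction, hence are at weak directed distance $\infty$, while an in-pendant and an out-pendant of $x_i$ are at weak directed distance exactly~$2$. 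Consequently the pendants at every $1$-vertex can be packing-colored using only \emph{two} extra colors — one for $Z_i$, one for $W_i$ — the only thing to verify being that, for an in-pendant colored $c$, no color-$c$ vertex lies at directed distance $<c$ from $x_i$ along out-arcs (and symmetrically for an out-pendant along in-arcs). Choosing $\psi$, and the two colors assigned at each $1$-vertex, with this in mind keeps us inside $\{2,3,4\}$, resp. inside $\{2,3\}$; this yields $\pcn(\OR G)\le 4$ in all cases, and a packing $3$-coloring whenever $\OR{C_n}$ is not a directed cycle with $n\ge 5$, $n\not\equiv 0\pmod 4$, \emph{unless} one of the local obstructions below is present.

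It then remains to decide exactly when a packing $3$-coloring is impossible. I would prove the ``only if'' part of Claim~(2) directly: in case~(2.1) even $\OR{C_n}$ requires $4$ colors by Theorem~\ref{th:or-Cn}; in case~(2.2) a short case analysis on the at most $3$ colors available around the configuration of Figure~\ref{fig:subgraphsChiRho4} shows that one of its vertices cannot be colored; and in case~(2.3) the greedy construction (with its property~(P)) used for oriented path-coronae in Theorem~\ref{th:or-PnpK1} is essentially forced along each of the two oppositely-oriented directed segments, and its two ``phases'' cannot be matched when $n\equiv 0\pmod 4$, so once more no $3$-coloring exists. In each of these three cases a packing $4$-coloring is obtained from the reduction above (or, as in Theorems~\ref{th:Cn4K1}--\ref{th:Cn3K1}, written out as explicit circular patterns). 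For the ``if'' part, when none of (2.1)--(2.3) occurs I would build a packing $3$-coloring by running the property-(P) greedy coloring around $\OR{C_n}$ and checking that it closes up, using, when $\OR{C_n}$ is not a directed cycle, the patterns $1231\,|\,2131\,|\dots\,|\,2$, $1\,|\,2131\,|\dots\,|\,2$ and $13\,|\,1213\,|\dots\,|\,2$ from the proof of Theorem~\ref{th:or-Cn}, each prolonged to the pendants exactly as above.

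The hard part is the completeness of this case analysis: one must check that the property-(P) greedy coloring really can be closed around $\OR{C_n}$ in \emph{every} orientation not covered by (2.1)--(2.3) — this is where the role of $n\bmod 4$ and of the configuration of Figure~\ref{fig:subgraphsChiRho4} becomes delicate — and, conversely, that (2.2) and (2.3) are the only obstructions not already visible on $\OR{C_n}$ itself. As elsewhere in the paper, I would expect the finitely many short orientations, too small for the periodic patterns to apply, to be settled by a computer search.
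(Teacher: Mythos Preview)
Your high-level strategy coincides with the paper's: reduce to a packing coloring of $\OR{C_n}$, extend it to the pendants by treating in-pendants and out-pendants of each $1$-vertex separately, and then characterize exactly when a packing $3$-coloring of $\OR{C_n}$ can be chosen so that this extension stays inside $\{1,2,3\}$. The lower bound, Claim~(1), and the upper bound $\pcn(\OR G)\le 4$ are all handled the same way in both.

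Where your proposal falls short is precisely the part you flag as ``the hard part'': the paper does \emph{not} leave this to a computer search or to the property-(P) greedy coloring closing up by luck. It introduces two pieces of machinery you have not identified. First, it singles out the set $S^*(V(C_n))$ of vertices that are sources or sinks in $\OR{C_n}$ but \emph{not} in $\OR G$; these are exactly the vertices at which a $1$-vertex with cycle-neighbors colored $\{2,3\}$ cannot be extended to its pendants, and the whole problem becomes finding a packing $3$-coloring of $\OR{C_n}$ which is \emph{good} in the sense that no vertex of $S^*$ receives color~$1$ with neighbors $\{2,3\}$. Second, it defines a ``Standard Coloring Procedure'' (SCP) that runs along the cycle assigning color~$1$ to one bipartition class and alternating $2,3$ on the other, duplicating a color precisely at a prescribed subset $S$; a parity count (Proposition~\ref{prop:SCP}) then relates the color of the last vertex to $|S|$ and to $n\bmod 4$. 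The argument splits according to $n\bmod 4$, and within each case according to the parities of $|S^*(A)|$ and $|S^*(B)|$ for the two bipartition classes. When SCP fails to close, the paper locates specific short patterns $a,b\in S^*$ at controlled distances and hand-corrects the coloring there; the last subcase, where \emph{every} pair in $S^*$ sits at distance~$3$, is shown to contradict $n\equiv 0\pmod 4$. Condition~(2.3) emerges precisely from the subcase $|S^*(A)|=|S^*(B)|=1$ with the two vertices at distance~$3$: there the paper explicitly lists the six possible packing $3$-colorings of the long directed arc and checks that none extends across the short arc --- your ``phases cannot be matched'' is the right intuition, but it is not the argument. No computer is invoked anywhere in this proof.
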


\begin{figure}
\begin{center}
\begin{tikzpicture}[domain=0:17,x=0.8cm,y=0.8cm]
\begin{scope}[shift={(0,0)}]
\SOMMET{0,0}; \SOMMET{0,1};
\SOMMET{1,0}; \SOMMET{1,1};
\SOMMET{2,0};
\SOMMET{2,1};

\ARCO{0,1};
\ARCN{0,0};
\ARCE{1,0};
\ARCN{1,0};

\ARCO{1,1};
\ARCE{2,0};

%

\ARETEH{0,0}{2};
\ARETEH{0,1}{2};
\ARETEV{0,0};
\ARETEV{1,0};

\end{scope}

\end{tikzpicture}
\caption{Configuration for the proof of Theorem~\ref{th:or-CnpK1}}
\label{fig:subgraphsChiRho4}
\end{center}
\end{figure}
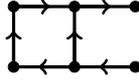

Before proving this theorem, we introduce a useful coloring procedure, called {\em standard coloring procedure} (SCP for short), 
that produces a coloring $\pi$ of an orientation of the path $P_n=x_1\dots x_n$: 

\begin{enumerate}
\item Assume $(c,c')\in\{1,2,3\}^2$, with $|\{c,c'\}\cap\{1\}|=1$, and $S\subseteq V(P_n)$ are given.
\item Set $\pi(x_1):=c$ and $\pi(x_2):=c'$.
\item For $j=3,\dots,n$, set $\pi(x_j):=1$ if $\pi(x_{j-1})\neq 1$, $\pi(x_j):=\pi(x_{j-2})$
if $\pi(x_{j-1})=1$ and $x_{j-1}\in S$, and $\pi(x_j):=5-\pi(x_{j-2})$ otherwise.
\end{enumerate}

Figure~\ref{fig:SCP} shows colorings of two orientations of $P_8=x_1\dots x_8$ produced by SCP, with 
$(c,c')=(1,2)$ and $S=\{x_3\}$, and with $(c,c')=(3,1)$ and $S=\{x_4,x_8\}$, respectively. 
Note that SCP always produces a packing 3-coloring of the path $x_1\dots x_n$, 
but  not necessarily 
a packing 3-coloring of $\OR{C_n}$, and that the only possible conflicts lie on the path 
$x_{n-2}x_{n-1}x_nx_1x_2x_3$
(such conflicts may appear when a directed path of length 2 or 3 contains $x_1$ as an internal vertex).
For instance, the second example depicted in Figure~\ref{fig:SCP} is a 
packing 3-coloring of $\OR{C_8}$, while the first one is not.

Observe that if $c=1$ (resp. $c'=1$) SCP assigns color 1 to every vertex $x_j$ such that $j$ is odd (resp. even),
and colors 2 and 3 alternate on other vertices whenever $S$ is empty.
If $S$ is not empty, we have $|S|$, or $|S|-1$ if $x_1\in S$ and $c=1$ (resp. $x_2\in S$ and $c'=1$), places where the
color 2 or 3 is duplicated. Hence, we have the following:

\begin{figure}
\begin{center}
$$1\RR 2\RR 1\LL 2\LL 1\LL 3\RR 1\RR 2\RR (1)$$
$$3\RR 1\RR 2\RR 1\LL 2\LL 1\LL 3\LL 1\RR (3)$$
\caption{Sample colorings produced by SCP}
\label{fig:SCP}
\end{center}
\end{figure}

\begin{proposition}
Let $\OR{P_n}$ be any orientation of the path $P_n=x_1\dots x_n$ of \ESOK{odd} length~$n-1$
and $S$ be a set of sources or sinks in $\OR{P_n}$ with odd indices not containing $x_1$. Consider the coloring $\pi$ of $\OR{P_n}$
produced by SCP with $(c,c')=(1,\alpha)$ for some $\alpha\in\{2,3\}$ and $S$.
Then we have:
\begin{itemize}
\item[{\rm (i)}] $\pi(x_n)=\alpha$ if $|S|$ is even (resp. odd) and $n\equiv 2\pmod 4$ (resp. $n\equiv 0\pmod 4$),
\item[{\rm (ii)}] $\pi(x_n)=5-\alpha$ otherwise.
\end{itemize}
\label{prop:SCP}
\end{proposition}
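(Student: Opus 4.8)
The plan is to follow the colours of the even-indexed vertices as one runs SCP from left to right. First I would record the shape of the colouring: since $c=1$, a straightforward induction on $j$ shows that SCP assigns colour $1$ to $x_j$ for every odd $j$ and a colour in $\{2,3\}$ to $x_j$ for every even $j$. Indeed $\pi(x_1)=1$ and $\pi(x_2)=\alpha\in\{2,3\}$; and if $\pi(x_{j-1})\in\{2,3\}$ (i.e.\ $j$ odd) then the third step of SCP gives $\pi(x_j)=1$, while if $\pi(x_{j-1})=1$ (i.e.\ $j$ even) then $\pi(x_j)$ equals either $\pi(x_{j-2})$ or $5-\pi(x_{j-2})$, both of which lie in $\{2,3\}$ because $5-2=3$ and $5-3=2$.

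Next I would examine, for $1\le j\le m-1$ where $n=2m$ (note that $n$ is even, since $n-1$ is odd), the transition from $x_{2j}$ to $x_{2j+2}$: by the third step of SCP, $\pi(x_{2j+2})=\pi(x_{2j})$ if $x_{2j+1}\in S$, and $\pi(x_{2j+2})=5-\pi(x_{2j})$ otherwise, the latter being called a \emph{switch}. Thus $\pi(x_n)=\pi(x_{2m})$ is obtained from $\pi(x_2)=\alpha$ by applying one such transition for each of the $m-1$ odd-indexed vertices $x_3,x_5,\dots,x_{2m-1}$. Since every element of $S$ has an odd index and $x_1\notin S$, we have $S\subseteq\{x_3,x_5,\dots,x_{2m-1}\}$; hence exactly $|S|$ of these $m-1$ transitions are non-switches and exactly $(m-1)-|S|$ are switches. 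Consequently $\pi(x_n)=\alpha$ when $(m-1)-|S|$ is even and $\pi(x_n)=5-\alpha$ when it is odd.

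It then remains to translate this parity into the statement. Since $n=2m$, we have $n\equiv 2\pmod 4$ exactly when $m-1$ is even and $n\equiv 0\pmod 4$ exactly when $m-1$ is odd; therefore $(m-1)-|S|$, equivalently $(m-1)+|S|$, is even precisely when $n\equiv 2\pmod 4$ and $|S|$ is even, or $n\equiv 0\pmod 4$ and $|S|$ is odd, which is case~(i); in every other case it is odd, which is case~(ii). There is no genuine difficulty here; the only point requiring care is the index bookkeeping, and in particular noticing that the hypothesis $x_1\notin S$ is exactly what ensures $S\subseteq\{x_3,\dots,x_{2m-1}\}$. (The hypothesis that the elements of $S$ are sources or sinks plays no role in this statement about $\pi(x_n)$; it is used only when SCP is applied to produce genuine packing colourings of oriented coronae.)
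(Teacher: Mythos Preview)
Your proof is correct and follows essentially the same approach as the paper: the paper's proof is the one-line ``This directly follows from the above discussion,'' where that discussion already records that with $c=1$ the odd-indexed vertices receive colour~1 while colours~2 and~3 alternate on the even-indexed ones except at the $|S|$ places where a colour is duplicated. You have simply written out this parity count in full, including the translation between the parity of $(m-1)-|S|$ and the congruence of $n$ modulo~4; your remark that the source/sink hypothesis on $S$ is irrelevant to the value of $\pi(x_n)$ is also accurate.
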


\begin{proof}
This directly follows from the above discussion.
\end{proof}

\medskip

\begin{proof}
{\bf [of Theorem~\ref{th:or-CnpK1}]}
Since a packing coloring is a proper coloring, we clearly have 
$\pcn(\OR{G}) \ge 2$ for every orientation $\OR{G}$ of $C_n\odot pK_1$, $n\ge 3$, $p\ge 1$.

Let $\OR{G}$ be any orientation of $C_n\odot pK_1$ and $\OR{C_n}$ be the orientation of the cycle $C_n$
induced by $\OR{G}$. Denote by $z_i^j$, $1\le j\le p$, the pendant neighbors of $x_i$, $0\le i\le n-1$.
We consider two cases.

If $\OR{C_n}$ contains a source vertex, say $x_0$ without loss of generality, then, by Theorem~\ref{th:or-PnpK1}, there exists
a packing 3-coloring of $\OR{G}\setminus\{x_0,z_0^1,\dots,z_0^p\}$. Since $x_0$ is a source, 
this packing coloring can be extended to a packing 4-coloring
of $\OR{G}$ by coloring $x_0$ with color 4 and all vertices $z_0^j$, $1\le j\le p$, with color 1.

If $\OR{C_n}$ does not contain any source vertex then $\OR{C_n}$ is a directed cycle.
By Theorem~\ref{th:or-Cn}, we know that there exists a packing 4-coloring $\pi$ of $\OR{C_n}$.
This packing coloring can be extended to a packing 4-coloring
of $\OR{G}$ by coloring every pendant vertex $z_i^j$, $0\le i\le n-1$, $1\le j\le p$,
by $\pi(x_{i-1})$ if $z_i^jx_i$ is an arc in $\OR{G}$ and by $\pi(x_{i+1})$ otherwise (indices are taken modulo $n$).
Hence, $\pcn(\OR{G}) \le 4$ for every orientation $\OR{G}$ of $C_n\odot pK_1$, $n\ge 3$, $p\ge 1$.

\medskip

\ESOK{Claim (1) directly follows from Proposition~\ref{prop:pcn2}.}



\medskip

We now consider Claim (2).
If $\pcn(\OR{C_n})=4$ (which happens, by Theorem~\ref{th:or-Cn}, if and only if $\OR{C_n}$ is a directed cycle,  $n\ge 5$ and $n\not\equiv 0\pmod 4$)
then,  by Proposition~\ref{prop:or-subgraph}, $\pcn(\OR{G})=4$ (condition 2.1 of the theorem).

If $\pcn(\OR{C_n})=2$ (which happens, by Theorem~\ref{th:or-Cn}, if and only if $n$ is even and the orientation $\OR{C_n}$ of $C_n$
is alternating) then we clearly have $\pcn(\OR{G})\le 3$ since both colors 2 and 3 are available for pendant neighbors of vertices colored 1.

Suppose therefore that $\pcn(\OR{C_n})=3$.
If $\OR{C_n}$ is a directed cycle, which implies $n\equiv 0\pmod 4$, then the packing 3-coloring
given by the circular pattern $[1213]$ can be extended to a packing 3-coloring of $\OR{G}$, as in the proof of Theorem~\ref{th:or-PnpK1}.

Assume now that $\OR{C_n}$ is not a directed cycle and let $\pi$ be a packing 3-coloring of $\OR{C_n}$.
This coloring can be extended to a packing 3-coloring of $\OR{G}$
except if there exists three consecutive vertices $x_{i-1}x_ix_{i+1}$ (indices are taken modulo $n$) such that
(i) $x_i$ is a source (resp. a sink) in $\OR{C_n}$ but not in $\OR{G}$, and (ii) $\pi(x_i)=1$ and $\{\pi(x_{i-1}),\pi(x_{i+1})\}=\{2,3\}$.
Indeed, if such a case occurs, none of the colors from the set $\{1,2,3\}$ can be assigned to a pendant out-neighbor (resp. in-neighbor) of~$x_i$.
Otherwise, the packing 3-coloring of $\OR{C_n}$ can be extended to a packing 3-coloring of $\OR{G}$ by (i) assigning color 1
to all pendant neighbors of vertices colored 2 or 3, (ii) assigning the color $\pi(x_{i-1})$ to every pendant out-neighbor (resp. in-neighbor)
of a source (resp. a sink) vertex $x_i$ of $\OR{G}$ and the color $5-\pi(x_{i-1})$ to its in-neighbors (resp. out-neighbors),
and (iii) assigning the color $\pi(x_{i-1})$ to every pendant in-neighbor (resp. out-neighbor)
of a vertex $x_i$ which is neither a source nor a sink in $\OR{G}$, and the color $\pi(x_{i+1})$ to its out-neighbors (resp. in-neighbors),
whenever $x_{i-1}x_ix_{i+1}$ (resp. $x_{i+1}x_ix_{i-1}$) is a directed path.

We  thus need to determine in which cases the orientation $\OR{C_n}$ of $C_n$ can be colored
in such a way that such a situation does not occur. Such colorings will be called {\em good} packing colorings.

For any subset $X$ of $V(C_n)$, we
denote by $S(X)$ the subset of $X$ containing all the vertices that are either a source or
a sink in $\OR{C_n}$, and by $S^*(X)$ the subset of $S(X)$ containing all the vertices that are neither a source nor
a sink in $\OR{G}$. Hence, $S^*(V(C_n))$ is precisely the set of vertices we must care about.
\ESOK{Obviously, if $S^*(V(C_n))$ is empty, every packing 3-coloring of $\OR{C_n}$ is good.
We thus assume in the rest of the proof that $S^*(V(C_n))$ is not empty.}
Note also that $|S(V(C_n))|$ is even for every orientation $\OR{C_n}$ of $C_n$.

In the following, we will construct good packing 3-colorings, when this is possible, using SCP with an adequate set $S$ either on
the whole cycle $\OR{C_n}$ or on part of it.


\medskip

We consider four cases, according to the value of $n$ mod~4:

\begin{itemize}

\item
{\bf Case~1:} $n\equiv 0\pmod 4$.\\
Consider first the case $n=4$. The only possible packing 3-coloring of any orientation $\OR{C_4}$
of $C_n$ with $\pcn(\OR{C_4})=3$
is 1213. It is then easy to check that the only orientation $\OR{C_4}$ of $C_4$ for which we cannot
produce a good packing 3-coloring is the one given in Figure~\ref{fig:subgraphsChiRho4}.
In the following, we can thus assume $n\ge 8$.

 Since $n$ is even, $C_n$ is bipartite. Let $(A,B)$
denote the bipartition of $V(C_n)$. 
If $|S^*(A)|$ is even or $|S^*(B)|$ is even, a good coloring can be
obtained by means of SCP. Suppose without loss of generality that $A=\{x_0,x_2,\dots,x_{n-2}\}$
and $|S^*(A)|$ is even. Consider the coloring $\pi$ produced by SCP, starting at $x_0$, with $(c,c')=(1,2)$ and $S=S^*(A)$.
Since $n\equiv 0\pmod 4$ and $|S^*(A)|$ is even, by Proposition~\ref{prop:SCP}, $\pi$ is 
 a good packing 3-coloring of $\OR{C_n}$.

If both $|S^*(A)|$ and $|S^*(B)|$ are odd, but $S(A)\setminus S^*(A)\neq\emptyset$
or $S(B)\setminus S^*(B)\neq\emptyset$, we can proceed in a similar way by using, without loss of generality, the set
$S'(A)=S^*(A)\cup\{x_{2j}\}$, for some vertex $x_{2j}\in S(A)\setminus S^*(A)$, instead of the set $S^*(A)$ in SCP
 since $|S'(A)|$ is even.

Finally, suppose that both $|S^*(A)|$ and $|S^*(B)|$ are odd, $S(A)=S^*(A)$
and $S(B)=S^*(B)$, that is, every source or sink in $\OR{C_n}$ is neither a source nor a sink in $\OR{G}$.
We consider two cases:

\begin{itemize}
\item $|S^*(A)|=|S^*(B)|=1$.\\
Without loss of generality, we may assume that $x_0$ is a source and $x_i$, for some odd $i$, $1\le i\le n-1$, is a sink.
Hence, $x_0\dots x_i$ and $x_{n-1}\dots x_i$ are both directed paths of odd length in $\OR{C_n}$.
Suppose first that $i=1$, that is, $x_0$ is a source and $x_1$ is a sink. A good packing 3-coloring of $\OR{C_n}$ is then given by the following pattern
 (the colors of $x_0$ and $x_i=x_1$ are dotted):
$$[\dot{1}\dot{2}\  3121\  \dots\ 3121\  32].$$
Similarly, if $i\equiv 1\pmod 4$, a good packing 3-coloring of $\OR{C_n}$ is then given by:
$$[\dot{1}\ 2131\  \dots\  2131\ \dot{2}\ 3121\  \dots\ 3121\  32].$$
Now, if $i\equiv 3\pmod 4$, $i\ge 7$, a good packing 3-coloring of $\OR{C_n}$ is given by:
$$[\dot{1}23\ 1213\ \dots\ 1213\ \dot{2}\ 1312\ \dots\ 1312].$$

The remaining case is $i=3$, which corresponds to condition (2.3) of the theorem. 
We will prove that in that case $\OR{C_n}$ does not admit any good packing 3-coloring, which implies $\pcn(\OR{C_n})=4$.
Note first that the directed path $\OR{P}=x_0x_{n-1}\dots x_3$ has length $n-3\equiv 1\pmod 4$. Let us consider the possible
 packing 3-colorings of $\OR{P}$. Clearly, the pattern 123 can only be used on the left end of $\OR{P}$, while the pattern
321 can only be used on the right end of $\OR{P}$. Moreover, the only circular good pattern is $[1213]$. Therefore, up to mirror symmetry
(reversing the orientation of every arc of $\OR{C_n}$ gives the same oriented graph),
there are six possible packing 3-colorings of $\OR{P}$, given by the following patterns:
$$1213\ \dots\ 1213\ 12,$$
$$1213\ \dots\ 1213\ 21,$$
$$123\ 1213\ \dots\ 1213\ 121,$$
$$2131\ \dots\ 2131\ 21,$$
$$3121\ \dots\ 3121\ 31,$$
$$3121\ \dots\ 3121\ 32.$$
It is then not difficult to check that none of these colorings can be extended to a good packing 3-coloring of $\OR{C_n}$,
as shown by the following diagrams (the colors of $x_0$ and $x_3$ are dotted):
$$2\ \longleftarrow\ \dot{1}\ \longrightarrow\ ?\ \longrightarrow\ ?\ \longrightarrow\ \dot{2}\ \longleftarrow\ 1$$
$$2\ \longleftarrow\ \dot{1}\ \longrightarrow\ ?\ \longrightarrow\ ?\ \longrightarrow\ \dot{1}\ \longleftarrow\ 2$$
$$2\ \longleftarrow\ \dot{1}\ \longrightarrow\ ?\ \longrightarrow\ ?\ \longrightarrow\ \dot{1}\ \longleftarrow\ 2$$
$$1\ \longleftarrow\ \dot{2}\ \longrightarrow\ ?\ \longrightarrow\ ?\ \longrightarrow\ \dot{1}\ \longleftarrow\ 2$$
$$1\ \longleftarrow\ \dot{3}\ \longrightarrow\ ?\ \longrightarrow\ ?\ \longrightarrow\ \dot{1}\ \longleftarrow\ 3$$
$$1\ \longleftarrow\ \dot{3}\ \longrightarrow\ ?\ \longrightarrow\ ?\ \longrightarrow\ \dot{2}\ \longleftarrow\ 3$$

\item $|S^*(A)|\ge 3$ or $|S^*(B)|\ge 3$.\\
Suppose $|S^*(A)|\ge 3$, without loss of generality.
Since $n\equiv 0\pmod 4$ and both $|S^*(A)|$ and $|S^*(B)|$ are odd, by Proposition~\ref{prop:SCP}, applying SCP starting at $x_0$
leads in all cases to a ``bad'' coloring, that assigns to $x_{n-1}x_0x_1$ either the pattern 213 or 312 if $x_0\in S^*(A)$,
or the pattern 212 or 313 otherwise (in that case, $x_{n-1}x_0x_1$ is a directed path, in either direction).
We thus need to ``correct'' this bad coloring, which can be done by 
replacing a sequence $1\alpha\dots\beta 1$ of the coloring produced by SCP by $1\alpha\dots\beta' 1$ with $\beta'=5-\beta$.

\medskip

We consider three subcases.

\begin{enumerate}

\item {\em There exist $a\in S^*(A)$ and $b\in S^*(B)$ with $d_{\OR{C_n}}(a,b)=1$}.\\
We may assume without loss of generality that $a=x_i$ is a source and $b=x_{i+1}$ is a sink.
Hence, we have the following configuration (--- stands for an arc in either direction):
$$\mbox{---}\ \LL a \RR b \LL \mbox{---}\  \mbox{---}$$
Consider the following coloring of this configuration (the colors of $a$ and $b$ are dotted):
$$1\ \mbox{---}\ 3 \LL \dot{2} \RR \dot{1} \LL 2\ \mbox{---}\  3\ \mbox{---}\ 1$$
If the vertex to the right of $b$ is not a source, 
the remaining part of the cycle is not empty (since $n\equiv 0\pmod 4$) and
this coloring can be extended to a good packing 3-coloring of $\OR{C_n}$ 
by means of SCP. To see that, observe that
SCP would have produced the following bad coloring
on the same configuration (the bad color which implies our claim, since our coloring
has modified this color, appears in bold):
$$1\ \mbox{---}\ 3 \LL \dot{1} \RR \dot{3} \LL 1\ \LL  \mathbf{2}\ \mbox{---}\ 1$$
We finally claim that we can always find some $i$ such that $x_i\in S^*(A)$ (resp. $x_i\in S^*(B)$),
$x_{i+1}\in S^*(B)$ (resp. $x_{i+1}\in S^*(A)$) and $x_{i+2}\notin S^*(A)$ (resp. $x_{i+2}\notin S^*(B)$).
This simply follows from the fact that if no such $i$ exists, then the orientation $\OR{C_n}$ of
$C_n$ is alternating, which implies $\pcn(\OR{C_n})=2$, contrary to our assumption.

\item {\em There exist $a\in S^*(A)$ and $b\in S^*(B)$ with $d_{\OR{C_n}}(a,b)\equiv 1\pmod 4$, $d_{\OR{C_n}}(a,b)\ge 5$, and Subcase~1 does not occur}.\\
Again, we assume without loss of generality that $a=x_i$ is a source and $b=x_j$ is a sink. Since subcase 1 does not
occur, we necessarily have the following configuration:
$$\LL\LL a\RR\RR \dots\ \RR\RR\RR b\LL\LL$$
We then color this configuration as follows (the pattern 2131 is repeated as many times as necessary):
$$1\ \mbox{---}\ 3 \LL 2 \LL \dot{1}\RR (2131)^* \RR \dot{2}\LL 3 \LL 1$$
As in the previous subcase, the remaining part of the cycle is not empty.
Hence, this coloring can be extended to a good packing 3-coloring of $\OR{C_n}$ 
by means of SCP, since SCP would have produced the following bad coloring
on the same configuration:
$$1\ \mbox{---}\ 3 \LL 1 \LL \dot{2}\RR (1312)^* \RR \dot{1}\LL \mathbf{2} \LL 1$$

\item {\em There exist $a\in S^*(A)$ and $b\in S^*(B)$ with $d_{\OR{C_n}}(a,b)\equiv 3\pmod 4$, $d_{\OR{C_n}}(a,b)\ge 7$, and Subcases~1 and~2 do not occur}.\\
This subcase can be solved similarly as the previous one. We have the following configuration:
$$\LL a\RR\RR\RR \dots\ \RR\RR\RR\RR b\LL$$
for which we use the following coloring:
$$1 \LL \dot{2}\RR 3 \RR (1213)^* \RR 2 \RR \dot{1} \LL 2$$
Again, the remaining part of the cycle is not empty and this coloring can be extended to a 
good packing 3-coloring of $\OR{C_n}$ by means of SCP, since
SCP  would have produced the following bad coloring on the configuration:
$$1 \LL \dot{2}\RR 1 \RR (3121)^* \RR 3 \RR \dot{1} \LL \mathbf{3}$$

\item {\em None of the previous cases occurs}.\\
If none of the previous cases occurs, then the vertices of $S^*(A)$ and $S^*(B)$ necessarily
alternate on $\OR{C_n}$ and the weak directed distance between any two consecutive such vertices
equals~3. Hence, $\OR{C_n}$ is a sequence of directed paths of length~3 in opposite directions.
Since $|S^*(A)|=|S(A)|$ is odd, the length of $C_n$ equals $6k$ for some odd~$k$, which contradicts
the assumption $n\equiv 0\pmod 4$. 
Therefore,  this last subcase cannot occur.

\end{enumerate}

\end{itemize}

\item
{\bf Case~2:} $n\equiv 2\pmod 4$.\\
 In this case $C_n$ is again bipartite and, using the same procedure as in Case~1, a
good packing 3-coloring of $\OR{C_n}$ can be produced whenever (i) $|S^*(A)|$ or $|S^*(B)|$ is odd, or
(ii) both $|S^*(A)|$ and $|S^*(B)|$ are even, but $S(A)\setminus S^*(A)\neq\emptyset$
or $S(B)\setminus S^*(B)\neq\emptyset$, where $(A,B)$ denotes the bipartition of $V(C_n)$.

Suppose now that both $|S^*(A)|$ and $|S^*(B)|$ are even (they cannot be both equal to~0), $S(A)=S^*(A)$
and $S(B)=S^*(B)$. In that case, SCP produces a bad coloring of $\OR{C_n}$ and this
coloring can be ``corrected'' in exactly the same way as in Case~1 since, for doing that,
we only need $n$ to be even.

\item
{\bf Case~3:} {\em $n$ is odd}.\\
Consider the set $S=S(V(C_n))$, that is the set of vertices that are either a source or a sink in $\OR{C_n}$.
Without loss of generality, suppose that $x_0$ is a source and consider the coloring $\pi$
produced by SCP on the path $x_0x_1\dots x_{n-1}$, starting at $x_0$, with $(c,c')=(2,1)$ and $S$. If $\pi(x_{n-1})=3$, $\pi$ is a
packing 3-coloring of $\OR{G}$, of the form $21\dots 13$, and we are done.

If $\pi(x_{n-1})=2$ ($\pi$ is not a packing coloring of $\OR{G}$), consider the coloring $\pi'$
produced by SCP on the path $x_1x_2\dots x_{n-1}x_0$, starting at $x_1$, with $(c,c')=(3,1)$ and $S$.
Let now $X$ denote the set of sources or sinks which are assigned color 1 by $\pi$,
and $X'$ the set of sources or sinks which are assigned color 1 by $\pi'$. We clearly have $X\cap X'=\emptyset$
and $X\cup X'=S\setminus\{x_0\}$ (since $x_0$ is a source, $\pi(x_0)\neq 1$ and $\pi'(x_0)\neq 1$).
Therefore, since $|S|$ is even,
we get that $|X[$ and $|X'|$ do not have the same parity. Hence, since $\pi(x_0)=2$ and $\pi(x_{n-1})=2$,
starting with $\pi'(x_1)=3$ necessarily gives $\pi'(x_0)=2$.
This proves that $\pi'$ is a good packing 3-coloring of $\OR{G}$,
of the form $231\dots 1$.

\end{itemize}

This concludes the proof.
\end{proof}

\section{Discussion}

In this paper, we have determined the packing chromatic number of coronae and generalized
coronae of paths and cycles. We also extended to digraphs the notion of packing coloring
and determined the packing chromatic number of orientations of such graphs.

In particular, we have proved that every orientation of a generalized corona of a path
admits a packing 3-coloring.
Using a similar proof, it is not difficult to extend this result to the more general case of
oriented trees (we can inductively construct a packing coloring satisfying the property (P)
such that vertices with color 1 correspond to one part of the bipartition of the tree).
Hence, we also have:

\begin{theorem}
Let $T$ be a tree. For any orientation $\OR{T}$ of $T$, $\pcn(\OR{T})\le 3$.
\label{th:or-tree}
\end{theorem}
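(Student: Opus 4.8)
The plan is to mimic the inductive construction of Theorem~\ref{th:or-PnpK1}, which already produced a packing 3-coloring of any orientation of a generalized corona of a path satisfying property (P): every vertex colored~1 has all in-neighbors colored with a single color $\alpha\in\{2,3\}$ and all out-neighbors colored $5-\alpha$. The key observation is that (P) depends only on the \emph{local} orientation around a vertex colored~1, so it is the right invariant to maintain during a tree traversal. First I would fix a root $r$ of $T$, let $(A,B)$ be the bipartition of $V(T)$ with $r\in A$, and declare that every vertex of $A$ will receive color~1 while every vertex of $B$ will receive a color from $\{2,3\}$. Since $T$ is bipartite and adjacent vertices get distinct colors, and since color~2 (resp.\ color~3) is only ever used on vertices of $B$, any two vertices sharing color~2 or~3 lie in the same part and hence are at even distance $\ge 2 > 1$ in $T$; so the coloring is automatically a valid packing 3-coloring \emph{provided} it is a proper coloring, which is where (P) comes in.

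Next I would do a rooted BFS/DFS from $r$, coloring vertices level by level. The root $r\in A$ gets color~1. For each vertex $v\in B$ whose parent $u\in A$ has already been colored~1: look at the arc between $u$ and $v$ in $\OR{T}$; if it is oriented $u\to v$, set $\pi(v):=\beta$, and if it is oriented $v\to u$, set $\pi(v):=5-\beta$, where $\beta\in\{2,3\}$ is chosen once and for all — actually, for each such $u$ we need a \emph{consistent} $\alpha_u$, so more carefully: when we first reach a vertex $u\in A$ from its parent $w\in B$ (with $w$ already colored), the arc $uw$ forces the relationship between $\pi(u)=1$ and $\pi(w)$ exactly as in (P); this determines $\alpha_u$ (the common color of in-neighbors of $u$). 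Then every other neighbor $v$ of $u$ in $B$ is colored $\alpha_u$ if the arc is $v\to u$ and $5-\alpha_u$ if the arc is $u\to v$. The root $r$ has no parent, so we may pick $\alpha_r\in\{2,3\}$ arbitrarily. Symmetrically, when we reach a vertex $v\in B$ from its parent $u\in A$, the color $\pi(v)$ is already determined by the rule above and by $\alpha_u$; then the children of $v$ in $A$ are all colored~1, and their $\alpha$-values are determined by the arcs joining them to $v$. This top-down pass assigns a color to every vertex, and by construction property (P) holds at every vertex of $A$ (equivalently at every vertex colored~1).

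It remains to argue that $\pi$ is proper. Every edge of $T$ joins a vertex of $A$ (colored~1) to a vertex of $B$ (colored~2 or~3), so no edge is monochromatic; properness is immediate from the bipartition. Hence $\pi$ is a packing 3-coloring and $\pcn(\OR{T})\le 3$. As noted in the discussion preceding the statement, one should also remark $\pcn(\OR{T})\ge 2$ when $T$ has at least one edge (a packing coloring is proper), and $\pcn(\OR{T})=1$ exactly when $T=K_1$; but the theorem only claims the upper bound, so this is a side remark.

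The main obstacle — really the only subtlety — is making the ``choice of $\alpha_u$ for each $u\in A$'' rigorous and checking there is no conflict: a vertex $u\in A$ with degree $\ge 2$ is reached through exactly one parent edge, which pins down $\alpha_u$, and all its other incident edges go to as-yet-uncolored children, so nothing over-constrains $u$; conversely a vertex $v\in B$ of degree $\ge 2$ is reached through one parent $u$, which pins down $\pi(v)$ via $\alpha_u$, and its other edges go to children in $A$, which are then colored~1 and get their own $\alpha$ freely. So the traversal never has to satisfy two incompatible demands, because $T$ is a tree (no cycles that would force a parity/consistency condition). This is exactly the place where the argument would fail for $C_n\odot pK_1$ and is handled there by the much longer case analysis of Theorem~\ref{th:or-CnpK1}; for trees it collapses to the single line ``a tree has no cycles, so every vertex except the root is reached exactly once.''
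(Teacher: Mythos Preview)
Your construction is correct and matches the paper's one-line sketch exactly (inductively build a coloring satisfying property (P), with color~1 on one part of the bipartition). However, the \emph{verification} has a genuine gap. You write that two vertices sharing color~2 or~3 ``are at even distance $\ge 2 > 1$ in $T$; so the coloring is automatically a valid packing 3-coloring \emph{provided} it is a proper coloring, which is where (P) comes in.'' This is the wrong implication on both counts. First, the packing condition for color~$i$ requires weak directed distance strictly greater than~$i$, so being at undirected distance $\ge 2$ says nothing useful for colors~2 and~3. Second, properness is \emph{not} where (P) comes in---properness is trivial from the bipartition, as you yourself observe two paragraphs later.

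The actual role of (P) is to handle the only dangerous case: two vertices $u,v\in B$ with the same color and at undirected distance exactly~2, sharing a common neighbor $w\in A$ colored~1. In a tree the unique $u$--$v$ path is $uwv$, so $d_{\OR{T}}(u,v)=2$ if and only if $uwv$ is a directed path. Property (P) says all in-neighbors of $w$ carry one color in $\{2,3\}$ and all out-neighbors the other; hence $\pi(u)=\pi(v)$ forces $u$ and $v$ to be both in-neighbors or both out-neighbors of $w$, so $uwv$ is \emph{not} a directed path and $d_{\OR{T}}(u,v)=\infty>3$. For undirected distance $\ge 4$ the weak directed distance is automatically $\ge 4>3$. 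Insert this argument in place of your ``$\ge 2>1$'' sentence and the proof is complete.
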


Since every caterpillar is a tree, we get that every oriented caterpillar has packing chromatic number
at most 3. However, we leave as an open question the characterization of undirected caterpillars with
packing chromatic number at most~4, 5 and~6 (by Theorem~\ref{th:sloper} we know that every  caterpillar
has packing chromatic number at most 7 and characterizing caterpillars with 
packing chromatic number at most~2 or~3 is easy).

\vskip 1cm

\noindent
{\bf Acknowledgment.} Most of this work has been done while the first author
was visiting LaBRI, thanks to a seven-months PROFAS-B+ grant cofunded by 
the Algerian and French governments, while part of this work has been done while
the second author was visiting LaBRI.


\end{document}